\newtheorem{theorem}{Theorem}
\newtheorem*{metaPostulate}{Meta-postulate}
\newtheorem{postulate}{Postulate}
\newtheorem{model}{Model}
\newtheorem{theorem*}{Theorem}
\newtheorem{corollary}[theorem]{Corollary}
\newtheorem{lemma}[theorem]{Lemma}
\newtheorem{observation}[theorem]{Observation}
\newtheorem{proposition}[theorem]{Proposition}
\newtheorem{definition}[theorem]{Definition}
\newtheorem{claim}[theorem]{Claim}
\newtheorem{fact}[theorem]{Fact}
\theoremstyle{definition}
\newtheorem{problem}{Problem}
\newtheorem{remark}[theorem]{Remark}
\renewcommand{\vec}[1]{\mathbf{#1}}
\newcommand{\GL}{\mathrm{GL}}
\newcommand{\Iso}{\mathrm{Iso}}
\newcommand{\RIso}{\mathrm{RIso}}
\newcommand{\IndIso}{\mathrm{IndIso}}
\newcommand{\Aut}{\mathrm{Aut}}
\newcommand{\RAut}{\mathrm{RAut}}
\newcommand{\F}{\mathbb{F}}
\newcommand{\Z}{\mathbb{Z}}
\newcommand{\C}{\mathbb{C}}
\newcommand{\N}{\mathbb{N}}
\newcommand{\trans}[1]{{#1}^{\mathrm{t}}}
\newcommand{\rk}{\mathrm{rk}}
\newcommand{\im}{\mathrm{im}}
\newcommand{\poly}{\mathrm{poly}}
\newcommand{\cB}{\mathcal{B}}
\newcommand{\cC}{\mathcal{C}}
\newcommand{\cG}{\mathcal{G}}
\newcommand{\cH}{\mathcal{H}}
\newcommand{\bB}{\mathbf{B}}
\newcommand{\bC}{\mathbf{C}}
\newcommand{\bG}{\mathbf{G}}
\newcommand{\bH}{\mathbf{H}}
\newcommand{\gbinom}[3]{{\genfrac{[}{]}{0pt}{}{#1}{#2}}_{#3}}
\newcommand{\Adj}{\mathrm{Adj}}
\newcommand{\LinER}{\algprobm{LinER}}
\newcommand{\BipLinER}{\algprobm{BipLinER}}
\newcommand{\ER}{\algprobm{ER}}
\newcommand{\NaiT}{\algprobm{NaiT}}
\newcommand{\NaiS}{\algprobm{NaiS}}
\newcommand{\BipNaiT}{\algprobm{BipNaiT}}
\newcommand{\BipNaiS}{\algprobm{BipNaiS}}
\newcommand{\algprobm}[1]{\mbox{\sc #1}\xspace}
\newcommand{\AltMatSpIso}{\algprobm{AltMatSpIso}}
\newcommand{\GrI}{\algprobm{GraphIso}}
\newcommand{\GpI}{\algprobm{GroupIso}}
\renewcommand{\P}{{\rm P}\xspace}
\newcommand{\NP}{{\rm NP}\xspace}
\newcommand{\coAM}{{\rm coAM}\xspace}
\newcommand{\zerovec}{\mathbf{0}}
\title{
Linear algebraic analogues of \\
the graph isomorphism problem
and the Erd\H{o}s-R\'enyi model
}
\author{
Yinan Li \thanks{Centre for Quantum Software and Information, 
 University of Technology Sydney, Australia  ({\tt liyinan9252@gmail.com}).}
\and 
Youming Qiao \thanks{Centre for Quantum Software and Information, 
 University of Technology Sydney, Australia  ({\tt jimmyqiao86@gmail.com}).}
}
\date{\today}
\begin{document}

\pagenumbering{gobble}  

\maketitle

\begin{abstract}
A classical difficult isomorphism testing problem is to test isomorphism of 
$p$-groups of class $2$ and exponent $p$ in time polynomial in the group order. It 
is known that this problem can be reduced to solving the alternating matrix 
space isometry problem over a finite field 
in time polynomial in the underlying 
vector space size. We propose a venue of attack for the 
latter problem by viewing 
it as a linear algebraic analogue of the graph isomorphism problem. This viewpoint 
leads us to explore the possibility of transferring techniques for graph 
isomorphism to this long-believed bottleneck case of group isomorphism. 

In 1970's, Babai, Erd\H{o}s, and Selkow presented the first average-case 
efficient graph isomorphism testing algorithm (SIAM J Computing, 1980). Inspired 
by that algorithm, we devise an average-case efficient algorithm for the 
alternating matrix space isometry problem over a key range of parameters, in a 
random model of alternating matrix spaces in vein of the Erd\H{o}s-R\'enyi model 
of random graphs. For this, we develop a linear algebraic analogue of the 
classical individualisation 
technique, a technique belonging to a set of combinatorial techniques that has 
been critical for the progress on the worst-case time complexity for graph 
isomorphism, but 
was
missing in the group isomorphism context. As a consequence of the main algorithm, 
we establish a weaker linear algebraic analogue 
of Erd\H{o}s and R\'enyi's classical result that most graphs have the trivial 
automorphism group.
We also show that 
Luks' dynamic programming technique for graph isomorphism (STOC 1999) can be 
adapted to slightly improve the worst-case time complexity of the alternating 
matrix space isometry problem in a certain range of parameters.

Most notable progress on the worst-case time complexity of graph isomorphism, 
including 
Babai's recent breakthrough (STOC 2016) and Babai and Luks' previous record (STOC 
1983), has 
relied on both group theoretic and 
combinatorial techniques. By developing a linear algebraic analogue of the 
individualisation technique and demonstrating its usefulness in the average-case 
setting, the main result opens up the possibility of adapting that strategy for 
graph 
isomorphism to this hard instance of group isomorphism. 
The linear algebraic Erd\H{o}s-R\'enyi 
model is of independent interest and may deserve further study. In particular, we 
indicate a connection with enumerating $p$-groups of class $2$ and exponent $p$.
\end{abstract}

\newpage
\pagenumbering{arabic}     

\section{Introduction}\label{sec:intro}


\subsection{Problems, postulates, and models}

Let $\F_q$ be the finite field with $q$ elements. An $n\times n$ matrix $A$ over 
$\F_q$ is \emph{alternating}, if for every $u\in \F_q^n$, 
$\trans{u}Au=0$.
$\Lambda(n, q)$ denotes the 
linear 
space of $n\times n$ alternating matrices over $\F_q$, and a dimension-$m$ 
subspace of $\Lambda(n, q)$ is called an $m$-alternating (matrix) space.
$\GL(n, q)$ denotes the 
general linear group of degree $n$ over $\F_q$. 
We study the following problem.

\begin{problem}[Alternating matrix space isometry problem, 
$\AltMatSpIso$]\label{prob:main}
Given the 
linear bases of two $m$-alternating spaces $\cG, \cH$ in $\Lambda(n, q)$, decide 
whether there exists $A\in \GL(n, q)$, such that $\trans{A}\cG A:=\{ \trans{A}BA : 
B\in 
\cG\}$ is equal to $\cH$ as subspaces. 
\end{problem}

If such an $A$ exists, we say that $\cG$ and $\cH$ are \emph{isometric}. As will 
be 
explained in Section~\ref{subsec:bg}, $\AltMatSpIso$ has been studied, mostly 
under other names, for decades. It lies at the heart 
of the group isomorphism problem (\GpI), and has an intimate relationship with the 
celebrated 
graph isomorphism problem (\GrI). As a problem in $\NP\cap \coAM$, its 
worst-case time complexity has barely been 
improved over the brute-force algorithm. In fact, a $q^{O(n+m)}$-time algorithm 
is already regarded as very difficult. 

Let us recall one formulation of \GrI. For $n\in 
\N$, let $[n]=\{1, 2, \dots, n\}$, and $S_n$ 
denotes the symmetric group on $[n]$. A simple undirected graph is just a 
subset of $\Lambda_n:=\{\{i, j\} : i, j\in [n], i\neq j\}$. A permutation 
$\sigma\in S_n$ induces a natural action on $\Lambda_n$. The following formulation 
of \GrI as an instance of the setwise transporter problem is well-known 
\cite{Luk82}. 
\begin{problem}[Graph isomorphism problem, \GrI]\label{prob:gi}
Given two subsets $G, H$ of $\Lambda_n$, decide whether there 
exists $\sigma\in S_n$, such that $G^\sigma:=\{\{i^\sigma, j^\sigma\} : \{i, j\} 
\in G\}$ is equal to $H$ as sets.
\end{problem}
The formulations of \AltMatSpIso and \GrI as in Problem~\ref{prob:main} and 
Problem~\ref{prob:gi} lead us to the following postulate.
\begin{postulate}\label{post:lin_gi}
\AltMatSpIso can be viewed and studied as a linear algebraic analogue of \GrI.
\end{postulate}

Postulate~\ref{post:lin_gi} originates from the following meta-postulate. 
\begin{metaPostulate}
Alternating matrix spaces can be viewed and studied as a 
linear algebraic analogue of graphs. 
\end{metaPostulate}
This meta-postulate will be studied further in \cite{Qia17}. 
As a related note, recent progress on the non-commutative rank problem 
suggests the usefulness of viewing linear spaces of matrices as a linear algebraic 
analogue of bipartite graphs \cite{GGOW16,IQS16,IQS17}. 

From the meta-postulate, we formulate a model of random 
alternating matrix spaces over $\F_q$. Let $\gbinom{\ }{\ }{q}$ be the Gaussian 
binomial 
coefficient with base $q$. 
\begin{model}[The linear algebraic Erd\H{o}s-R\'enyi model]\label{model:linER}
The linear algebraic Erd\H{o}s-R\'enyi model, $\LinER(n, m, q)$, is the uniform
probability distribution 
over the set of dimension-$m$ subspaces of $\Lambda(n, 
q)$, that is, each subspace is endowed with probability 
$1/\gbinom{\binom{n}{2}}{m}{q}$.
\end{model}
Model~\ref{model:linER} clearly mimics the usual Erd\H{o}s-R\'enyi model 
\cite{ER59,ER63,Bol01}.
\begin{model}[Erd\H{o}s-R\'enyi model]\label{model:er}
The Erd\H{o}s-R\'enyi model $\ER(n, m)$ is the uniform probability distribution 
over the set of size-$m$ subsets of $\Lambda_n$, that is, each subset is 
endowed with probability $1/\binom{\binom{n}{2}}{m}$.
\end{model}
We then pose the following postulate.
\begin{postulate}\label{post:linER}
$\LinER(n, m, q)$ can be viewed and studied as a linear algebraic analogue of 
$\ER(n, m)$. 
\end{postulate}

\subsection{Background of the alternating matrix space isometry 
problem}\label{subsec:bg}




While the name \AltMatSpIso may be unfamiliar to some readers, this problem has 
been studied for decades as an instance -- in fact, the long-believed 
bottleneck case -- of the group isomorphism problem. This 
problem also has an intricate relationship with the graph isomorphism problem. We 
first 
review these connections below, and then examine the current status of this 
problem.

\paragraph{Relation with the group isomorphism problem.} We first 
introduce the 
group isomorphism problem (\GpI) and mention a long-believed bottleneck instance 
of this 
problem. It turns out that \AltMatSpIso is almost equivalent to this instance. 

\GpI  asks to decide whether two finite groups of order $n$ are isomorphic or not. 
The difficulty of this problem depends crucially on how we represent the groups in 
the algorithms. If our goal is to obtain an algorithm running in time $\poly(n)$, 
then we may assume that we have at our disposal the Cayley 
(multiplication) table of the group, as we can recover the Cayley table from most 
reasonable 
models for computing with finite 
groups. 
Therefore, in the main text we restrict our discussion to this very redundant 
model, 
which is 
meaningful mainly because we do not know a $\poly(n)$-time or even an $n^{o(\log 
n)}$-time algorithm \cite{Wil14} ($\log$ to the base $2$), despite that a simple 
$n^{\log n+O(1)}$-time 
algorithm has been known for decades \cite{FN70,Mil78}.
The past few years have witnessed a resurgence of activity on algorithms for this 
problem with worst-case analyses in terms of the group order; we refer the reader 
to \cite{GQ14} which 
contains a survey of these algorithms.

It is long believed that $p$-groups form 
the bottleneck case for \GpI. 
In fact, the decades-old quest for a polynomial-time algorithm has focused on 
class-$2$ $p$-groups, with little success. Even if we restrict further 
to consider $p$-groups of class $2$ and exponent $p$, the problem is still 
difficult. Recent works 
\cite{LW12,BW12,BMW15,IQ17} solve some nontrivial subclasses of this group class, 
and have lead to substantial improvement in practical algorithms. But the methods 
in those works seem not helpful enough to lead to any improvement for the 
worst-case time complexity of the general class. 

By a classical result of Baer \cite{Bae38}, testing isomorphism of $p$-groups of 
class $2$ and exponent 
$p$ in time polynomial in the group order reduces to solving \AltMatSpIso over 
$\F_p$ in time 
$p^{O(m+n)}$. On the other hand, there also is an inverse reduction for $p>2$. In 
fact, when such $p$-groups are given by generators in the 
permutation group quotient model \cite{KL90}, isomorphism testing reduces to 
solving 
\AltMatSpIso 
in time $\poly(n, m, \log p)$ \cite{BMW15}.  We will recall the reductions 
in Appendix~\ref{app:eq}. Because of these reductions and the current status of 
\GpI, we see that \AltMatSpIso lies at the heart of \GpI, and solving \AltMatSpIso 
in $q^{O(m+n)}$ is already very difficult. 

\paragraph{Relation with the graph isomorphism problem.} The 
celebrated graph 
isomorphism 
problem (\GrI) asks to decide whether two 
undirected simple graphs are isomorphic. The relation between \AltMatSpIso and 
\GrI is very delicate. Roughly speaking, the two time-complexity measures of 
\AltMatSpIso, $q^{O(n+m)}$ 
and $\poly(n, m, q)$, sandwiches 
\GrI in an interesting way. For one direction, solving \AltMatSpIso in time 
$q^{O(n+m)}$ can be reduced to solving 
\GrI for graphs of size $q^{O(n+m)}$, by first reducing to 
solving \GpI for groups of order $q^{O(n+m)}$ as above, and then to solving 
\GrI for graphs of size $q^{O(n+m)}$ by the reduction from \GpI to \GrI 
\cite{KST93}. 
Therefore, a polynomial-time algorithm for \GrI 
implies an algorithm for \AltMatSpIso in time $q^{O(n+m)}$. It is then reasonable
to examine whether the recent breakthrough of Babai \cite{Bab16,Bab17}, a 
quasipolynomial-time algorithm for \GrI, helps with reducing the time complexity 
of \AltMatSpIso. This seems unlikely. One indication is that the 
brute-force 
algorithm 
for \AltMatSpIso is already quasipolynomial with respect to $q^{O(n+m)}$. Another 
evidence 
is that Babai in \cite[arXiv version 2, Section 13.2]{Bab16} noted that his 
algorithm seemed not helpful to improve \GpI, and posed \GpI as one roadblock for 
putting \GrI in \P. Since \AltMatSpIso captures the long-believed bottleneck case 
for \GpI, the current results for \GrI are unlikely to improve the time 
complexity to $q^{O(n+m)}$. There is also an explanation from the technical 
viewpoint \cite{GR16}. Roughly speaking, the barrier in the group theoretic 
framework for \GrI is to deal with large alternating groups, as other composition 
factors like projective special linear groups can be handled by brute-force in 
quasipolynomial time, so for the purpose of a quasipolynomial-time algorithm these 
group are not a concern. On the other hand for \AltMatSpIso it is exactly the 
projective special linear groups that form a bottleneck. 
For the other 
direction, in a 
forthcoming work 
\cite{GQ17}, it is shown that solving \GrI in polynomial 
time reduces to solving \AltMatSpIso over $\F_q$ with $q=\poly(n)$ in time 
$\poly(n, m, 
q)$.  

\paragraph{Current status of \AltMatSpIso.}
It is not hard to show that solving \AltMatSpIso in time $\poly(n, m, \log q)$ is 
in 
$\NP\cap \coAM$, so it is unlikely to be $\NP$-complete. 
As to the worst-case time complexity, the brute-force 
algorithm for \AltMatSpIso runs in time 
$q^{n^2}\cdot \poly(m, n, \log q)$. 
Another analysed algorithm for \AltMatSpIso offers a running time of 
$q^{\frac{1}{4}(n+m)^2+O(n+m)}$ when $q=p$ is a prime, by first reducing to 
testing 
isomorphism of class-$2$ and 
exponent-$p$ $p$-groups of order $p^{n+m}$, and then applying Rosenbaum's 
$N^{\frac{1}{4}\log_p N+O(1)}$-time algorithm for $p$-groups of order 
$N$ \cite{Ros13a}. This is only better than the brute-force one when 
$m<n$.\footnote{As pointed out in \cite{BMW15}, there are numerous unanalysed 
algorithms \cite{OBr93,ELO02} which 
may lead to some improvement, but $q^{c n^2}\cdot \poly(n, m, \log q)$ for some 
constant $0<c<1$ is a reasonable over estimate of the best bound by today's 
method.}
It is somewhat embarrassing that for a problem in $\NP\cap \coAM$, 
we are only able to barely improve over the brute-force algorithm in a limited 
range of parameters.
In a very true sense, our current understanding of 
the worst-case time complexity of \AltMatSpIso is like the situation for \GrI in 
the 1970's. 

On the other hand practical algorithms for \AltMatSpIso have been implemented. As 
far as we know, current implemented algorithms for \AltMatSpIso can handle the 
case 
when $m+n\approx 20$ and $p\approx 13$, but absolutely not the case if $m+n\approx 
200$, though for $m+n\approx 200$ and say $p\approx 13$ the input can be stored 
in a 
few megabytes.\footnote{We thank James B. Wilson, who maintains a suite of 
algorithms for $p$-group isomorphism testing, for communicating his hands-on 
experience to us. We take the responsibility for any possible misunderstanding or 
not knowing of the performance of other implemented algorithms.}
For \GrI, the programs {\sc Nauty} and {\sc Traces} \cite{MP14} can test 
isomorphism of 
graphs stored in gigabytes in a reasonable amount of time. Therefore, 
unlike \GrI, \AltMatSpIso seems hard even in the practical sense. 

\paragraph{On the parameters.}
From the discussion above, we see that solving \AltMatSpIso with a worst-case time 
complexity $q^{O(n+m)}$ seems already a difficult target. From the meta-postulate, 
it is helpful to think 
of vectors in $\F_q^n$ as vertices, and matrices in an $m$-alternating space
as edges, so the $q^{O(n+m)}$ measure can be thought of as polynomial in the 
number of ``vertices'' 
and the number of ``edges.'' Here the parameter $m$ comes into the theme, because 
$q^m$, while no more than 
$q^{\binom{n}{2}}$, is not necessarily bounded by a polynomial in $q^n$.  
This is in contrast to \GrI, where the edge number is at most quadratic in the 
vertex number. 
In particular, when $m=\Omega(n^2)$, the brute-force algorithm which runs in 
$q^{n^2}\cdot \poly(m, n, \log q)$ is already in time $q^{O(n+m)}$. Furthermore, 
if we consider all $n\times n$ alternating matrix spaces (regardless of the 
dimension),  
most of them are of dimension $\Omega(n^2)$, so the 
brute-force algorithm already works in time $q^{O(n+m)}$ for most alternating 
matrix spaces. On the other hand, when $m$ is very small compared to $n$, say 
$m=O(1)$, we can enumerate all elements in $\GL(m, q)$ in time $q^{O(1)}$, 
and apply the isometry testing for alternating matrix \emph{tuples} from 
\cite{IQ17} which runs in randomized time $\poly(n, m, \log q)$. 
Therefore, the 
$q^{O(n+m)}$-time measure makes most sense when $m$ is comparable with $n$, in 
particular when $m=\Theta(n)$. This is why we study average-case 
algorithms in this regime of parameters (e.g. $\LinER(n, m, q)$ with 
$m=\Theta(n)$), 
while the average-case algorithm for \GrI in \cite{BES80} considers all graphs 
(e.g. each labelled graph is taken with probability $1/2^{\binom{n}{2}}$).

\subsection{Algorithmic results}

Postulates~\ref{post:lin_gi} and~\ref{post:linER} seem
hopeful at 
first sight by the formulations of \AltMatSpIso and \LinER. But realities in the 
combinatorial world and the linear algebraic world can be quite different, as just 
discussed in the last paragraph.  
So meaningful results cannot be obtained by adapting the results for 
graphs to alternating matrix spaces in a straightforward fashion. 
One purpose of this article is to provide evidence that, despite 
potential 
technical difficulties, certain ideas that have been developed for \GrI and \ER 
can be adapted to work with \AltMatSpIso and \LinER.

We will take a shortcut, by 
presenting one result that supports both postulates. In the graph setting, such a 
result is naturally an average-case efficient graph isomorphism testing algorithm 
with the average-case
 analysis done in the Erd\H{o}s-R\'enyi model. The first such algorithm was 
proposed by Babai, Erd\H{o}s and Selkow in 1970's \cite{BES80}, with follow-up 
improvements by Lipton \cite{Lip78}, Karp \cite{Kar79}, and Babai and Ku\v{c}era 
\cite{BK79}. 
Therefore we set to study average-case algorithms for \AltMatSpIso in the \LinER 
model. Inspired by the algorithm in \cite{BES80}, we show the following.
\begin{theorem}[Main result]\label{thm:main}
Suppose $m=cn$ for some constant $c$. There is an algorithm which, for almost but 
at most $1/q^{\Omega(n)}$ fraction of 
alternating matrix spaces $\cG$ in 
$\LinER(n, m, q)$, tests 
any alternating matrix space $\cH$ for isometry to $\cG$ in time $q^{O(n)}$. 
\end{theorem}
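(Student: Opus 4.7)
The plan is to adapt the Babai-Erd\H{o}s-Selkow individualisation-and-refinement strategy to alternating matrix spaces, replacing the combinatorial notion of neighbourhood with a linear algebraic one. For $v \in \F_q^n$, I view $\cG v := \{Bv : B \in \cG\} \subseteq v^\perp$ as the neighbourhood of $v$ in $\cG$, and more finely, once an ordered anchor tuple $\cU = (u_1, \ldots, u_k)$ is fixed, label $v$ by the tuple $(\phi_{v, u_i})_{i=1}^k \in (\cG^*)^k$, where $\phi_{v, u_i}(B) := \trans{v} B u_i$. This is the linear algebraic counterpart of the refined colour vector a vertex carries in the graph case; crucially, each coordinate carries $m$ field elements of information, so one round of ``refinement'' contains $\Theta(n)$ bits per anchor rather than one bit as in the graph setting.

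The algorithm proceeds in three phases, each running in $q^{O(n)}$ time. First, it enumerates all ordered $k$-tuples $\cU$ of vectors in $\F_q^n$, where $k = k(c) = O(1)$ is chosen depending on $c$, a total of $q^{kn} = q^{O(n)}$ candidates. Second, for each $\cU$ it computes the label of every $v \in \F_q^n$; if all $q^n$ labels are pairwise distinct, it extracts a canonical ordered basis of $\F_q^n$ (e.g.\ by taking the first $n$ linearly independent vectors in the lexicographic ordering induced by the labels), producing a canonical matrix representation of $\cG$. Third, it performs the same procedure on $\cH$ and declares the two spaces isometric if and only if the canonical forms agree; for the comparison to be basis-independent we take the lexicographically minimum successful canonical form across $\cU$, and the correctness direction from $\cG$ to $\cH$ follows from the fact that an isometry sends a successful anchor of $\cG$ to a successful anchor of $\cH$.

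The heart of the argument is the probabilistic claim that, for $\cG$ drawn from $\LinER(n, m, q)$ with $m = cn$ and $k$ a sufficiently large constant depending on $c$, some $\cU$ yields pairwise distinct labels with probability $1 - 1/q^{\Omega(n)}$. Fix $\cU$ with $u_i$'s linearly independent. For any nonzero $v' = v - w$ with $v' \notin \linspan(\cU)$, the functionals $f_i(B) := \trans{v'} B u_i$ on $\Lambda(n, q)$ are linearly independent (a linear combination vanishes iff $\sum c_i u_i \in \linspan(v')$, since for alternating $B$ we have $\trans{v'} B v' = 0$), so the event that $\cG$ lies in $\bigcap_i \ker f_i$, which would make $v$ and $w$ share a label, has probability at most $q^{-km + O(1)}$ via the standard Gaussian binomial estimate. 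A union bound over the $q^n - 1$ choices of $v'$ bounds the failure probability by $q^{n - km + O(1)}$, which is $1/q^{\Omega(n)}$ once $k > 1/c$; the $O(q^k)$ exceptional $v'$ inside $\linspan(\cU)$ are handled separately by individualising a basis of the anchor subspace.

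The main obstacle I expect is packaging the label information into a canonical form of $\cG$ that is simultaneously invariant under the $\GL(n, q)$-action and directly comparable with the analogous form of $\cH$. The label $\phi_{v, u_i}$ lives in $\cG^*$, which is intrinsic to $\cG$; extracting from it a matrix that depends only on the isometry class, rather than on auxiliary bases chosen to represent $\cG$ and $\cG^*$, requires anchoring the canonical form to the ordered basis of $\F_q^n$ produced in phase two, together with a robust tie-breaking rule when a small fraction of label collisions occur. A further point needing care is the alternating constraint $\trans{v} B v = 0$, which reduces the effective dimension of $\Lambda(n, q)$ by $n$ relative to the unrestricted tensor case and tightens the constants in the union bound; this shrinkage is mild and is absorbed by a modest enlargement of $k$.
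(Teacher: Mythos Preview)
Your probabilistic lemma---that for $k>1/c$ the labelling map $v\mapsto(\phi_{v,u_i})_i$ is injective with probability $1-1/q^{\Omega(n)}$---is correct, but the algorithm you build on it has a gap that you flag as ``the main obstacle'' and do not actually close. The labels live in $(\cG^*)^k$, and to impose a lexicographic order you must choose coordinates on $\cG^*$, i.e.\ an input basis $(G_1,\dots,G_m)$ of $\cG$. Replacing that basis by $\bG^Z$ for $Z\in\GL(m,q)$ multiplies every label vector by a block-diagonal $I_k\otimes Z$, which scrambles the lex order; consequently the ``canonical ordered basis'' $(v_1,\dots,v_n)$, and with it your canonical form, depends on the input presentation of $\cG$ and is not an isometry invariant. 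Concretely, if $A\in\Iso(\cG,\cH)$ and $\cU^H=A^{-1}\cU^G$, then the $\cH$-label of $w$ computed from the input basis $(H_j)$ equals $(I_k\otimes D)\cdot\phi_{Aw}$ for the unknown $D\in\GL(m,q)$ relating $(H_j)$ to $(\trans{A}G_jA)$, so the two canonical forms will not match. Eliminating this $\GL(m,q)$-ambiguity by enumeration costs $q^{m^2}=q^{\Theta(n^2)}$, and the bootstrap you suggest (``anchor the canonical form to the ordered basis produced in phase two'') is circular, since that basis already depends on the choice of coordinates on $\cG^*$. This is exactly where the linear-algebraic setting diverges from graphs: in Babai--Erd\H{o}s--Selkow the label of a vertex is a bit string indexed by the anchor alone, whereas here the label is a functional on $\cG$ and carries a hidden $\GL(m,q)$-torsor.

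The paper sidesteps canonical forms entirely. After an $r$-individualisation on each side it forms the $r$-tuples $\bB,\bC\in M((n-r)\times m,q)^r$---your label tensor, flipped---and observes that any individualisation-respecting isometry contributes an invertible element of the \emph{linear} space $\Adj(\bB,\bC)=\{(A,D):AB_i=C_iD\ \forall i\}$; the unknown $D$ is carried as a variable rather than quotiented out. The probabilistic input that makes this enumerable is not mere injectivity of the label map (that is only image-nondegeneracy of $\bB$, and e.g.\ $\bB=(I_{n-r},0)$ is image-nondegenerate with $|\Adj(\bB)|=q^{(n-r)^2}$) but the stronger \emph{stability} of $\bB$ (Definition~\ref{def:stable}), which via Proposition~\ref{prop:stable_size} forces $|\Adj(\bB)|\le q^{n-r}$. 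Establishing stability with high probability (Proposition~\ref{prop:key}) requires a dimension-by-dimension analysis substantially beyond your single union bound.
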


An important ingredient in Theorem~\ref{thm:main}, the utility of which should go 
beyond 
the average-case setting, is an adaptation of
the \emph{individualisation} technique for \GrI to \AltMatSpIso. 
We also realise a reformulation of the \emph{refinement} technique for \GrI as 
used in 
\cite{BES80} in the \AltMatSpIso setting. 
Individualisation and refinement are very influential 
combinatorial ideas for \GrI, have been crucial in the progress of 
the worst-case time complexity of \GrI, including Babai's 
recent breakthrough \cite{Bab16,Bab17}, but were missing in the \GpI context.
\begin{quote}{\it
The main contribution of this article to \AltMatSpIso is to initiate the use of 
the individualisation and refinement ideas for
\GrI in this problem.}
\end{quote}
Here, we note an interesting historical coincidence. Babai was the first to 
import the group theoretic idea to \GrI in 1979 \cite{Bab79}, by when the 
combinatorial techniques had been around for quite some time. On the 
other hand, we have an opposite situation for \AltMatSpIso: the relevant group 
theoretic tools have been the subject of intensive study for decades, 
while it is the combinatorial 
individualisation and refinement 
ideas that need to be imported. We do understand though, that there are valid 
reasons for people not having come to this before. For example, we would not have 
come to such ideas, if we restrict ourselves to solving \AltMatSpIso 
in time $\poly(n, m, \log q)$. In Section~\ref{subsec:discussion_algo}, we will 
reflect on the historical development on the worst-case complexity of \GrI, and 
discuss the prospect of getting a $q^{O(n^{2-\epsilon})}$-time algorithm for 
\AltMatSpIso. 



For an $m$-alternating 
space $\cG$ in $\Lambda(n, q)$, define the autometry
group of $\cG$, $\Aut(\cG)$ 
as $\{A\in \GL(n, q) : A^t\cG A=\cG\}$.
The proof of Theorem~\ref{thm:main} implies the following, which can be viewed as 
a weaker correspondence of the classical result that most graphs 
have trivial automorphism groups \cite{ER63}. 
\begin{corollary}\label{cor:main}
Suppose $m=cn$ for some constant $c$. All but $1/q^{\Omega(n)}$ fraction of 
alternating matrix spaces in $\LinER(n, m, 
q)$ have autometry groups of size $q^{O(n)}$. 
\end{corollary}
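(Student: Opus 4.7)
The plan is to read off the bound on $|\Aut(\cG)|$ directly from the internal structure of the individualisation-and-refinement algorithm underlying Theorem~\ref{thm:main}. As sketched in the algorithmic results paragraph, that algorithm selects a small linear-algebraic gadget in $\cG$ (an ordered tuple of vectors $\vec{v} = (v_1, \ldots, v_k)$, or a short flag) to individualise, and then runs a linear algebraic refinement. The ``good'' event, of probability at least $1 - 1/q^{\Omega(n)}$, should be precisely the event that for such a $\vec{v}$ the refinement fully canonicalises the ambient space; equivalently, the pointwise stabiliser of $\vec{v}$ inside $\Aut(\cG)$ is trivial.

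First I would extract from the proof of Theorem~\ref{thm:main} the precise individualisation target together with the statement that, on the good event, its pointwise stabiliser in $\Aut(\cG)$ is trivial. Second, I would observe that the $\Aut(\cG)$-orbit of $\vec{v}$ sits inside the set of admissible $k$-tuples in $\F_q^n$, which has cardinality at most $q^{kn}$. In the regime $m = cn$, the number of individualised vectors must satisfy $k = O(1)$, since otherwise merely enumerating individualisations would already cost time $q^{\omega(n)}$, contradicting the $q^{O(n)}$ running time of the algorithm. Combining these facts with the orbit--stabiliser identity yields
\[
|\Aut(\cG)| \;=\; |\Aut(\cG)\cdot \vec{v}| \cdot |\mathrm{Stab}_{\Aut(\cG)}(\vec{v})| \;\leq\; q^{O(n)} \cdot 1 \;=\; q^{O(n)}.
\]

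The main obstacle is the translation step: converting the algorithmic success condition of Theorem~\ref{thm:main} into the group-theoretic assertion that the pointwise stabiliser of $\vec{v}$ in $\Aut(\cG)$ is trivial for almost every $\cG \in \LinER(n,m,q)$. In the BES graph setting this is the classical fact that on a random graph the refinement produces singleton colour classes; here one must verify that the linear algebraic refinement designed in the proof genuinely cuts every refinement-invariant subspace down to a canonical direction on the good event, rather than merely enabling a correct isometry test. Once this translation is isolated from the probabilistic analysis in Theorem~\ref{thm:main}, the orbit-counting step above is routine.
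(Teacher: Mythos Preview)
Your orbit--stabiliser framework is the right shape, but the specific claim that the pointwise stabiliser of the individualised tuple is \emph{trivial} on the good event is wrong, and the paper explicitly explains why. In Section~\ref{subsec:ind} the authors note that the direct BES-style refinement cannot produce distinct labels for all vectors in the complement when $r$ is a constant: there are $q^{n-r}$ vectors to label but only about $q^{r^2}$ possible labels. So the refinement does \emph{not} cut the stabiliser down to $1$. What the paper actually establishes as the good event (the property $F(n,m,q,r)$ in Definition~\ref{def:property_F}) is that the adjoint-algebra object $\pi_1(\Adj(\bB))$ has size at most $q^{n-r}$; this bounds the set of individualisation-respecting autometries (those fixing $e_1,\dots,e_r$ \emph{and} preserving the chosen complement) by $q^{n-r}$, not by $1$. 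Consequently the pointwise stabiliser of the bare $r$-tuple in $\Aut(\cG)$ is bounded only by $q^{r(n-r)}\cdot q^{n-r}=q^{O(n)}$, after summing over the $q^{r(n-r)}$ possible images of the complement.

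This is still enough for your orbit--stabiliser inequality: with $r=O(1)$ the orbit of $(e_1,\dots,e_r)$ has size at most $q^{rn}=q^{O(n)}$ and the stabiliser has size at most $q^{O(n)}$, so $|\Aut(\cG)|\le q^{O(n)}$ as desired. So the fix is simply to replace ``the stabiliser is trivial'' by ``the stabiliser has size $q^{O(n)}$, via the adjoint-algebra bound.'' The paper's own derivation is even more direct: once $\cG$ satisfies the good property, the algorithm in Section~\ref{subsec:main_algo} with $\cH=\cG$ outputs the full set $S=\Aut(\cG)$ and terminates in time $q^{O(n)}$, so $|\Aut(\cG)|\le q^{O(n)}$ immediately. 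Your route and the paper's are essentially the same decomposition (individualisations $\times$ per-individualisation bound), just phrased group-theoretically versus algorithmically; the only genuine error is the expectation that the linear refinement behaves like the graph one and trivialises the stabiliser.
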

We observe that Corollary~\ref{cor:main} has certain consequences to the 
enumeration of finite $p$-groups of class $2$. For details see 
Section~\ref{subsec:lb}.

Finally, we provide another piece of evidence to support 
Postulate~\ref{post:lin_gi}, by adapting Luks' dynamic programming technique for 
\GrI \cite{Luks99} to \AltMatSpIso. In the \GrI setting, this technique improves 
the naive 
$n!\cdot \poly(n)$ 
time bound to the $2^{O(n)}$ time bound, which can be understood  as 
replacing the 
number of permutations $n!$ with the number of subsets $2^n$. In the linear 
algebraic setting the analogue would be to replace $\Theta(q^{n^2})$, the number 
of 
invertible matrices 
over $\F_q$, with the number of subspaces in $\F_q^n$ which is 
$q^{\frac{1}{4}n^2+O(n)}$. We show that this is indeed possible.
\begin{theorem}\label{thm:minor}
There exists a deterministic algorithm for \AltMatSpIso in time 
$q^{\frac{1}{4}(m^2+n^2)+O(m+n)}$.
\end{theorem}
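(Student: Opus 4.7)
The plan is to adapt Luks' dynamic programming technique \cite{Luks99}, which in the graph isomorphism setting reduces a brute-force enumeration over $n!$ permutations to one over the $2^n$ subsets of $[n]$. In our linear algebraic analogue, the brute-force enumeration of $|\GL(n,q)| = q^{n^2 - O(n)}$ matrices (together with $|\GL(m,q)|$ on the coefficient side) is replaced by an enumeration over subspaces of $\F_q^n$ and $\F_q^m$, whose totals are $q^{n^2/4 + O(n)}$ and $q^{m^2/4 + O(m)}$ respectively.

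Concretely, I would reformulate an isometry between $\cG, \cH \subseteq \Lambda(n,q)$ as a pair $(A, B) \in \GL(n,q) \times \GL(m,q)$ satisfying $\trans{A} G_i A = \sum_j B_{ij} H_j$, where $(G_i), (H_j)$ are fixed bases of $\cG, \cH$. The algorithm builds $A$ and $B$ column-by-column in parallel. After committing $k$ columns of $A$ and $\ell$ columns of $B$, the relevant DP state is the pair $(V, W)$ of column spans, with $V \subseteq \F_q^n$ of dimension $k$ and $W \subseteq \F_q^m$ of dimension $\ell$; for each such $(V, W)$ we store a compact description of the set of partial isometries reaching that state, represented as a coset of an appropriate subgroup of $\GL(k,q) \times \GL(\ell,q)$ via a generating set.

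The recurrence extends $(V, W)$ to $(V + \linspan(v), W)$ by appending a new column $v \in \F_q^n \setminus V$, checking the compatibility of the induced partial map with the alternating structure (a linear-algebra check given the stored data), and symmetrically for extending $W$; merges occur when distinct predecessors produce the same $(V, W)$. The number of DP cells is $\sum_{k, \ell} \gbinom{n}{k}{q} \gbinom{m}{\ell}{q} = q^{(n^2+m^2)/4 + O(n+m)}$, and each cell triggers at most $q^n + q^m$ extension attempts with polynomial post-processing, so the total time matches the claimed bound $q^{(m^2+n^2)/4 + O(m+n)}$. Isometry is read off from whether the final cell $(\F_q^n, \F_q^m)$ is nonempty.

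The main obstacle is to ensure that the set of partial isometries reaching each DP state admits a compact (coset, or small union of cosets) representation that is preserved by the extension step: if many incoming partial isometries coalesce at a state as an unstructured set, the per-state storage would blow past the target budget. Overcoming this calls for a careful analysis of how the bilinear/alternating constraints cut out the fibre above each state, together with standard computational group theory bookkeeping to maintain generating sets of stabilizers along the DP. Once the coset structure is verified, the per-transition cost stays polynomial and the overall complexity is dictated by the state count.
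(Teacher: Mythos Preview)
Your high-level plan---adapt Luks' dynamic programming so that enumeration over $\GL$ is replaced by enumeration over subspaces---is exactly what the paper does, and your cell-count arithmetic is correct. However, the paper does \emph{not} run a single DP over pairs $(V,W)$ building $A$ and $B$ in parallel. It nests two DPs. The outer DP is indexed only by subspaces $V\leq\F_q^m$ of dimension $d$, and the cell stores the transporter set $\Iso(\bG^{E_d},\bH^V)=\{(g,h)\in\GL(n,q)\times\GL(d,q):\trans{g}(\bG^{E_d})g=(\bH^V)^h\}$; crucially, $g$ is kept \emph{full} in $\GL(n,q)$ throughout the outer table. Each outer transition (adjoining one more matrix on each side) reduces, after a pointwise transporter step, to an instance of the \emph{alternating matrix transporter problem}: given a coset in $\GL(n,q)$ and $A,B\in\Lambda(n,q)$, compute $\{g:\trans{g}Ag=B\}$. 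That inner problem is solved by its own DP over subspaces of $\F_q^n$ in time $q^{n^2/4+O(n)}$. The product of the $q^{m^2/4+O(m)}$ outer cells and the $q^{n^2/4+O(n)}$ per-transition inner cost gives the bound.

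The nesting is precisely what resolves the obstacle you flag. At every level the stored object is a transporter set for a group acting on a fixed pair of points, hence automatically a single coset (or empty), and each extension step is again a transporter computation within that coset; Sims' method keeps the generating sets at size $q^{O(n)}$. In your joint scheme the ``partial isometry'' set at a state $(V,W)$ is not a transporter for any single group action---the target on the $\cH$ side shifts as you grow $W$ while the restriction on the $\cG$ side shifts as you grow $V$---so the coset structure is genuinely in doubt, not merely unverified, and your proposed representation inside $\GL(k,q)\times\GL(\ell,q)$ does not obviously capture it. The missing idea is not a finer analysis of the joint fibres but the structural decision to keep $g\in\GL(n,q)$ full in the outer table and delegate the $\F_q^n$-side enumeration to a separate inner transporter DP.
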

Note that the quadratic term on the exponent of the algorithm in 
Theorem~\ref{thm:minor} is $\frac{1}{4}(m^2+n^2)$, slightly better than the one 
based on Rosenbaum's result \cite{Ros13a} which is $\frac{1}{4}(m+n)^2$. We 
stress though that our 
intention to present this result is to support 
Postulate~\ref{post:lin_gi}.

\paragraph{Organisation of this paper.} In 
Section~\ref{sec:outline}, we explain the basic idea the algorithm for 
Theorem~\ref{thm:main}, by drawing analogues with the algorithm in \cite{BES80}. 
Then, after presenting 
preliminaries and preparation material, we present detailed proofs for 
Theorem~\ref{thm:main} (Section~\ref{sec:main_algo}) and Theorem~\ref{thm:minor} 
(Section~\ref{sec:dp}).
Section~\ref{sec:discussion} includes discussions, 
future directions, and connections to group enumeration.

\section{Outline of the main algorithm}\label{sec:outline}

We now describe the outline of  the algorithm for Theorem~\ref{thm:main}, which  
is inspired by the first average-case efficient algorithm for \GrI by Babai, 
Erd\H{o}s, and Selkow \cite{BES80}. We will recall the idea in \cite{BES80} that 
is relevant to us, define a linear algebraic individualisation, and 
propose a reformulation of the refinement step in \cite{BES80}. Then we 
present an outline of the main algorithm. 
%
During the procedure we will also see how the meta-postulate guides the 
generalisation here.

\subsection{A variant of the naive refinement algorithm as used in 
\cite{BES80}}\label{subsec:ir}

Two properties of random graphs are used in the average-case analysis of the 
algorithm in \cite{BES80}. The 
first property is that most graphs have the first $\lceil3\log n\rceil$
largest degrees distinct. The second property, which is relevant to us, is the 
following.

Let $G=([n], E)$ be a simple and undirected graph. Let $r=\lceil3\log n\rceil$, 
and $S=[r]$, 
$T=[n]\setminus [r]$. Let $B$ be the bipartite graph induced by the cut $[r]\cup 
\{r+1, \dots, n\}$, that is, $B=(S\cup T, F)$ where $F=\{(i, j) :  
i\in S, j\in T, \{i, j\} \in E\}$. For each $j\in T$, assign a length-$r$ bit 
string $f_j$ as follows: 
$f_j\in \{0, 1\}^r$ such that $f_j(i)=1$ if and only if $(i, j)\in F$. It is 
easy to verify that, all but at most $O(1/n)$ fraction of graphs 
satisfy that $f_j$'s are distinct over $j\in T$. 

Let us see how the second property alone, together with the
individualisation and refinement heuristic, 
give an average-case algorithm 
in $n^{O(\log n)}$. Suppose $G$ satisfies the second property, and we would like 
to 
test isomorphism between $G=([n], E)$ and an arbitrary graph $H=([n], E')$. Let 
$St_G\subseteq \{0, 
1\}^r$ be the set of bit strings obtained in the procedure above. Note that 
$|St_G|=n-r$.
In the \emph{individualising} step, we enumerate all $r$-tuple of vertices in $H$. 
For a fixed $r$-tuple $(i_1, 
\dots, i_r)\in [n]^r$, we perform the \emph{refinement} step, that is, label the 
remaining vertices in $H$ according to their 
adjacency relations with the $r$-tuple $(i_1, \dots, i_r)$ as before, to obtain 
another set of bit-strings $St_H$. If $St_G\neq St_H$ we neglect this $r$-tuple. 
If 
$St_G=St_H$, then form a bijective map between $[n]$ and $[n]$, by mapping $j$ to 
$i_j$ for $j\in[r]$, 
and the rest according to their labels. Finally check whether 
this bijective map induces an isomorphism. 

It can be verified easily that the above algorithm is an $n^{O(\log n)}$-time 
algorithm that tests isomorphism between $G$ and $H$ given that $G$ satisfies the 
required property. In particular, this 
implies that for such $G$, $|\Aut(G)|\leq n^{O(\log n)}$. To recover the algorithm 
in \cite{BES80}, assuming 
that 
the largest $r$ degrees are distinct, one can canonicalise the choice of the 
$r$-tuples by choosing the one with largest $r$ degrees for both $G$ and $H$.


\subsection{Individualisation and refinement in the \AltMatSpIso 
setting}\label{subsec:ind}

We will generalise the above idea to the setting of 
\AltMatSpIso. To do this, we first make sense of what individualisation means in 
the alternating space setting. We discuss how the refinement step may be 
generalised, and indicate how we follow an alternative formulation of it.

Let $G=([n], E)$ and $H=([n], E')$ be two graphs 
for which we want to test isomorphism. Let $\cG, \cH\leq \Lambda(n, q)$ be two 
$m$-alternating spaces for which we want to test isometry. As the case in 
Section~\ref{subsec:ir}, we will look for 
properties of $G$ or $\cG$ which enable the average-case analysis, and perform 
individualisation on $H$ or $\cH$ side.

For $i\in[n]$, $e_i$ denotes the $i$th standard basis vector of $\F_q^n$. For a 
vector space $V$ and $S\subseteq V$, we use $\langle S\rangle$ to denote the 
linear span of $S$ in $V$.

\medskip \noindent{\bf Individualisation.} In the graph setting, individualising 
$r$ vertices 
in $H$ can be 
understood as follows. First we fix a size-$r$ subset $L$ of $[n]$. Then put an 
order on the elements in $L$. The result is a tuple of distinct vertices $(i_1, 
\dots, i_r)\in [n]^r$. Enumerating such tuples incurs a multiplicative cost of at 
most $n^r$.

In the alternating matrix space setting, it is helpful to think of vectors in 
$\F_q^n$ as 
vertices, and matrices in $\cH$ as edges. Consider the following procedure.
First fix a dimension-$r$ subspace $L$ of $\F_q^n$. Then 
choose an ordered basis of $L$. The result is a tuple of linearly independent 
vectors 
$(v_1, \dots, v_r)$, $v_i\in \F_q^n$, such that $L=\langle v_1, \dots, 
v_r\rangle$. This incurs a 
multiplicative cost of at most $q^{rn}$. Up to this point, this is in complete 
analogy 
with the graph setting. We may stop here and say that an $r$-individualisation 
amounts to fix an $r$-tuple of linearly independent vectors.

We can go a bit further though. As will be clear in the following, it is 
beneficial if we also fix a complement 
subspace $R$ of $L$, e.g. $R\leq \F_q^n$ such that $L\cap R=\zerovec$ and $\langle 
L\cup R\rangle =\F_q^n$. This 
adds another multiplicative cost of $q^{r(n-r)}$, which is the number of 
complement subspaces of a fixed dimension-$r$ subspace in $\F_q^n$. In the graph 
setting, this step is not necessary, because for any $L\subseteq [n]$ there 
exists a unique complement subset $R=[n]\setminus L$. 

To summarise, by an 
\emph{$r$-individualisation}, we mean choosing a direct sum 
decomposition $\F_q^n=L\oplus R$ where $\dim(L)=r$ and $\dim(R)=n-r$, together 
with an ordered basis $(v_1, \dots, v_r)$ of $L$.
Enumerating all 
$r$-individualisations incurs a total multiplicative 
cost of at most $q^{2rn-r^2}$.

\paragraph{Towards a refinement step as in \cite{BES80}.} In the \GrI 
setting, 
individualising $r$ vertices gives $(i_1, \dots, i_r)\in[n]^r$, and allows us to 
focus on
isomorphisms that respect this individualisation, namely those $\phi\in\Iso(G, H)$ 
such 
that $\phi(j)=i_j$ for $j\in[r]$. There are at most $(n-r)!$ such isomorphisms. 
Since $r$ is 
usually set as a polylog, just naively trying all such permutations does not help. 
Therefore the 
individualisation is usually accompanied with a refinement type technique. 
Specifically, setting $L=\{i_1, \dots, i_r\}$ and $R=[n]\setminus L$, the 
refinement step as in \cite{BES80} assigns 
every $v\in R$ a label according to its adjacency relation w.r.t. $(i_1, \dots, 
i_r)$. 
This 
label in fact represents a \emph{subset} of $L$, and an 
individualisation-respecting isomorphism has to preserve this adjacency relation 
for every $v\in R$. 
This restriction turns out to be 
quite 
severe for most graphs: as observed in Section~\ref{subsec:ir}, for most graphs 
$G$, the adjacency relations between $(1, 2, \dots, r)$ and $j\in [n]\setminus 
[r]$ are completely different over $j$. For such $G$ and any individualisation of 
$H$, this means that there is at most one way to extend $\phi(j)=i_j$ for 
$j\in[r]$ to an isomorphism between $G$ and $H$. 


In the \AltMatSpIso setting, an $r$-individualisation also allows us to focus on 
isometries that respect the decomposition $L\oplus R$ and the 
ordered basis $(v_1, \dots, v_r)$ of $L$, namely those $\phi\in\Iso(\cG,\cH)$ such 
that 
$\phi(e_i)=v_i$ 
for $i\in[r]$, 
and $\phi(\langle e_{r+1}, \dots, e_n\rangle)=R$. There are at most 
$q^{(n-r)^2}$ such isometries. Since 
$r$ will be also set to be very small --  in fact a constant here -- we also need 
some refinement type argument. 
For $u\in R$, we can record 
its ``adjacency
relation'' w.r.t. $\vec{v}=(v_1, \dots, v_r)$ as a \emph{subspace} of 
$L\cong 
\F_q^r$ as follows. For $Q\in \cH\leq \Lambda(n, q)$, define $Q(\vec{v}, 
u):=(\trans{v}_1Qu, \dots, 
\trans{v}_rQu)^{\mathrm{t}}\in \F_q^r$, and $\cH(\vec{v}, u):=\{Q(\vec{v}, u) : 
Q\in \cH\}$ 
which is a 
subspace in $\F_q^r$. $\cH(\vec{v}, u)$ records the adjacency
relation between $(v_1, \dots, v_r)$ and $u$ under $\cH$. It can be verified that 
an individualisation-respecting isometry has to preserve this adjacency 
relation. 
It is tempting to check then on the $\cG$ side, where we have the standard 
individualisation $(e_1, \dots, 
e_r)$ and $\langle 
e_{r+1}, \dots, e_n\rangle$, whether for most 
$\cG$'s it is the case that every $v\in \langle 
e_{r+1}, \dots, e_n\rangle$ gets a unique label. If this is so, then the number 
of individualisation-respecting 
isomorphisms can also be significantly reduced. However, this cannot be the case 
when $r$ is small, as there are 
$q^{(n-r)^2}$ vectors in $R$ but there at at most $q^{r^2}$ subspaces in $\F_q^r$. 

The alert reader will note that, since we are looking for linear maps from 
$\langle 
e_{r+1}, \dots, e_n\rangle$ to $R$, the 
above counting argument does not make much sense, as it mostly concerns setwise 
maps from 
$\langle e_{r+1}, \dots, e_n\rangle$ to $R$. It is indeed the case, and we 
further note that the map from $u\in R$ to $\cH(\vec{v}, u)\leq \F_q^r$ defines a 
sheaf over the projective space $\mathbb{P}(R)$, 
so such labels have some 
nontrivial relation to glue 
together to form a sheaf. (See the related concept of kernel sheaves as in 
\cite{KV12}.) It may be possible to use these observations to define a 
reasonable refinement step in the alternating matrix space setting. In this paper 
we shall follow the following reformulation. 

\paragraph{A reformulation of the refinement step.} To resolve the above 
problem, we 
reformulate the idea 
in the graph setting as follows. Recall that on the $G$ side we start with the 
standard
individualisation $[r]\cup \{r+1, \dots, n\}$ with an order on $[r]$ as 
$(1, \dots, r)$, and let $S=[r]$, $T=\{r+1, \dots, n\}$. This defines the 
bipartite 
graph $B=(S\cup T, F)$ 
where the edge set $F$ is induced from $G$. For a fixed individualisation on the 
$H$ side, which produces $L\cup R$, $L=\{i_1, \dots, i_r\}\subseteq [n]$ with an 
order on $L$, this also defines a bipartite graph $C=(L\cup R, F')$ where $F'$ is 
induced from $H$. A bijective $\psi: T\to R$ is a right-side isomorphism between 
$B$ and $C$ if it induces an isomorphism between $B$ and $C$ as bipartite graphs. 
Let $\RIso(B, C)$ be 
the set of right-side isomorphisms, and let $\IndIso(G, H)$ be the set of 
individualisation-respecting isomorphisms from $G$ to $H$ w.r.t the above 
individualisations. Note that both $\RIso(B, C)$ and $\IndIso(G, H)$ can be 
embedded to the set of 
bijective maps between $T$ and $R$. The key observation is that an 
individualisation-respecting 
isomorphism has to be a right-side isomorphism between $B$ and $C$, e.g.  
$\IndIso(G, H)\subseteq \RIso(B, C)$. Also note that either $|\RIso(B, C)|=0$ 
(e.g. when $B$ and $C$ are not right-isomorphic), or 
$|\RIso(B, C)|=|\RAut(B)|$ 
where $\RAut(B):=\RIso(B, B)$. The refinement step as in 
Section~\ref{subsec:ir} achieves two goals. Firstly on the $G$ side, most $G$'s 
have the corresponding $B$ with $|\RAut(B)|=1$. This means that $|\RIso(B, C)|\leq 
1$. Secondly, given $H$ with a fixed individualisation inducing the 
corresponding bipartite graph $C$, there is an 
efficient procedure to decide whether $B$ and $C$ are right-isomorphic (by 
comparing the labels), and if they do, enumerate all right-isomorphisms (actually 
unique).

In the \AltMatSpIso 
setting, on the $\cG$ side we start with the standard 
individualisation $S=\langle e_1, \dots, e_r\rangle$, $T=\langle e_{r+1}, \dots, 
e_n\rangle$ with the ordered basis $(e_1, \dots, e_r)$ of $S$. We can also define 
a correspondence of the bipartite graph $B$ in this setting, which is the matrix 
space 
$\cB'=\{[e_1, \dots, e_r]^tP[e_{r+1}, \dots, e_n] : P\in \cG\}\leq M(r\times 
(n-r), q)$, where $[e_1, \dots, e_r]$ denotes the $n\times r$ matrix listing the 
column vectors $\{e_i:i=1,\dots,r\}$. Note that $[e_1, \dots, e_r]^tP[e_{r+1}, \dots, e_n]$ is 
just the 
upper-right $r\times (n-r)$ submatrix of $P$. Similarly, the individualisation on 
the $H$ side yields $L\oplus R$ with an ordered basis of $L$, $(v_1, \dots, v_r)$, 
$v_i\in \F_q^n$. Take any basis of $R=\langle v_{r+1}, \dots, v_n\rangle$. 
Similarly construct $\cC'=\{[v_1, \dots, v_r]^t Q [v_{r+1}, \dots, v_n] : Q\in 
\cH\}\leq M(r\times (n-r), q)$. $A\in\GL(n-r, q)$ is a right-side equivalence 
between $\cB'$ and $\cC'$ if $\cB' A:=\{DA : D\in \cB'\}=\cC'$. Let $\RIso(\cB', 
\cC')$ be the set of right-side equivalences between $\cB'$ and $\cC'$, and 
$\IndIso(\cG, \cH)$ the set of individualisation-respecting isometries between 
$\cG$ and $\cH$. Similarly, both $\RIso(\cB', \cC')$ and $\IndIso(\cG, \cH)$ can 
be 
embedded in the set of invertible linear maps from $T$ to $R$ (isomorphic to 
$\GL(n-r, q)$), and we have $\IndIso(\cG, \cH)\subseteq \RIso(\cB', \cC')$. 
Furthermore $\RIso(\cB', \cC')$ is either empty (e.g. $\cB'$ and $\cC'$ are not 
right-side equivalent), or a coset of 
$\RAut(\cB'):=\RIso(\cB', \cB')$. So in analogy with the graph 
setting, for our 
purpose the goals become: (1) 
for 
most $m$-alternating space $\cG\leq \Lambda(n, q)$ with $m=cn$ for some constant 
$c$, setting $r$ to be some constant, we 
have $|\RAut(\cB')|\leq q^{O(n)}$, and (2) for $\cG$'s satisfying (1), 
$\RIso(\cB', 
\cC')$ can be enumerated 
efficiently. 

We are almost ready for the algorithm outline. Alas, there is one important 
ingredient 
missing. It 
turns out for the purpose of (2), we will need to``linearise'' $\RAut(\cB')$ to 
allow for the use of efficient linear algebra procedures. This linearisation 
is captured by the adjoint algebra concept, defined
below in the algorithm outline. Correspondingly, in the goals above we will 
replace 
$\RAut(\cB')$ and $\RIso(\cB', \cC')$ with $\Adj(\bB)$ and $\Adj(\bB, \bC)$ where 
$\bB$ and $\bC$ will be defined below as well.

\subsection{Algorithm outline}\label{subsec:outline}

Suppose we want to test isometry 
between two $m$-alternating spaces $\cG=\langle G_1, \dots, G_m\rangle$ and 
$\cH=\langle H_1, \dots, H_m\rangle$ in $\Lambda(n, q)$. To ease the 
presentation in this subsection we assume $r=4$ and $m=n-4$. 

We first define the property on $\cG$ for the sake of average-case analysis. 
Given those $G_k\in \Lambda(n, q)$ linearly spanning $\cG$, form a $3$-tensor 
$\bG\in \F_q^{n\times n\times m}$ where $\bG(i, j, k)$ denotes the $(i, j)$th 
entry of 
$G_k$. Let $\bB'$ be the upper-right $r\times (n-r)\times m$ subtensor of $\bG$, 
with 
$B'_k$ being the corresponding corner in $G_k$. $B'_k$'s span the $\cB'$ as 
defined 
above, 
so $A\in \RAut(\cB')\leq \GL(n-r, q)$ if and only if there exists $D=(d_{i,j})\in 
\GL(m, q)$ 
such that $\forall i\in[m]$, $\sum_{j\in[m]}d_{i,j}B'_j=B'_iA$. 
It is
more 
convenient that we flip $\bB'$ which is of size $r\times (n-r)\times m$ to the 
$\bB$ 
which is of size $(n-r)\times m\times r$ (Figure~\ref{fig:flip}).
Slicing $\bB$ along the third index, 
we obtain an $r$-tuple of $(n-r)\times m$ 
matrices $(B_1, \dots, B_r)$ (Figure~$\ref{fig:slice}$). 
\begin{figure}[!ht]
\centering
\includegraphics[height=3.3cm]{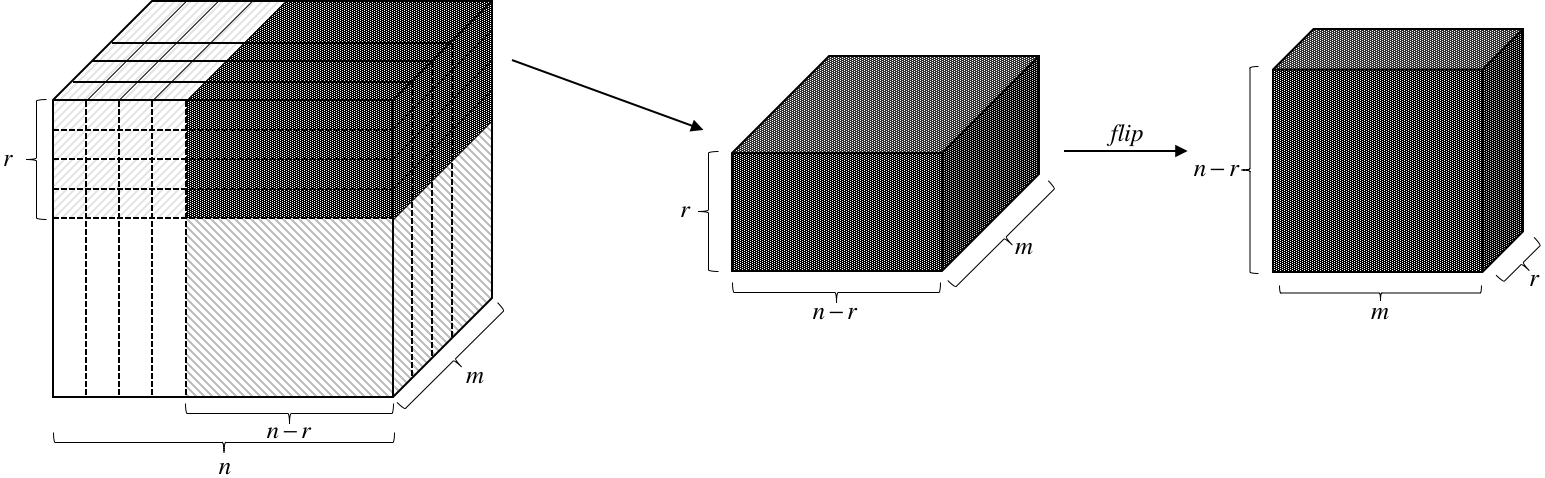}

\caption{The $3$-tensor $\bG$, and flipping $\bB'$ to get $\bB$.}
\label{fig:flip}
\end{figure}

\begin{figure}[!ht]
\centering
\includegraphics[height=3.3cm]{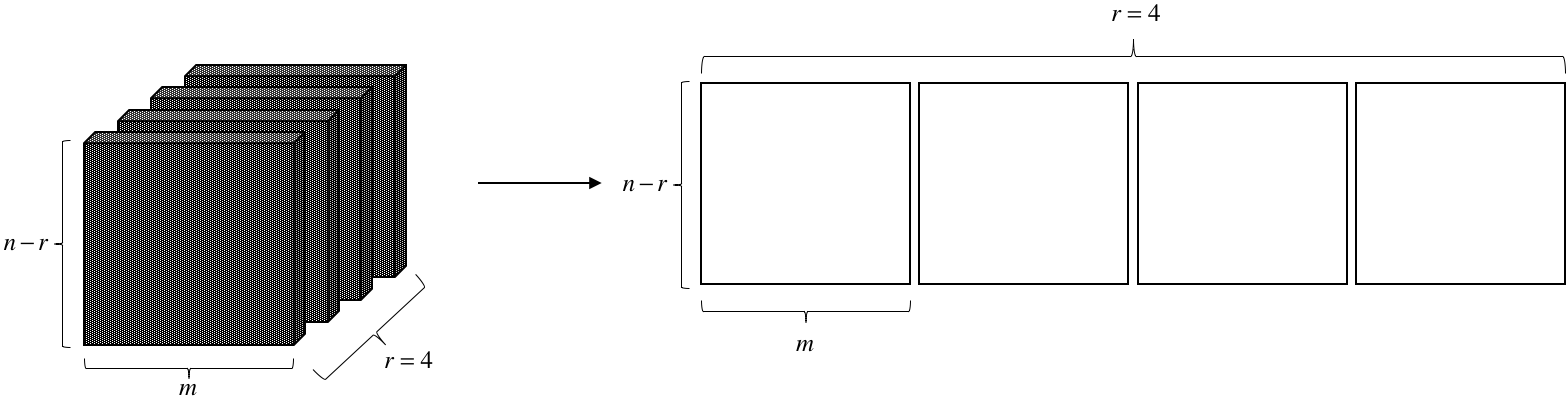}
\caption{Slicing $\bB$.}
\label{fig:slice}
\end{figure}

Define the set of equivalences of $\bB$ as 
$\Aut(\bB):=\{(A, 
D)\in\GL(n-r, q)\times \GL(m, q) : \forall i\in[r], AB_iD^{-1}=B_i\}$.
Note that $\RAut(\cB')$ is 
the projection of $\Aut(\bB)$ to the first component. 
Now define the adjoint 
algebra of $\bB$ as $\Adj(\bB):=\{ (A, D)\in M(n-r, q)\oplus M(m, q) : \forall 
i\in[r], 
AB_i=B_iD\}$. $(A, D)\in M(n-r, q)\oplus M(m, q)$ is called invertible, if both 
$A$ and 
$D$ are invertible. Clearly, $\Aut(\bB)$ consists of the invertible elements in 
$\Adj(\bB)$. 
When 
$r=4$, $m=n-r=n-4$, it can be shown that the adjoint algebra of $4$ random 
matrices in $M(m, q)$ is of size $q^{O(m)}$ with probability $1-1/q^{\Omega(m)}$. 
The key to prove this statement is the stable notion from geometric invariant 
theory \cite{MFK94} in the context of the 
left-right 
action of $\GL(m, q)\times \GL(m, q)$ on matrix tuples $M(m, q)^r$. In this 
context, a matrix tuple 
$(B_1, 
\dots, B_r)\in M(m, q)^r$ is stable, if for every nontrivial subspace $U\leq 
\F_q^n$, $\dim(\langle \cup_{i\in[r]} B_i(U)\rangle)>\dim(U)$. 
An 
upper bound on 
$|\Adj(\bB)|$ can be obtained by analysing this notion using some classical  
algebraic results and elementary probability calculations. 
The good property we impose on $\cG$ is then that the corresponding 
$|\Adj(\bB)|\leq q^{O(m)}$. It can be verified that this property does not depend 
on the choices of bases of $\cG$. 
There is one subtle point though: the analysis on 
$\Adj(\bB)$ is done for $4$ random matrices but we want an analysis for $\cG$ in 
the linear 
algebraic Erd\H{o}s-R\'enyi model. This can be fixed by defining 
a so-called naive model and analysing the relation between the naive model and 
the \LinER model (Section~\ref{sec:naive}).

Now that we have achieved our first goal, namely defining a good property 
satisfied by 
most $\cG$'s, let us see how this property enables an algorithm for such $\cG$'s. 
For an arbitrary $\cH\leq \Lambda(n, q)$, at a multiplicative cost of $q^{O(n)}$ 
(recall thar $r=4$) we can enumerate all $r$-individualisations. Consider a fixed 
one, say $\F_q^n=L\oplus R$ with an ordered basis $(v_1,\dots,v_r)$ of $L$.
Analogous to the above, we can construct $\bC'$, flip to get 
$\bC$, and slice $\bC$ into $r$ $m\times m$ matrices $(C_1, \dots, C_r)$. 
The task then becomes to compute $\Adj(\bB, \bC):=\{(A, D)\in M(n-r, q)\oplus M(m, 
q) : \forall 
i\in[r], AB_i=C_iD\}$. Viewing $A$ and $D$ as variable matrices, $AB_i=C_iD$ are 
linear equations on $A$ and $D$, so the solution set can be computed efficiently. 
As 
$|\Adj(\bB)|\leq q^{O(m)}$, for $\Adj(\bB,\bC)$ to contain an invertible element, 
it must be that $|\Adj(\bB, \bC)|=|\Adj(\bB)|\leq q^{O(m)}$.
In this case 
all elements in $\Adj(\bB, \bC)$ can be enumerated in time $q^{O(m)}=q^{O(n)}$. 
For each element $(A, D)\in \Adj(\bB, \bC)$, test 
whether it is invertible, and if so, test whether the $A$ in that solution induces 
an isometry together with the individualisation. This completes a high-level 
description of the algorithm. In 
particular, this implies that if $\cG$ satisfies this property, then 
$|\Aut(\cG)|\leq q^{O(n)}$. A detailed presentation is in 
Section~\ref{sec:main_algo}, which have some minor differences with the outline 
here, 
as we want to reduce some technical details. 

\section{Preliminaries}

We collect some notation used in this paper. $q$ is reserved 
for prime powers, and $p$ for primes. For $n\in\N$, $[n]:=\{1, \dots, 
n\}$. $\F_q$ denotes the field of size $q$. $\zerovec$ denotes the zero vector or 
the zero vector space. For $i\in[n]$, $e_i$ denotes the $i$th standard basis 
vector of $\F_q^n$. For a 
vector space $V$ and $S\subseteq V$, we use $\langle S\rangle$ to denote the 
linear span of $S$ in $V$. $M(s\times t, q)$ denotes the linear 
space of matrices of size $s\times 
t$ over $\F_q$, and $M(s,q):=M(s\times s, q)$. $I_s$ denotes the $s\times s$ 
identity matrix. For $A\in M(s\times t, q)$, 
$\trans{A}$ denotes the transpose of $A$. 
$\GL(n, q)$ is the general 
linear group consisting of $n\times n$ invertible matrices over $\F_q$.
$\Lambda(n, q)$ is the linear space of alternating matrices of size $n\times 
n$ over $\F_q$. We use $\gbinom{\ }{\ }{q}$ for the 
Gaussian binomial coefficient with base $q$, and $\binom{\ }{\ }$ for the 
ordinary binomial coefficient. For $N\in \N$ and $m\in [N]$, $\gbinom{N}{m}{q}$ 
counts the number of 
dimension-$m$ subspaces in $\F_q^N$. 

By a random vector in $\F_q^N$, we mean a vector of 
length $N$ where each entry is chosen independently and uniformly random from 
$\F_q$. By a random matrix in $M(s\times t, q)$, we mean a matrix of size $s\times 
t$ where each entry is chosen independently and uniformly random from $\F_q$. By a 
random alternating matrix in $\Lambda(n, q)$, we mean an alternating matrix of size $n$ where 
each 
entry in the strictly upper triangular part is chosen independently and uniformly 
random from $\F_q$. Then the diagonal entries are set to $0$, and the lower 
triangular entries are set in accordance with the corresponding upper triangular 
ones.

\begin{fact}\label{fact:prel}
Let $N\in\N$ and $m\in \N$ such that $0\leq m\leq N$. 
\begin{enumerate}
\item For a fixed subspace $U$ in $\F_q^N$ of dimension $m$, the number of complements of $U$ in $\F_q^N$ is $q^{m(N-m)}$; 
\item A random matrix $A\in M(N\times m, q)$ is of rank $m$ with 
probability $\geq 
1-m/q^{N-m+1}$;
\item A random matrix $A\in M(N\times m, q)$ is of rank $m$ with 
probability $> 1/4$.
\end{enumerate}
\end{fact}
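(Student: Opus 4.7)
}

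For part (1), my plan is to parametrise complements of $U$ explicitly. I would first fix a basis $u_1,\dots,u_m$ of $U$ and extend it to a basis $u_1,\dots,u_m,w_1,\dots,w_{N-m}$ of $\F_q^N$. Then every complement $W$ of $U$ has a unique basis of the form $w_j + \sum_{i=1}^{m} a_{ij} u_i$ for $j=1,\dots,N-m$: such a basis exists since the quotient $\F_q^N/U$ is spanned by the images of $w_1,\dots,w_{N-m}$ and every coset has a unique representative of this shape once $W$ is fixed, and conversely any choice of the matrix $(a_{ij}) \in M(m\times(N-m), q)$ produces a subspace of dimension $N-m$ meeting $U$ trivially. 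This sets up a bijection between complements and $M(m\times(N-m), q)$, yielding the count $q^{m(N-m)}$.

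For part (2), my plan is to bound the probability that the columns of a uniformly random $A \in M(N\times m, q)$ fail to be linearly independent by a union bound over columns. Writing $A=[a_1,\dots,a_m]$ with the $a_j$ i.i.d.\ uniform in $\F_q^N$, the event $\rk(A)<m$ is contained in the union, over $j\in[m]$, of the events that $a_j$ lies in the span of $a_1,\dots,a_{j-1}$. Conditioning on the first $j-1$ columns, that span has at most $q^{j-1}$ elements, so the conditional probability is at most $q^{j-1}/q^N = q^{-(N-j+1)}$. Summing geometrically gives
\[
\Pr[\rk(A)<m] \;\leq\; \sum_{j=1}^{m} q^{-(N-j+1)} \;=\; \sum_{k=N-m+1}^{N} q^{-k} \;\leq\; \frac{m}{q^{N-m+1}},
\]
where I bound each term of the sum by its largest term.

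For part (3), my plan is to use the exact formula $\Pr[\rk(A)=m] = \prod_{i=0}^{m-1}(1-q^{i-N}) = \prod_{k=N-m+1}^{N}(1-q^{-k})$, which follows from the same column-by-column analysis as in part (2) but with equalities. Since each factor lies in $(0,1)$, this product is bounded below by the infinite product $\prod_{k=1}^{\infty}(1-q^{-k})$, which is monotone non-decreasing in $q$. The main (only) obstacle here is therefore the numerical check that $\prod_{k=1}^{\infty}(1-2^{-k}) > 1/4$; I would handle this by computing the first few partial products explicitly, observing that $(1-1/2)(1-1/4)(1-1/8)(1-1/16) \approx 0.308$, and bounding the tail via $\prod_{k\geq 5}(1-2^{-k}) \geq 1 - \sum_{k\geq 5} 2^{-k} = 1 - 2^{-4}$, so the full product exceeds $0.308 \cdot (15/16) > 1/4$. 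This handles all $q \geq 2$ uniformly.
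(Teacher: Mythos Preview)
Your proposal is correct and follows essentially the same approach as the paper: for (2) the paper uses the exact product $\prod_{k=N-m+1}^{N}(1-q^{-k})$ together with $\prod(1-x_k)\geq 1-\sum x_k$ and the crude bound of each summand by the largest, which is equivalent to your union-bound argument; for (3) the paper simply cites the numerical value $\tfrac12\cdot\tfrac34\cdot\tfrac78\cdots\approx 0.2888>1/4$, which your partial-product-plus-tail estimate makes rigorous; and for (1) the paper just says ``well-known'', so your explicit parametrisation is a welcome elaboration rather than a different route.
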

\begin{proof}
(1) is well-known. For (2), observe that
\begin{eqnarray*}
\Pr[\rk(A)=m | A\in M(N\times m, q)] & = & 
(1-1/q^N)(1-1/q^{N-1})\dots(1-1/q^{N-m+1}) \\
& \geq & 1-(1/q^N+1/q^{N-1}+\dots+1/q^{N-m+1}) \\ 
& \geq & 1-m/q^{N-m+1}. 
\end{eqnarray*} 
For (3), this is because $\frac{1}{2}\cdot \frac{3}{4}\cdot \frac{7}{8}\cdot 
\dots\approx 0.288788> 1/4$. 
\end{proof}

\section{Matrix tuples and matrix spaces}\label{sec:matrix}

An $r$-matrix tuple of size $s\times t$ over $\F_q$ is an element in $M(s\times 
t, 
q)^r$. An $r$-matrix space of size $s\times t$ over $\F_q$ is a dimension-$r$ 
subspace in $M(s\times t, q)$. An 
$m$-alternating (matrix) tuple of size $n$ over $\F_q$ is an element from 
$\Lambda(n, q)^m$. An 
$m$-alternating (matrix) space of size $n$ over $\F_q$ is a dimension-$m$ 
subspace in $\Lambda(n, q)$.
In the rest of 
this article we let 
$N_n=\binom{n}{2}=\dim(\Lambda(n, q))$, or just $N$ if $n$ is obvious from the 
context. 

We shall use $\cG$, $\cH$, \dots, to denote alternating spaces, and $\bG$, $\bH$, 
\dots, to denote alternating tuples. $\cB$, $\cC$, \dots, are for (not necessarily 
alternating nor square) matrix spaces, and $\bB$, $\bC$ for (not necessarily 
alternating nor square) matrix tuples. We say that a matrix tuple $\bB$ represents 
a 
matrix space $\cB$, if the matrices in $\bB$ form a spanning set (not necessarily 
a basis) of $\cB$. Given 
$A\in M(s, q)$, $D\in M(t, q)$, and $\bB=(B_1, \dots, B_r)\in M(s\times t, q)^r$, 
$A\bB D$ is the tuple $(AB_1D, \dots, AB_rD)$. For $Z=(z_{i,j})\in M(r, q)$, 
$\bB^Z=(\sum_{i\in[r]}z_{1,i}B_i, \sum_{i\in[r]}z_{2,i}B_i, \dots, 
\sum_{i\in[r]}z_{r,i}B_i)$.

Two alternating tuples $\bG=(G_1, \dots, G_m)$ and $\bH=(H_1, \dots, H_m)$ in 
$\Lambda(n, q)^m$ are \emph{isometric}, if there exists $A\in\GL(n, q)$, 
$\trans{A}\bG 
A=\bH$. 
Two alternating spaces $\cG$ and $\cH$ in 
$\Lambda(n, q)$ are \emph{isometric}, if there exists $A\in\GL(n, q)$, such that 
$\trans{A}\cG 
A=\cH$ (equal as subspaces).
Given alternating tuples $\bG\in \Lambda(n, q)^m$ and 
$\bH\in \Lambda(n, q)^m$ representing $\cG$ and $\cH$ respectively, $\cG$ and 
$\cH$ are isometric, if and only if there exists $Z\in\GL(m, q)$ such that $\bG$ 
and $\bH^Z$ are isometric -- in other words, there
exist $A\in \GL(n, q)$ and $Z\in 
\GL(m, 
q)$, such that $\trans{A}\bG A=\bH^Z$. We use $\Iso(\cG,\cH)\subseteq \GL(n, q)$ 
to 
denote 
the set of isometries between $\cG$ and $\cH$. When $\cG=\cH$, the 
isometries between $\cG$ and $\cG$ are also called autometries. The set of all 
autometries forms a matrix group, and let $\Aut(\cG)=\Iso(\cG, \cG)\leq \GL(n, 
q)$. 
$\Iso(\cG,\cH)$ is either empty or a right coset w.r.t. $\Aut(\cG)$. Analogously, 
we can define the 
corresponding concepts for tuples $\Iso(\bG, \bH)$ and $\Aut(\bG)$.
\footnote{We explain our 
choices of the names ``isometry'' and ``autometry''. In 
\cite{Wil09}, for two alternating bilinear maps $b, c:\F_q^n\times 
\F_q^n\to\F_q^m$, an isometry between $b$ and $c$ is $A\in\GL(n, q)$ such that 
$b(A(v_1), A(v_2))=c(v_1,v_2)$ for every $v_1, v_2\in \F_q^n$. A pseudo-isometry 
between $b$ and $c$ is $(A, D)\in \GL(n, q)\times \GL(m, q)$, such that $b(A(v_1), 
A(v_2))=D(c(v_1, v_2))$. The isometry group of $b$ consists of those $A\in\GL(n, 
q)$ preserving $b$ as above, and the pseudo-isometry group of $b$ can also be 
defined naturally. Representing $b$ and $c$ by two alternating matrix tuples, we 
see that 
the isometry (resp. self-isometry) concept there is the same as our isometry 
(resp. autometry) concept for tuples. The pseudo-isometry (resp. 
self-pseudo-isometry) concept corresponds to -- though not exactly the same -- the 
isometry (resp. autometry) concept for spaces. We 
use autometries which seem more convenient and allow for using the notation 
$\Aut$. }

Two matrix tuples $\bB=(B_1, \dots, 
B_r)$ and $\bC=(C_1, 
\dots, C_r)$ in $M(s\times t, q)^r$ are \emph{equivalent}, if there exist $A\in 
\GL(s, 
q)$ and $D\in\GL(t, q)$, such that $A\bB D^{-1}=\bC $. Two 
matrix 
spaces $\cB$ and $\cC$ in $M(s\times t, q)$ are \emph{equivalent}, if there exist 
$A\in\GL(s, q)$ and $D\in\GL(t, q)$, such that $A\cB D^{-1}=\cC$ (equal as 
subspaces). 
By abuse of notation, we use $\Iso(\cB, \cC)\leq \GL(s, q)\times \GL(t, q)$ to 
denote the set of equivalences 
between $\cB$ and $\cC$, and let $\Aut(\cB)=\Iso(\cB, \cB)$. $\Iso(\cB, \cC)$ is 
either empty or a left coset of $\Aut(\cB)$. Similarly we have 
$\Iso(\bB, \bC)$ and $\Aut(\bB)$. A trivial but useful observation is that 
$\Iso(\bB, \bC)$ and $\Aut(\bB)$ are naturally contained in certain subspaces of 
$M(s, q)\oplus M(t, q)$ as follows.\footnote{This linearisation trick allows us to 
decide whether $\bB$ and $\bC$ are equivalent, and compute a generating set of 
$\Aut(\bB)$, by using (sometimes with a little twist) existing algorithms for 
testing module isomorphism 
\cite{CIK97,BL08,IKS10}
and computing the unit group in a matrix algebra \cite{BO08}. On the other hand, 
$\Iso(\bG,\bH)$ and $\Aut(\bG)$ for alternating tuples do not permit such easy 
linearisation. Therefore testing isometry between $\bG$ and $\bH$ \cite{IQ17} and 
computing a generating set for $\Aut(\bG)$ \cite{BW12} requires, besides the 
techniques in \cite{CIK97,BL08,IKS10,BO08}, new ideas, including exploiting the 
$*$-algebra 
structure, the use of which in the context of computing with 
$p$-groups is pioneered by Wilson  \cite{Wil09}.}
Following \cite{Wil09}, we define the \emph{adjoint algebra} of 
$\bB\in 
M(s\times t, q)^r$ as 
$\Adj(\bB)=\{(A, D)\in M(s, q)\oplus M(t, q) : A\bB=\bB D
\}$. This is a classical concept, and is recently 
studied in the context 
of $p$-group isomorphism testing by Wilson et al. 
\cite{Wil09,LW12,BW12,BMW15}. We 
further define the \emph{adjoint space} between $\bB$ and $\bC$ in $M(s\times t, 
q)^r$
as $\Adj(\bB, \bC)=\{(A, D)\in M(s, q)\oplus M(t, q) : A\bB = \bC 
D\}$.
$(A, D)\in 
M(s, q)\oplus M(t, q)$ is called invertible if both $A$ and $D$ are invertible. 
Then $\Aut(\bB)$ (resp. $\Iso(\bB, \bC)$) consists of invertible elements in 
$\Adj(\bB)$ (resp. $\Adj(\bB, \bC)$). An easy observation is that if $\bB$ and 
$\bC$ are isometric, then an isometry defines a bijection between $\Adj(\bB, \bC)$ 
and $\Adj(\bB)$.

Given $\bB=(B_1, \dots, B_r)\in M(s\times t, q)^r$, let $\im(\bB)=\langle 
\cup_{i\in[m]} \im(B_i)\rangle$ and $\ker(\bB)= \cap_{i\in[m]}\ker(B_i)$. $\bB$ is 
image-nondegenerate 
(resp. kernel-nondegenerate), 
if $\im(\bB)=\F_q^s$ (resp. 
$\ker(\bB)=\zerovec$). 
Note that if $\bB$ is an alternating tuple in $\Lambda(n, q)^m$, then $\bB$ is 
image-nondegenerate if 
and only if it is kernel-nondegenerate, as $\im(\bB)$ and $\ker(\bB)$ are 
orthogonal to each other 
w.r.t. the standard bilinear form on $\F_q^n$.
$\bB$ is \emph{nondegenerate} if it is both image-nondegenerate and 
kernel-nondegenerate.
If $\bB$ is image-nondegenerate (resp. kernel-nondegenerate), 
then the projection of 
$\Adj(\bB)$ to the first (resp. second) component along the second (resp. the 
first) component is injective. 

For a matrix tuple $\bB=(B_1, \dots, B_r)\in M(s\times t, q)^r$ and a subspace 
$U\leq \F_q^t$, the image of $U$ under $\bB$ is $\bB(U):=\langle 
\cup_{i\in[m]}B_i(U)\rangle$. It is easy to verify that, $(A\bB)(U)=A(\bB(U))$, 
and $(\bB D)(U)=(\bB(D(U)))$. $U\leq \F_q^t$ is trivial if $U=\zerovec$ or 
$U=\F_q^t$.
\begin{definition}\label{def:stable}
$\bB\in M(s\times t, q)^r$  is \emph{stable}, if $\bB$ is 
nondegenerate, and for 
every nontrivial subspace $U\leq \F_q^t$, $\dim(\bB(U))/\dim(U)>s/t$.
\end{definition} 
\begin{remark}\label{remove_kernel degenerate}
In Definition~\ref{def:stable}, we can replace nondegenerate with 
image-nondegenerate, as the second condition already implies kernel-nondegenerate.
\end{remark}

Lemma~\ref{lem:stable}, \ref{lem:fd_size} and Claim~\ref{claim:trans} are 
classical and certainly known to experts. However for completeness we include 
proofs which may be difficult to 
extract from the literature.

\begin{lemma}\label{lem:stable}
If $\bB$ is stable, then any nonzero $(A, D)\in \Adj(\bB)$ is invertible.
\end{lemma}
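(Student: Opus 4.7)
The plan is to take a nonzero $(A,D) \in \Adj(\bB)$ and show that the stability hypothesis forces both $A$ and $D$ to be invertible. The workhorse is the defining relation $AB_i = B_i D$ for all $i$, which translates subspace questions about $D$ into subspace questions about $A$ via the action of the tuple $\bB$.

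First I would reduce to showing $D$ is invertible. Suppose for contradiction that $D$ is singular. The degenerate sub-case $D = 0$ is quick: it yields $AB_i = 0$ for every $i$, hence $A(\im(\bB)) = \zerovec$, and image-nondegeneracy of $\bB$ gives $A = 0$, contradicting nonzeroness. So we may assume $0 \neq D$ is not invertible, in which case both
\[
U := \ker(D) \quad \text{and} \quad V := \im(D)
\]
are nontrivial subspaces of $\F_q^t$ (i.e.\ neither $\zerovec$ nor $\F_q^t$), putting them in the range where Definition~\ref{def:stable} applies.

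Next I would extract two inequalities from stability, pointing in opposite directions. For $U$: since $u \in U$ gives $AB_i u = B_i D u = 0$, we have $B_i(U) \subseteq \ker(A)$ for all $i$, hence $\bB(U) \subseteq \ker(A)$, so $\dim(\bB(U)) \leq s - \rk(A)$. Stability gives $\dim(\bB(U))/\dim(U) > s/t$, and since $\dim(U) = t - \rk(D)$, rearranging yields $\rk(A)/s < \rk(D)/t$. For $V$: using $AB_i = B_i D$ and image-nondegeneracy of $\bB$,
\[
\bB(V) = \bB(\im(D)) = A(\im(\bB)) = \im(A),
\]
so $\dim(\bB(V)) = \rk(A)$, and stability gives $\rk(A)/s = \dim(\bB(V))/\dim(V) > s/t \cdot \dim(V)/s \cdot 1$, i.e.\ $\rk(A)/s > \rk(D)/t$. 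These two strict inequalities contradict each other, so $D$ must be invertible.

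Finally, once $D$ is known to be invertible, the relation $A\bB = \bB D$ gives $A(\im(\bB)) = \bB(D(\F_q^t)) = \bB(\F_q^t) = \im(\bB) = \F_q^s$, so $A$ is surjective, hence invertible. The main (minor) obstacle is recognising that stability must be applied twice — to $U = \ker D$ and to $V = \im D$ — to obtain opposing bounds on the ratio $\rk(A)/\rk(D)$; the two applications require $U$ and $V$ both to be nontrivial, which is exactly guaranteed by $D$ being nonzero and non-invertible.
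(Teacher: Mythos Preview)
Your argument is correct and follows essentially the same route as the paper: dispose of $D=0$ via image-nondegeneracy, then for $D$ nonzero and singular apply stability to both $\ker(D)$ and $\im(D)$ to obtain contradictory inequalities (the paper sums the two dimension bounds directly to get $s>s$, whereas you rearrange into opposing bounds $\rk(A)/s < \rk(D)/t$ and $\rk(A)/s > \rk(D)/t$, which is equivalent). One write-up slip to fix: in the $V=\im(D)$ step the correct chain is $\dim(\bB(V))/\dim(V)=\rk(A)/\rk(D)>s/t$, hence $\rk(A)/s>\rk(D)/t$; your intermediate equality ``$\rk(A)/s=\dim(\bB(V))/\dim(V)$'' is not right as written, though your final inequality is.
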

\begin{proof}
Take any $(A, D)\in \Adj(\bB)$. If $D=0$, then $A\bB=\bB D=0$, and by the 
image nondegeneracy of $\bB$, $A$ has to be $0$. 

Suppose now that $D$ is not invertible nor $0$, so $\ker(D)$ is not $\zerovec$ nor 
$\F_q^t$. By 
$A\bB(\ker(D))=\bB 
D(\ker(D))=0$, $A(\bB(\ker(D)))=0$, which gives $\ker(A)\geq  \bB(\ker(D))$. 
As $\bB$ is stable, we have $\dim(\bB(\ker(D)))>(s/t)\dim(\ker(D))$, so 
$\dim(\ker(A))>(s/t)\dim(\ker(D))$. On the other hand, $A\bB(\F_q^t)=\bB 
D(\F_q^t)$. Again, by the image nondegeneracy of $\bB$, $\bB(\F_q^t)=\F_q^s$, so 
$A\bB(\F_q^t)=\im(A)$, and we see that $\im(A)=\bB(\im(D))$. As $\bB$ is stable, 
$\dim(\im(A))>(s/t)\dim(\im(D))$. It follows that 
$s=\dim(\im(A))+\dim(\ker(A))>(s/t)(\dim(\im(D))+\dim(\ker(D)))=(s/t)\cdot t=s$. 
This is a contradiction, so $D$ has to be invertible. 

If $D$ is invertible, then $\bB D$ is image-nondegenerate, so $A$ has to be 
invertible, as otherwise $A\bB$ would not be image-nondegenerate. 
\end{proof}
\begin{remark}\label{rem:stable}
We present some background information on the stable concept and 
Lemma~\ref{lem:stable}, for readers who have not encountered these before. Briefly 
speaking, the stable concept is a correspondence of the concept of simple as in 
representation theory of associative algebras, and Lemma~\ref{lem:stable} 
is an analogue of the Schur's lemma there. Both the stable concept here and the 
simple concept are special cases of the stable concept in geometric invariant 
theory \cite{MFK94,Kin94}, specialised to the left-right action of $\GL(s, 
q)\times \GL(t, q)$ on $M(s\times t, q)^r$, and the conjugation action of $\GL(s, 
q)$ on $M(s, q)^r$, respectively.

Specifically, consider a tuple of square matrices $\bB\in M(s, q)^r$, which can be 
understood as a representation of an associative algebra with $r$ generators. This 
representation is simple if and only if it does not have a non-trivial invariant 
subspace, 
that is $U\leq \F_q^s$, such that $\bB(U)\leq U$. This amounts to say that there 
does not exist $A\in \GL(s, q)$ such that every $B$ in $A\bB A^{-1}$ is in the 
form $\begin{bmatrix} B_1 & B_2 \\ 0 & B_3\end{bmatrix}$ where $B_i\in M(s', q)$, 
$1\leq s'\leq s-1$. On the other hand, the 
stable concept can be rephrased as the following. $\bB\in M(s\times t, q)^r$ is 
stable, if there do not exist $A\in\GL(s, q)$ and $D\in\GL(t, q)$ such that every 
$B\in A\bB D^{-1}$ is of the form $\begin{bmatrix} B_1 & B_2 \\ 0 & 
B_3\end{bmatrix}$ 
where $B_1$ is of size $s'\times t'$, $1\leq t'\leq t-1$, such that $s'/t'\leq 
s/t$. 

Lemma~\ref{lem:stable} can be understood as an analogue of Schur's lemma, which 
states that if $\bB\in M(s, q)^r$ is 
simple then a nonzero homomorphism $A\in M(s, q)$ of $\bB$ (e.g. $A\bB 
A^{-1}=\bB$) has to 
be invertible. 
\end{remark}

The proof of the following classical result was communicated to us by G. Ivanyos. 
\begin{lemma}\label{lem:fd_size}
Let $\mathfrak{A}\subseteq M(n, q)$ be a field containing $\lambda I_n$, 
$\lambda\in \F_q$. Then $|\mathfrak{A}|\leq q^n$. 
\end{lemma}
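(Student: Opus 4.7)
The plan is to exploit the fact that $\mathfrak{A}$, being a finite field containing $\F_q \cdot I_n$, is a finite extension of $\F_q$, and then use the natural $\mathfrak{A}$-module structure on $\F_q^n$ to bound its degree.

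First I would observe that since $\mathfrak{A}\subseteq M(n,q)$ is finite and contains the scalar matrices $\F_q\cdot I_n$ as a subfield, $\mathfrak{A}$ is a finite field extension of $\F_q$ of some degree $k$; so $\mathfrak{A}\cong \F_{q^k}$ and $|\mathfrak{A}|=q^k$. The goal reduces to showing $k\leq n$.

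Next I would endow $V=\F_q^n$ with the structure of an $\mathfrak{A}$-module via the tautological action of $M(n,q)$ on column vectors; this is well-defined because $\mathfrak{A}\subseteq M(n,q)$, and the identity $I_n\in \mathfrak{A}$ acts as the identity. Because $\mathfrak{A}$ is a field, every $\mathfrak{A}$-module is free, so $V\cong \mathfrak{A}^d$ for some nonnegative integer $d$. Comparing $\F_q$-dimensions on both sides of this isomorphism gives $n=\dim_{\F_q}V=d\cdot \dim_{\F_q}\mathfrak{A}=d\cdot k$, so $k$ divides $n$ and in particular $k\leq n$. Hence $|\mathfrak{A}|=q^k\leq q^n$, as desired.

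There is no real obstacle; the only subtle point worth double-checking is that the action of $\mathfrak{A}$ on $\F_q^n$ is compatible with the inclusion $\F_q\hookrightarrow \mathfrak{A}$ given by $\lambda\mapsto \lambda I_n$, which is precisely the hypothesis that $\lambda I_n\in \mathfrak{A}$ for all $\lambda\in \F_q$. This ensures that $\F_q$-scalar multiplication on $V$ agrees with multiplication by the corresponding elements of $\mathfrak{A}$, so the $\F_q$-dimension count above is legitimate.
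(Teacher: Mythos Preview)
Your proof is correct and follows essentially the same approach as the paper: view $\mathfrak{A}$ as a degree-$k$ extension of $\F_q$, regard $\F_q^n$ as an $\mathfrak{A}$-vector space, and compare $\F_q$-dimensions to conclude $k\mid n$. Your extra remark verifying compatibility of the $\F_q$-action is a nice touch but not strictly needed beyond what the paper does.
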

\begin{proof}
$\mathfrak{A}$ is an extension field of $\F_q$, and suppose its extension degree 
is $d$. Then $\F_q^n$ is an $\mathfrak{A}$-module, or in other words, a vector 
space over $\mathfrak{A}$. So $\F_q^n\cong \mathfrak{A}^m$ as 
vector spaces over $\mathfrak{A}$ for some $m\in \N$. Considering them as $\F_q$ 
vector spaces, we 
have $n=md$ so $d$ divides $n$. It follows that $|\mathfrak{A}|=q^d\leq q^n$.
\end{proof}

By Lemma~\ref{lem:stable} and~\ref{lem:fd_size}, we have the following.
\begin{proposition}\label{prop:stable_size}
If $\bB\leq M(s\times t, q)^r$ is stable, then $|\Adj(\bB)|\leq q^{s}$.
\end{proposition}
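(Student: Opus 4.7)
The plan is to show that $\Adj(\bB)$, viewed as an $\F_q$-subalgebra of $M(s,q)\oplus M(t,q)$, embeds injectively into $M(s,q)$ as a finite field containing all scalar matrices, so the bound follows immediately from Lemma~\ref{lem:fd_size}.

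First I would verify that $\Adj(\bB)$ is a unital $\F_q$-subalgebra under componentwise multiplication: if $A\bB=\bB D$ and $A'\bB=\bB D'$, then $(AA')\bB=A(\bB D')=(\bB D)D'=\bB(DD')$, and the pair $(I_s,I_t)$ is clearly in $\Adj(\bB)$. Next, using that $\bB$ is nondegenerate (both halves of nondegeneracy hold by stability, cf.\ Remark~\ref{remove_kernel degenerate}), I would argue that the projection $\pi_1\colon\Adj(\bB)\to M(s,q)$, $(A,D)\mapsto A$, is injective: if $(A,D),(A,D')\in\Adj(\bB)$, then $\bB(D-D')=0$, and kernel-nondegeneracy of $\bB$ forces $D=D'$. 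Hence it suffices to bound $|\cA|$ where $\cA:=\pi_1(\Adj(\bB))\subseteq M(s,q)$ is a unital $\F_q$-subalgebra containing $I_s$.

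By Lemma~\ref{lem:stable}, every nonzero element of $\Adj(\bB)$ is invertible in both components, so every nonzero element of $\cA$ is invertible in $M(s,q)$. Since $\cA$ is finite, for any nonzero $A\in\cA$ the sequence of powers $A,A^2,A^3,\dots$ must eventually repeat, and invertibility then gives $A^{-1}$ as a power of $A$, hence a member of $\cA$. Therefore $\cA$ is a finite division ring, and by Wedderburn's little theorem on finite division rings, $\cA$ is a field. Because $\cA$ contains $I_s$ and is $\F_q$-linear, it contains every scalar matrix $\lambda I_s$, so Lemma~\ref{lem:fd_size} applies and yields $|\cA|\leq q^s$. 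Injectivity of $\pi_1$ then gives $|\Adj(\bB)|=|\cA|\leq q^s$.

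I do not anticipate a real obstacle: the content sits entirely in Lemmas~\ref{lem:stable} and~\ref{lem:fd_size}, and the only thing to get right is the conceptual framing — identifying $\Adj(\bB)$ as an $\F_q$-algebra, injecting it into $M(s,q)$ via image-nondegeneracy, and invoking Wedderburn to pass from division ring to field. Once this viewpoint is in place the argument is essentially bookkeeping.
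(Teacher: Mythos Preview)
Your proof is correct and follows essentially the same route as the paper: project $\Adj(\bB)$ injectively into $M(s,q)$ using nondegeneracy, invoke Lemma~\ref{lem:stable} to see the image is a finite division algebra, apply Wedderburn's little theorem to get a field, and finish with Lemma~\ref{lem:fd_size}. One cosmetic slip: in your final paragraph you attribute the injectivity of $\pi_1$ to image-nondegeneracy, but your actual argument (correctly) uses kernel-nondegeneracy---since stability gives both, this is harmless.
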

\begin{proof}
As $\bB$ is stable, it is nondegenerate, so the projection of $\Adj(\bB)\leq M(s, 
q)\oplus M(t, q)$ to $M(s, q)$ (naturally embedded in $M(s, q)\oplus M(t, q)$) 
along $M(t, q)$ is injective. 
By 
Lemma~\ref{lem:stable}, the image of the projection is a finite division algebra 
over $\F_q$ containing 
$\lambda I$. So by Wedderburn's little theorem, it is a field. By 
Lemma~\ref{lem:fd_size}, the result follows.
\end{proof}

Let us also mention an easy property about stable.
\begin{claim}\label{claim:trans}
Given $\bB=(B_1, \dots, B_r)\in M(s\times t, q)^r$, let $\trans{\bB}=(\trans{B}_1, 
\dots, 
\trans{B}_r)\in M(t\times s, q)^r$. Then $\bB$ is stable if and only if 
$\trans{\bB}$ is 
stable.
\end{claim}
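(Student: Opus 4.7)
Since transposition is an involution ($\trans{(\trans{\bB})}=\bB$), it suffices to prove one direction; I will plan to show that $\trans{\bB}$ stable implies $\bB$ stable. First I would dispatch the nondegeneracy condition: for a single matrix $B \in M(s\times t, q)$ the standard pairing on $\F_q^s$ gives $\im(B) = \ker(\trans{B})^{\perp}$, and taking spans over the tuple yields $\im(\bB) = \ker(\trans{\bB})^{\perp}$ in $\F_q^s$ and, symmetrically, $\ker(\bB) = \im(\trans{\bB})^{\perp}$ in $\F_q^t$. Hence nondegeneracy transfers between $\bB$ and $\trans{\bB}$ without issue.

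The main content is the ratio condition. Fix a nontrivial subspace $U \leq \F_q^t$ and set $W := \bB(U)^{\perp} \leq \F_q^s$. The key computation is that $\trans{\bB}(W) \leq U^{\perp}$: for any $w \in W$, $u \in U$, and index $i$, we have $\langle \trans{B}_i w, u \rangle = \langle w, B_i u \rangle = 0$, so $\trans{B}_i w \in U^{\perp}$. In particular $\dim(\trans{\bB}(W)) \leq t - \dim(U)$, while $\dim(W) = s - \dim(\bB(U))$. Assuming $W$ is nontrivial in $\F_q^s$, the stability of $\trans{\bB}$ applied to $W$ gives $s \cdot \dim(\trans{\bB}(W)) > t \cdot \dim(W)$, and substituting the two dimension identities rearranges to $t \cdot \dim(\bB(U)) > s \cdot \dim(U)$, which is exactly the stability condition for $\bB$ at $U$.

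The only thing left is to rule out the cases where $W$ is trivial. If $W = \F_q^s$, then $\bB(U) = \zerovec$, so $U \leq \ker(\bB)$; but $\bB$ is kernel-nondegenerate (from the first paragraph), forcing $U = \zerovec$, which contradicts nontriviality. If $W = \zerovec$, then $\bB(U) = \F_q^s$, so $\dim(\bB(U)) = s$, and the desired inequality $\dim(\bB(U))/\dim(U) > s/t$ simplifies to $\dim(U) < t$, which again holds because $U$ is nontrivial. Neither case is really an obstacle; the only delicate bookkeeping is keeping track of the two perpendicular complements (one in $\F_q^s$, one in $\F_q^t$) so that the dimensions match up cleanly when applying stability of $\trans{\bB}$. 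The reverse direction follows by applying the argument to $\trans{\bB}$ in place of $\bB$.
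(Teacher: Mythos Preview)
Your argument is correct. The nondegeneracy transfer is fine, the containment $\trans{\bB}(W)\leq U^{\perp}$ is exactly the pairing computation you wrote, and the chain
\[
t-\dim(U)\ \geq\ \dim(\trans{\bB}(W))\ >\ \frac{t}{s}\,\dim(W)\ =\ \frac{t}{s}\bigl(s-\dim(\bB(U))\bigr)
\]
rearranges to $t\cdot\dim(\bB(U))>s\cdot\dim(U)$ as claimed. The two boundary cases for $W$ are handled correctly.

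Your route differs from the paper's. The paper argues the contrapositive via the block-matrix reformulation: if $\bB$ is not stable, one changes bases so that each $B_i$ has a zero lower-left block of shape $(s-s')\times t'$ with $s'/t'\leq s/t$, then transposes and observes that the resulting zero upper-right block witnesses a bad subspace for $\trans{\bB}$ of dimension $s-s'$, checking the arithmetic $(t-t')/(s-s')\leq t/s$. Your proof stays intrinsic: you pass from $U$ to $W=\bB(U)^{\perp}$ and apply stability of $\trans{\bB}$ directly, using the pairing to get $\trans{\bB}(W)\leq U^{\perp}$. The two are the same duality at heart (the paper's complementary block corresponds to a direct-sum complement of $\bB(U)$, whereas you use the orthogonal complement), but yours avoids choosing bases and the block bookkeeping, while the paper's version makes the ``swap to the complementary corner'' picture explicit.
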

\begin{proof}
First we consider the nondegenerate part. If $u\in \F_q^s$ satisfies 
$\bB(u)=\zerovec$, 
then it is 
easy to verify that $\trans{\bB}(\F_q^t)$ is contained in the hyperplane defined 
by $u$, 
e.g. $\trans{u}(\trans{\bB}(\F_q^s))=0$. If $\bB(\F_q^t)\neq \F_q^s$, then there 
exists some 
$u\in \F_q^s$ such that $\trans{u}(\bB(\F_q^t))=0$, so $u\in \ker(\trans{\bB})$. 
Therefore $\bB$ is nondegenerate if and only if $\trans{\bB}$ is nondegenerate.

In the following we assume that $\bB$ is nondegenerate, and check nontrivial 
subspaces to  show that $\bB$ is not stable if and only 
if $\trans{\bB}$ is not stable. This can be seen easily from the discussion in 
Remark~\ref{rem:stable}. $\bB$ is not stable, then there exist $A\in\GL(r, q)$ and 
$D\in\GL(t, q)$ 
such that every 
$B\in A\bB D^{-1}$ is of the form $\begin{bmatrix} B_1 & B_2 \\ 0 & 
B_3\end{bmatrix}$ 
where $B_1$ is of size $s'\times t'$, $1\leq t'\leq t-1$, such that $s'/t'\leq 
s/t$. Note that $s'>0$ as otherwise $\bB$ is degenerate, so $1\leq s'\leq 
\frac{s}{t}\cdot t'<s$.
Now consider $D^{-\mathrm{t}}\trans{\bB}\trans{A}$, the elements in which is of 
the 
form $\begin{bmatrix} \trans{B}_1 & 0\\ \trans{B}_2 & 
\trans{B}_3\end{bmatrix}$. Note that $\trans{B}_3$ is of size $(t-t')\times 
(s-s')$ where 
$1\leq s-s'\leq s-1$, $1\leq t-t'\leq t-1$, and $(t-t')/(s-s')\leq t/s$ (by 
$s'/t'\leq s/t$). It follows that $D^{-\mathrm{t}}\trans{\bB}\trans{A}$ is not 
stable, so $\trans{\bB}$ is not stable. This concludes the proof. 
\end{proof}

%


\section{Random alternating matrix spaces}\label{sec:naive}

For $n\in \N$, $N=\binom{n}{2}$. Recall the definition of the linear algebraic 
Erd\H{o}s-R\'enyi model, $\LinER(n, m)$ in Model~\ref{model:linER}. It turns out 
for our purpose, we can work with the following model.
%

\begin{model}[Naive models for matrix tuples and matrix spaces]
The naive model for alternating tuples, $\NaiT(n, m, q)$, is the probability 
distribution over the set of all $m$-tuples of $n\times n$ alternating tuples, 
where each tuple is endowed with probability $1/q^{N m}$.

The naive model for alternating spaces, $\NaiS(n, m, q)$, is the probability 
distribution over the set of alternating spaces in $\Lambda(n, q)$ of 
dimension $\leq m$, where the probability at some $\cG\leq \Lambda(n, q)$ of 
dimension $0\leq d\leq m$ equals the number of $m$-tuples of $n\times n$ 
alternating tuples that represent $\cG$, divided by $q^{N m}$. 
\end{model}


While we aim at analysing the algorithm in the $\LinER$ model, we will ultimately 
work with 
the naive model due to its simplicity, as 
it is just an
$m$-tuple of random $n\times n$ alternating matrices. The naive 
model for alternating spaces, $\NaiS$, then is obtained by taking the linear spans 
of such tuples. The following observation will be useful.
\begin{observation}\label{obs:naive}
Every $m$-alternating space has $(q^m-1)(q^m-q)\dots(q^m-q^{m-1})$ $m$-alternating 
tuples representing it. 
\end{observation}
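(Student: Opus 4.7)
The plan is to observe that, since an $m$-alternating space $\cG$ has dimension exactly $m$ over $\F_q$, any $m$-tuple of matrices that spans $\cG$ must in fact be a \emph{basis} of $\cG$: a spanning set of size equal to the dimension cannot contain any linear dependencies. So the $m$-alternating tuples representing $\cG$ are in bijection with the ordered bases of $\cG$ as an $\F_q$-vector space.

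Next, I would count ordered bases of an $m$-dimensional $\F_q$-vector space by the standard incremental argument. Fix an identification $\cG\cong \F_q^m$. Choose the first basis vector to be any nonzero element of $\cG$; there are $q^m - 1$ choices. Having chosen $i-1$ linearly independent vectors, the next one can be any element of $\cG$ outside their linear span, giving $q^m - q^{i-1}$ choices. Multiplying over $i = 1, \dots, m$ yields
\[
(q^m - 1)(q^m - q)\cdots(q^m - q^{m-1}),
\]
which is precisely $|\GL(m, q)|$, as expected since ordered bases of $\cG$ form a torsor under $\GL(m, q)$ acting by change of basis.

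There is essentially no obstacle; the only thing to be slightly careful about is the convention that an $m$-tuple ``represents'' $\cG$ if its entries span $\cG$ (as stated in Section~\ref{sec:matrix}, not necessarily a basis). The key step is therefore just to note that the spanning condition collapses to the basis condition when the number of matrices matches the dimension, after which the count is the standard formula for $|\GL(m, q)|$.
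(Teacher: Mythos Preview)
Your proof is correct and is exactly the standard argument the paper has in mind; indeed, the paper states this as an observation without proof, so there is nothing further to compare.
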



We now justify that working with the naive model suffices for 
the analysis even in the linear algebraic Erd\H{o}s-R\'enyi model. 
Consider the following setting. Suppose we have $E(n, m, q)$, a 
property of 
dimension-$m$ alternating spaces in 
$\Lambda(n, q)$, and wish to show that $E(n, m, q)$ holds with high probability in 
$\LinER(n, m, q)$. 
$E(n, m, q)$ naturally induces $E'(n, m, q)$, a property of alternating 
tuples in $\Lambda(n, q)^m$ that span dimension-$m$ alternating spaces. 
It is usually the case that there exists 
a property $F(n, m, q)$ of all $m$-alternating tuples in $\Lambda(n, q)^m$, so 
that 
$F(n, m, q)$ and $E'(n, m, q)$ coincide when restricting to those alternating 
tuples spanning dimension-$m$ matrix spaces. If we could 
prove 
that $F(n, m, q)$ holds with high probability, then since a nontrivial fraction 
of $m$-tuples do span 
dimension-$m$ spaces, 
we would get that $E(n, 
m, q)$ holds with high probability 
as well. The following proposition summarises and makes precise the above 
discussion. 

\begin{proposition}\label{prop:models}
Let $E(n, m, q)$ and $F(n, m, q)$ be as above. 
Suppose in $\NaiT(n, m, q)$, $F(n, m, q)$ happens with probability $\geq 1-f(n, m, 
q)$ where $0\leq f(n, m, q)< 1$. Then in $\LinER(n, m, q)$, $E(n, m, q)$ 
happens with probability $> 1- 4\cdot f(n, m, q)$.
\end{proposition}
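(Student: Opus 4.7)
The plan is to transfer the probability bound from $\NaiT(n,m,q)$ to $\LinER(n,m,q)$ by conditioning on the event that a uniformly random $m$-tuple of alternating matrices is linearly independent. The key observation is that $\LinER(n,m,q)$ arises naturally as the push-forward, through the span map, of $\NaiT(n,m,q)$ conditioned on spanning a full $m$-dimensional subspace.

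First I would identify $\Lambda(n,q)$ with $\F_q^N$ via the strictly-upper-triangular entries, so that sampling from $\NaiT(n,m,q)$ is the same as sampling a uniformly random matrix $M \in M(N \times m, q)$, with the columns of $M$ forming the tuple. Let $B$ be the event that this tuple spans an $m$-dimensional space, i.e.\ that $\mathrm{rk}(M) = m$. By Fact~\ref{fact:prel}(3), $\Pr[B] > 1/4$.

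Next I would argue that the distribution $\NaiT(n,m,q)$ conditioned on $B$, when pushed forward through the span map, is exactly $\LinER(n,m,q)$. This uses Observation~\ref{obs:naive}: each $m$-dimensional alternating space has exactly $(q^m-1)(q^m-q)\cdots(q^m-q^{m-1})$ spanning $m$-tuples, so each span is hit by the same number of rank-$m$ tuples and hence with equal probability after conditioning.

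The conclusion is then a one-line calculation. Writing $A$ for the event that $F(n,m,q)$ holds under $\NaiT(n,m,q)$, we have $\Pr[\neg A] \leq f(n,m,q)$ by hypothesis, and therefore
\[
\Pr[\neg A \mid B] \;\leq\; \frac{\Pr[\neg A]}{\Pr[B]} \;<\; 4\,f(n,m,q).
\]
Since $F$ and $E'$ agree on the event $B$, and since $E'$ holds for a tuple if and only if $E$ holds for its span, the left-hand side equals $\Pr_{\LinER(n,m,q)}[\neg E(n,m,q)]$, giving the claimed bound. There is no real obstacle here beyond correctly identifying the conditional distribution with $\LinER$; the uniform counting via Observation~\ref{obs:naive} is precisely what makes this identification work.
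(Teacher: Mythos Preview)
Your proposal is correct and follows essentially the same approach as the paper. The paper phrases the argument as a direct counting (bounding the number of bad tuples, dividing by the number of tuples per $m$-dimensional space via Observation~\ref{obs:naive}, then dividing by the total number of $m$-spaces), whereas you phrase it as a conditional-probability estimate $\Pr[\neg A\mid B]\leq \Pr[\neg A]/\Pr[B]$; unwinding either version yields exactly the ratio $f(n,m,q)\cdot q^{Nm}/[(q^N-1)\cdots(q^N-q^{m-1})]$, with the factor $4$ coming from Fact~\ref{fact:prel}(3) in both.
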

\begin{proof}
The number of tuples for which $F(n, m, q)$ fails is no larger than $f(n,m,q)\cdot q^{N m}$. 
Clearly the bad situation for $E'(n, m, q)$ is when each of them spans an 
$m$-alternating space, so we focus on this case. Recall that $E'(n, m, 
q)$ is induced from a property of $m$-alternating spaces. That is, if two tuples 
span the same $m$-alternating space, then either both of them satisfy $E'(n, m, 
q)$, or neither of them satisfies $E'(n, m, q)$. By Observation~\ref{obs:naive}, 
the number of 
$m$-alternating spaces for which $E(n, m, q)$ fails is $\leq f(n, m, q)\cdot 
\frac{q^{N m}}{(q^m-1)(q^m-q)\dots(q^m-q^{m-1})}$. The fraction of $m$-alternating 
spaces for which $E(n, m, q)$ fails is then $\leq f(n, m, q)\cdot \frac{q^{N 
m}}{(q^N-1)(q^N-q)\dots (q^N-q^{N-m+1})}< 4\cdot f(n, m, q)$ where $4$ comes 
from 
Fact~\ref{fact:prel} (3).
\end{proof}

\subsection{Random matrix spaces}\label{subsec:random_bip}

For $s, t, r \in \Z^+$, we can define the Erd\H{o}s-R\'enyi model for 
bipartite graphs on the vertex set $[s]\times [t]$ with edge set size $r$ by 
taking every subset of $[s]\times [t]$ of size $r$ with probability 
$\binom{st}{r}$. Analogously we can define the following in the matrix space 
setting.
\begin{enumerate}
\item The bipartite linear algebraic Erd\H{o}s-R\'enyi model $\BipLinER(s\times t, 
r, q)$: each $r$-matrix space in $M(s\times t, q)$ is chosen with probability 
$1/\gbinom{st}{r}{q}$.
\item The bipartite naive model $\BipNaiT(s\times t, r, q)$
for matrix tuples: each 
$r$-matrix tuple in $M(s\times t, q)^r$ is chosen with probability $1/q^{str}$.
\item The bipartite naive model $\BipNaiS(s\times t, r, q)$ for matrix spaces: each 
matrix space $\cB$ of dimension $d$, $0\leq d\leq r$ in $M(s\times t, q)$, is 
chosen with probability $a/q^{str}$ where $a$ 
is the number of $r$-matrix 
tuples representing $\cB$.
\end{enumerate}

\section{The main algorithm}\label{sec:main_algo}


We will first define the property $F(n, m, q, r)$ for the average-case analysis in 
Section~\ref{subsec:property}. To lower bound the probability of $F(n, m, q, r)$ 
we will actually work 
with a stronger property $F'(n, m, q, r)$ in Section~\ref{subsubsec:prob}. Given 
this property we describe and 
analyse the main algorithm in Section~\ref{subsec:main_algo}. It should be noted 
that the algorithm here differs slightly from the outline from 
Section~\ref{subsec:outline}, as there we wanted to reduce some technical details. 

\subsection{Some properties of alternating spaces and alternating 
tuples}\label{subsec:property}

An $m$-alternating space $\cG\leq \Lambda(n, q)$ 
induces $\cB'=\{[e_1, \dots, e_r]^t G [e_{r+1}, \dots, e_n] : G\in \cG\}$ which is 
a matrix space in $M(r\times (n-r), q)$ of dimension no more than $m$. Define 
$\RAut(\cB'):=\{A\in \GL(n-r, q) : \cB' A=\cB'\}$. An element in $\RAut(\cB')$ is 
called a right-side equivalence of $\cB'$.
%
%

\begin{definition}
$E'(n, m, q, r)$ is a property of $m$-alternating spaces in $\Lambda(n, q)$, 
defined as 
follows. Given an $m$-alternating space $\cG$ in $\Lambda(n, q)$, let $\cB'$ be 
the matrix space in $M(r\times (n-r), q)$ defined as above. $\cG$  belongs to 
$E'(n,m,q, r)$, if and only if $|\RAut(\cB')|\leq q^{n-r}$.
\end{definition}

Right-side equivalence is a useful concept that leads to our algorithm (as seen 
in Section~\ref{subsec:ind}), but 
what we actually need is the following linearisation of 
$\RAut(\cB')$. 

\begin{definition}
$E(n, m, q, r)$ is a property of $m$-alternating spaces in $\Lambda(n, q)$, 
defined as 
follows. Given an $m$-alternating space $\cG$ in $\Lambda(n, q)$, let $\cB'$ be 
the matrix space in $M(r\times (n-r), q)$ defined as above. $\cG$  belongs to 
$E(n,m,q, r)$, if and only if $|\{A\in M(n-r, q) : \cB'A\leq \cB'\}|\leq 
q^{n-r}$.
\end{definition}

We define a property $F(n, m, q, r)$ for alternating tuples that corresponds to 
$E(n, m, q, r)$. Given $\bG=(G_1, \dots, G_m)\in \Lambda(n, q)^m$, we can 
construct a matrix tuple 
$\bB'=([e_1, \dots, 
e_r]^t 
B_1 [e_{r+1}, \dots, e_n], \dots,$ $[e_1, \dots, e_r]^t B_m [e_{r+1}, \dots, 
e_n])$ in $M(r\times (n-r), q)^m$. 
\begin{definition}\label{def:property_F}
$F(n, m, q, r)$ is a property of $m$-alternating tuples in $\Lambda(n, q)^m$, 
defined as 
follows. Given an $m$-alternating tuple $\bG$ in $\Lambda(n, q)^m$, let $\bB'$ 
be the $m$-matrix tuple in $M(r\times (n-r), q)^m$ defined as above. $\bG$  
belongs to 
$E(n,m,q, r)$, if and only if $|\{A\in M(n-r, q):~\exists D\in M(m, q),~ 
\bB'A=\bB'^D\}|\leq 
q^{n-r}$.
\end{definition}

It is not hard to see that $F(n, m, q, r)$ is a proper extension of $E(n, m, q, 
r)$.
\begin{proposition}
Suppose $\bG\in \Lambda(n, q)^m$ represents an $m$-alternating space 
$\cG\leq \Lambda(n, q)$. Then $\bG$ is in $F(n, m, q, r)$ if and only if $\cG$ is 
in $E(n, m, q, r)$. 
\end{proposition}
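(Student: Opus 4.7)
The plan is to show that the two sets being compared in $E(n,m,q,r)$ and $F(n,m,q,r)$ are in fact literally equal, from which the biconditional is immediate.

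First I would unpack what the condition $\bB'A=\bB'^D$ in the definition of $F$ really says. Writing $\bB'=(B'_1,\dots,B'_m)$, the identity $\bB'A=\bB'^D$ with $D=(D_{k,i})\in M(m,q)$ means that for every $k\in[m]$, $B'_kA=\sum_{i\in[m]} D_{k,i}B'_i$. In other words, $A\in M(n-r,q)$ admits such a $D$ if and only if each $B'_kA$ lies in the linear span of $\{B'_1,\dots,B'_m\}$.

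Next I would use the hypothesis that $\bG$ represents $\cG$. By definition this means that $G_1,\dots,G_m$ span $\cG$, and applying the linear projection $G\mapsto [e_1,\dots,e_r]^tG[e_{r+1},\dots,e_n]$ gives that $B'_1,\dots,B'_m$ span $\cB'$. Consequently, $B'_kA\in\cB'$ for every $k\in[m]$ is equivalent to $\cB'A\subseteq\cB'$, which in turn (since $\cB'A$ has the same dimension as $\cB'$ when $A$ is the relevant type, and in any case for inclusion of subspaces in a finite-dimensional space we do not need equality of dimensions to write $\leq$) is exactly the condition $\cB'A\leq\cB'$ appearing in the definition of $E$.

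Combining these two observations, the set
\[
\{A\in M(n-r,q) : \exists D\in M(m,q),\ \bB'A=\bB'^D\}
\]
coincides with
\[
\{A\in M(n-r,q) : \cB'A\leq\cB'\}.
\]
Therefore the two cardinalities in the definitions of $F(n,m,q,r)$ and $E(n,m,q,r)$ agree, and the bound $\leq q^{n-r}$ holds for one if and only if it holds for the other. This yields $\bG\in F(n,m,q,r)\iff\cG\in E(n,m,q,r)$. There is no real obstacle here; the only point that requires a line of justification is that $\bG$ representing $\cG$ passes to $\bB'$ spanning $\cB'$, which is immediate from linearity of the corner projection.
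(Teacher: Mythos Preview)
Your proof is correct and follows essentially the same approach as the paper's own proof: both arguments observe that $\bB'$ spans $\cB'$ (by linearity of the corner projection applied to $\bG$ spanning $\cG$), and that $\bB'A=\bB'^D$ for some $D\in M(m,q)$ is equivalent to each $B'_kA$ lying in the span of the $B'_i$'s, i.e.\ to $\cB'A\leq\cB'$. The parenthetical remark about dimensions is unnecessary but harmless.
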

\begin{proof}
Let $\cB'$ and $\bB'$ be the matrix space and matrix tuple defined as above for 
$\cG$ and $\bG$, respectively. Clearly $\bB'$ represents $\cB'$, so $\bB' A$ 
represents $\cB' A$. Finally note that $\bB' A=\bB'^D$ for some $D\in M(n, q)$ if 
and only if the linear span of $\bB' A$ is contained in the linear span of 
$\bB'$, that is $\cB'A\leq \cB'$.
\end{proof}

Instead of working with $\bB'$ and $\{A\in \GL(n-r, q): \exists D\in \GL(m, q), 
\bB' A=\bB'^D \}$, it is more convenient to flip $\bB'$, an $m$-matrix tuple 
of size $r\times (n-r)$, to get $\bB$, an $r$-matrix tuple of size $(n-r)\times 
m$. Then $\{A\in M(n-r, q): \exists D\in M(m, q), \bB' A=\bB'^D\}=\{A\in 
M(n-r, q): \exists D\in M(m, q), A\bB = \bB D\}$. The latter is closely related to 
the 
adjoint algebra concept for matrix tuples as defined in Section~\ref{sec:matrix}.
Recall that $\Adj(\bB)=\{(A, D)\in M(n-r, q)\oplus M(m, q) : A\bB = \bB D\}$. Let 
$\pi_1: M(n-r, q)\oplus M(m, q)\to M(n-r, q)$ be the projection to the first 
component along the second. 
$\{A\in M(n-r, q): \exists D\in M(m, q), A\bB = \bB D\}$ is then just 
$\pi_1(\Adj(\bB))$. So Definition~\ref{def:property_F} is equivalent to the 
following. 

\medskip
\noindent{\bf Definition~\ref{def:property_F}, alternative formulation.} $F(n, m, 
q, 
r)$ is a property of $m$-alternating tuples in $\Lambda(n, q)^m$, 
defined as 
follows. Given an $m$-alternating tuple $\bG$ in $\Lambda(n, q)^m$, let $\bB$ 
be the $m$-matrix tuple in $M((n-r)\times m, q)^r$ defined as above. $\bG$  
belongs to 
$E(n,m,q, r)$, if and only if $|\pi_1(\Adj(\bB))|\leq q^{n-r}$.
\medskip

Our algorithm will be based on the property $F(n, m, q, r)$. To show that 
$F(n, m, q, r)$ holds with high probability though, we turn to study 
the following stronger property. 

\begin{definition}\label{def:property_F_prime}
$F'(n, m, q, r)$ is a property of $m$-alternating tuples in $\Lambda(n, q)^m$, 
defined 
as follows. Given an $m$-alternating tuple $\bG$ in $\Lambda(n, q)^m$, let $\bB$ 
be the $r$-matrix tuple in $M((n-r)\times m, q)^r$ defined as above. $\bG$ 
 belongs to $F'(n, 
m, 
q, r)$, if and only if $|\Adj(\bB)|\leq q^{n-r}$.
\end{definition}

Clearly $F'(n, m, q, r)$ implies $F(n, m, q, r)$. To show that $F'(n, m, q, r)$ 
holds with high probability, Proposition~\ref{prop:stable_size} immediately 
implies the 
following, which directs us to make use of the stable property.
\begin{proposition}\label{prop:stable}
Let $\bG$ and $\bB$ be defined as above. If $\bB$ is stable, then $\bG\in F'(n, 
m, q, r)$.
\end{proposition}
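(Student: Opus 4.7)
The plan is to directly invoke Proposition~\ref{prop:stable_size}, which is the main structural result already proved in Section~\ref{sec:matrix}. First I would unpack the definitions: the matrix tuple $\bB$ associated to $\bG$ lives in $M((n-r) \times m, q)^r$, so when applying Proposition~\ref{prop:stable_size} with its generic dimensions $(s, t, r)$, the instantiation is $s = n-r$, $t = m$, and the tuple length parameter also equals $r$. The condition $F'(n, m, q, r)$ demands $|\Adj(\bB)| \leq q^{n-r}$, and the conclusion of Proposition~\ref{prop:stable_size} gives $|\Adj(\bB)| \leq q^{s}$, which is precisely $q^{n-r}$ under this instantiation.

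Therefore the entire argument is a one-line parameter match: assume $\bB$ is stable, apply Proposition~\ref{prop:stable_size} with $s = n-r$ and $t = m$, and read off the bound. There is no additional obstacle because all of the real content (Lemma~\ref{lem:stable} giving invertibility of nonzero elements of $\Adj(\bB)$, Wedderburn's little theorem to conclude the projection lands in a field, and Lemma~\ref{lem:fd_size} to bound the size of that field) has already been carried out in establishing Proposition~\ref{prop:stable_size}.

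The only point worth a sentence of comment is that the definition of $\bB$ as the flipped/sliced tuple in $M((n-r) \times m, q)^r$ is consistent between Definition~\ref{def:property_F_prime} and the setup needed by Proposition~\ref{prop:stable_size}, which is immediate once one recalls that $\Adj(\bB) \subseteq M(n-r, q) \oplus M(m, q)$ matches the definition of the adjoint algebra in Section~\ref{sec:matrix}. Hence the proposition follows with no further work, and the substantive task deferred to later in the paper is showing that the stability hypothesis on $\bB$ actually holds with high probability in $\NaiT(n, m, q)$ for the relevant parameter range.
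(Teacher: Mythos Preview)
Your proposal is correct and matches the paper's approach exactly: the paper states that Proposition~\ref{prop:stable_size} ``immediately implies'' Proposition~\ref{prop:stable} and gives no further argument, and your parameter instantiation $s=n-r$, $t=m$ is precisely the one-line check needed.
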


\subsubsection{Estimating the probability for the property $F'(n, m, q, 
r)$}\label{subsubsec:prob}

We now show that $F'(n, m, q, r)$ holds with high probability, when 
$m=cn$ for some 
constant $c$ with an appropriate choice of $r$ depending on $c$.
The integer $r$ is chosen so  
that $r\geq 4\cdot \frac{n-r}{m}$ if $n-r\geq m$, and $r\geq 4\cdot \frac{m}{n-r}$ 
if $m\geq n-r$. When $n$ is large enough this is always possible. 
For example, if $c\geq 1$, let $r$ be any integer $\geq 5c$, which ensures that 
$r\geq 4\cdot 
\frac{m}{n-r}$ if $n\geq 25c$. If $0<c<1$, let $r$ be an integer $\geq 5/c$. If 
$n-5/c\geq m$, then $r\geq 5\cdot \frac{n-r}{m}\geq 4\cdot \frac{n-r}{m}$. If 
$n-5/c<m$, then $r\geq 5\cdot \frac{n-r}{m}\geq 4\cdot \frac{m}{n-r}$ if $n\geq 
\frac{r}{1-2c/\sqrt{5}}$. 

Let $s=n-r$ and 
$t=m$. 
By Proposition~\ref{prop:stable}, to show $F'(n, m, q, r)$ holds with high 
probability, we can show that for most $\bG$ from 
$\NaiT(n, m, q)$, the 
corresponding 
$\bB$ in $M(s\times t, q)^r$ is stable. A simple observation is that $\NaiT(n, m, 
q)$ induces $\BipNaiT(s\times t, r, q)$ obtained by flipping the upper right 
$s\times t$ corners of 
the alternating matrices (see Figure~\ref{fig:flip}). So we reduce to estimate the 
probability of an $r$-matrix tuple $\bB$ in 
$M(s\times t, q)^r$ being stable in the model $\BipNaiT(s\times t, r, q)$.



By our choice of $r$, we obtain an $r$-matrix tuple $M(s\times t, q)$ 
with $r\geq 4\cdot \frac{\max(s, t)}{\min(s, t)}$. By Claim~\ref{claim:trans}, 
we know $\Pr[\bB \text{ is 
stable in }\BipNaiT(s\times t, r, 
q)]=\Pr[\bC \text{ is stable in }\BipNaiT(t\times s, r, q)]$ via the transpose 
map. So it is enough to consider the case when $s\geq t$. 

\begin{proposition}\label{prop:key}
Give positive integers $s$, $t$, and $r$ such that $s\geq t\geq 16$, 
$\frac{s}{t}=b\geq 1$, and $r\geq 
4\cdot \frac{s}{t}$. Then 
$\bB$ is stable with probability $1-\frac{1}{q^{\Omega(t)}}$ in $\BipNaiT(s\times 
t, r, q)$, where $\Omega(t)$ hides a positive constant depending on $b$.
\end{proposition}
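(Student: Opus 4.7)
The plan is to bound the failure of stability by a union bound over pairs $(U,W)$ of subspaces witnessing the failure, and reduce the combinatorics to a single exponent whose sign is controlled by the slack $r \geq 4b$. Using the remark that allows replacing nondegeneracy with image-nondegeneracy in Definition~\ref{def:stable}, non-stability of $\bB$ splits into two cases: (i) $\bB(\F_q^t) \subsetneq \F_q^s$; and (ii) there exists a nontrivial $U \leq \F_q^t$ of dimension $d \in \{1,\ldots,t-1\}$ with $\dim \bB(U) \leq bd$, equivalently $\dim \bB(U) \leq e_0(d) := \lfloor bd \rfloor$.

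Case~(i) is handled by fixing a hyperplane $W \subset \F_q^s$: since each of the $r$ matrices independently sends $\F_q^t$ into $W$ with probability $q^{-rt}$, union bounding over the at most $q^s$ hyperplanes and invoking $r \geq 4b$ and $s = bt$ gives a contribution of at most $q^{s - rt} \leq q^{-3t}$. For case~(ii) I union bound over pairs $(U, W)$ of dimensions $(d, e_0)$ with $\bB(U) \subseteq W$. The core probability identity is $\Pr[\bB(U) \subseteq W] = q^{-rd(s - e)}$ for fixed $U, W$ of dimensions $d, e$, obtained since each $B_i$ restricted to $U$ is a uniformly random linear map $U \to \F_q^s$ and lands in $W$ independently with probability $(q^e/q^s)^d$. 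Combined with the standard estimate $\gbinom{n}{k}{q} = O(q^{k(n-k)})$, the contribution from a fixed $d$ is at most
\[
O\bigl(q^{d(t-d) + e_0(s - e_0) - rd(s-e_0)}\bigr) \;=\; O\bigl(q^{d(t-d) - (s-e_0)(rd - e_0)}\bigr).
\]

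To conclude I need the above exponent to be $\leq -\Omega(t)$ uniformly in $d$. From $e_0 \leq bd$ and $r \geq 4b$ I read off $rd - e_0 \geq 3bd$ and $s - e_0 \geq b(t-d)$, whence $(s - e_0)(rd - e_0) \geq 3b^2 d(t-d)$; thus the exponent is at most $d(t-d)(1 - 3b^2) \leq -2 d(t-d) \leq -2(t-1)$ for $b \geq 1$ and $d \in \{1, \ldots, t-1\}$. Summing the $t-1$ resulting terms yields the desired $q^{-\Omega(t)}$ bound. The main delicacy — and the reason the hypothesis $r \geq 4b$ is quantitatively stronger than the bare $r > b$ forced by the stable condition — is obtaining enough slack in both factors $rd - e_0$ and $s - e_0$ \emph{simultaneously} to absorb the $d(t-d)$ subspace count at the boundary values $d \in \{1, t-1\}$, where $d(t-d) = t-1$ is smallest; the factor $4$ in the hypothesis is tuned precisely to yield the $3b^2$ coefficient that dominates the single $d(t-d)$ cost. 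The condition $t \geq 16$ then absorbs the $O(1)$ losses from the Gaussian binomial estimate and from the sum over $d$.
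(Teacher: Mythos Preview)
Your proof is correct and is in fact cleaner than the paper's. The paper also union-bounds over dimensions $d$, but instead of pairing with a target subspace $W$ of dimension $e_0 = \lfloor bd\rfloor$, it rephrases $\Pr[\dim \bB(U_d)\leq bd]$ as a rank condition on the $s\times rd$ matrix of first-$d$ columns, bounds this by a crude column-selection count $\binom{rd}{e_0}q^{se_0}q^{e_0(rd-e_0)}/q^{srd}$, and then analyses the resulting quadratic-in-$d$ exponent by splitting into the ranges $1\leq d\leq t/2$, $t/2\leq d\leq t-3$, with $d=t-2,t-1$ handled separately via a rank argument. Your pairing with $W$ and the factorisation $(s-e_0)(rd-e_0)\geq 3b^2 d(t-d)$ collapses all of this into a single line: the exponent is at most $d(t-d)(1-3b^2)\leq -2(t-1)$ uniformly, so no case split is needed. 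The paper's route gains nothing in return; your version is simply more direct.

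One small wording slip: you write ``each of the $r$ matrices independently sends $\F_q^t$ into $W$ with probability $q^{-rt}$''; the per-matrix probability is $q^{-t}$ and $q^{-rt}$ is the joint probability. The computation that follows is correct regardless.
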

\begin{proof}
We will upper bound the probability of $\bB$ being not stable in $\BipNaiT(s\times t, 
r, 
q)$, which is
\begin{equation*}
P=\Pr[\bB~\text{is degenerate, or }\exists U\leq \F_q^t, U \text{ non-trivial}, 
\frac{{\dim}(\bB(U))}{\dim(U)}\leq \frac{s}{t}].
\end{equation*}
By the union bound, we have:
\begin{equation*} 
P\leq \sum_{U\leq \F_q^{t},1\leq \dim(U)\leq 
t-1}\Pr[\frac{{\dim}(\bB(U))}{\dim(U)}\leq \frac{s}{t} ]+\Pr[\bB~{\rm 
is~degenerate}].
\end{equation*}

\paragraph{About $\bB$ being degenerate.} 
By Remark~\ref{remove_kernel degenerate}, we only need to bound $\Pr[\bB 
\text{ is image-degenerate}]$. Noticing that $\im(\bB)$ is spanned by the columns of $B_i$'s, by 
forming an $s\times rt$ matrix $A=[B_1,B_2,\dots,B_r]$, this 
amounts to upper bound the probability that $\Pr[ \rk(A)<s | A\in M(s\times rt, 
q)]$. As $rt\geq 4bt= 4s$, $$\Pr[\rk(A)=s | A\in M(s\times rt, q)]\geq \Pr[\rk(A)=s 
| A\in M(s\times 4s, q)]\geq 1-s/q^{3s+1},$$ where the last inequality is from 
Fact~\ref{fact:prel} (2). So we have $\Pr[\bB\text{ is image-degenerate}]\leq 
1/q^{\Omega(t)}$ since $s=bt$.

\paragraph{Reduce to work with nontrivial subspaces according to the 
dimension 
$d$.} Now we focus on $\sum_{U\leq \F_q^{t}, 1\leq \dim(U)\leq 
t-1}\Pr[\frac{{\dim}(\bB(U))}{\dim(U)}\leq \frac{s}{t}]$ in the following.
For a nontrivial subspace $U\leq \F_q^t$, let $$B_U=\{\bB\in M(s\times 
t, q)^r:\frac{{\dim}(\bB(U))}{\dim(U)}\leq\frac{s}{t}\}.$$ Consider two subspace 
$U_1,U_2\leq \F_{q}^t$ of the same dimension $1\leq d\leq t-1$. We claim that 
$|B_{U_1}|=|B_{U_2}|$.
Let $X\in\GL(t,q)$ be any invertible matrix such that $X(U_2)=U_1$, and consider 
the map $T_X:M(s\times t, q)^r\to M(s\times t, q)^r$ defined by sending $\bB$ to 
$\bB X$. 
It is easy to verify that $T_X$ is a bijection between $B_{U_1}$ and 
$B_{U_2}$.
The claim then follows and we have $\Pr_\bB[\frac{{\dim}(\bB(U_1))}{\dim(U_1)}\leq 
\frac{s}{t}]=\Pr_\bB[\frac{{\dim}(\bB(U_2))}{\dim(U_2)}\leq \frac{s}{t}]$. So 
setting $U_d=\langle 
e_1,\dots,e_d\rangle$, we have 
\begin{equation*}
\sum_{U\leq \F_q^{t},1\leq \dim(U)\leq t-1}\Pr[\frac{{\dim}(\bB(U))}{\dim(U)}\leq 
\frac{s}{t}]=\sum_{1\leq d\leq 
t-1}\gbinom{t}{d}{q}\Pr[{\dim}(\bB(U_d))\leq 
\frac{s}{t}\cdot d].
\end{equation*}

\paragraph{Upper bound $\gbinom{t}{d}{q}\Pr[{\dim}(\bB(U_d))\leq 
\frac{s}{t}\cdot d]$.}
For $1\leq d\leq t-1$, let $P_d=\Pr[{\dim}(\bB(U_d))\leq \frac{s}{t}\cdot d]$. For any matrix 
$B\in 
M(s\times t, q)$, $B(U_d)$ is spanned by the first $d$ column vectors of $B$. So 
for $\bB=(B_1, \dots, B_r)\in M(s\times t, q)^r$, $\bB(U_d)$ is spanned by the 
first $d$ columns of $B_i$'s. Collect those columns to form a matrix $A\in 
M(s\times rd, q)$, and we have 
\begin{equation}\label{equivalentform}
P_d=\Pr[{\dim}(\bB(U_d))\leq bd]=\Pr[\rk(A)\leq \lfloor bd\rfloor|A \in M(s\times 
rd,q)].
\end{equation}
Note that in the above we substituted $bd$ with $\lfloor bd\rfloor$ as that does 
not change the probability.

Equation \ref{equivalentform} suggests the following upper bound of $P_d$. For $A$ 
to be of rank $\leq \lfloor bd\rfloor$, there must exist $\lfloor bd\rfloor$ 
columns such that other columns are linear combinations of them. So we enumerate 
all subsets of the columns of size $\lfloor bd\rfloor$, fill in these columns arbitrarily, 
and let other columns be linear combinations of them. This shows that 
\begin{equation}\label{upper_bound}
P_d\leq \frac{\binom{rd}{\lfloor bd\rfloor}\cdot q^{s\lfloor bd\rfloor}\cdot 
q^{\lfloor bd\rfloor(rd-\lfloor bd\rfloor)}}{q^{srd}}.
\end{equation}

When $1\leq d\leq t/2$, we have
\begin{equation}\label{1-t/2}
\begin{split}
\gbinom{t}{d}{q}P_d&\leq\frac{\binom{rd}{\lfloor bd\rfloor}\cdot q^{s\lfloor bd\rfloor}\cdot q^{\lfloor bd\rfloor(rd-\lfloor bd\rfloor)}\cdot \gbinom{t}{d}{q}}{q^{srd}}\\
&\leq\frac{q^{rd}\cdot q^{sbd}\cdot q^{bd(rd-bd)}\cdot q^{td}}{q^{srd}}\\
&\leq \frac{1}{q^{(sr-sb-t-r)d-b(r-b)d^2}},
\end{split}
\end{equation}
where in the second inequality, we use $\binom{rd}{\lfloor bd\rfloor}\leq 
2^{rd}\leq q^{rd}$, $\gbinom{t}{d}{q}\leq q^{td}$ and $\lfloor 
bd\rfloor(rd-\lfloor bd\rfloor)\leq bd(rd-bd)$ since $r\geq 4b$.

Let $f(d)=(sr-sb-t-r)d-b(r-b)d^2$. 
It is easy to see that 
$f(d)$ achieves minimum at $d=1$ or $d=t/2$ in the interval 
$1\leq d\leq \frac{t}{2}$. We have 
$f(1)=(br-b^2-1)t+b^2-r-br$
and
$f(\frac{t}{2})=(\frac{1}{4}br-\frac{1}{4}b^2-\frac{1}{2})t^2-\frac{1}{2}rt$.
Since $r\geq 4b$ and $b\geq 1$, $br-b^2-1\geq 3b^2-1>0$ and 
$\frac{1}{4}br-\frac{1}{4}b^2-\frac{1}{2}\geq\frac{3}{4}b^2-\frac{1}{2}>0$. These 
two lower bounds then yield that 
$\gbinom{t}{d}{q}P_d\leq \frac{1}{q^{\Omega(t)}}$	
for $1\leq d\leq t/2$.
\medskip

When $t/2\leq d\leq t-3$, we replace $\gbinom{t}{d}{q}$ by $\gbinom{t}{t-d}{q}$ in Inequality~\ref{1-t/2} and obtain
\begin{equation*} 
\begin{split}
\gbinom{t}{d}{q}P_d
&\leq\frac{1}{q^{(sr-sb+t-r)d-b(r-b)d^2-t^2}},
\end{split}
\end{equation*}

It can be seen easily that the function $g(d)=(sr-sb+t-r)d-b(r-b)d^2-t^2$ achieves 
minimum at either 
$d=t/2$ or $d=t-3$. We have 
$g(\frac{t}{2})=f(\frac{t}{2})=(\frac{1}{4}br-\frac{1}{4}b^2-\frac{1}{2})t^2
-\frac{1}{2}rt$
and
$g(t-3)=(3br-3b^2-r-3)t+3r+9b^2-9br$.
Since $r\geq 4b$ and $b\geq 1$,
$\frac{3}{4}b^2-\frac{1}{2}>0$ and $9b^2-4b-3>0$ when $b\geq 1$. These two lower bounds then yield that 
$\gbinom{t}{d}{q}P_d\leq \frac{1}{q^{\Omega(t)}}$ for $t/2\leq d\leq t-3$.

For $d=t-2$ and $t-1$, we use the method for the nondegenerate part. 
Recall that
$P_{t-2}=\Pr[\rk(A)\leq  b(t-2) | A\in M(s\times r(t-2),  q)]$.
When $t\geq 16$ (i.e. $s\geq 16b$), $r(t-2)\geq 4b(t-2)\geq \lceil\frac{7}{2}s\rceil$. 
Also 
note that $b(t-2)< bt=s$. 
Therefore $P_{t-2}\leq \Pr[\rk(A)< s | A\in M(s\times \lceil\frac{7}{2}s\rceil, 
q)]$, which is $\leq s/q^{\frac{5}{2}s+1}$ by Fact~\ref{fact:prel} (2). Then 
$\gbinom{t}{t-2}{q}P_{t-2}\leq \frac{sq^{2t}}{q^{\frac{5}{2}s+1}}=\frac{bt}{q^{(\frac{5}{2}b-2)t+1}}\leq 
\frac{1}{q^{\Omega(t)}}$. The case when $d=t-1$ is similar, and we can obtain $\gbinom{t}{t-1}{q}P_{t-1}\leq \frac{1}{q^{\Omega(t)}}$ as well. This concludes the proof.
\end{proof}

\subsection{The algorithm}\label{subsec:main_algo}

We now present a detailed description and analysis of the main algorithm and prove 
Theorem~\ref{thm:main}. 

As described in Section~\ref{subsec:outline}, the concept of $r$-individualisation 
is a key technique in the algorithm. Recall that an $r$-individualisation is a 
direct 
sum decomposition $\F_q^n=L\oplus R$ with an ordered basis $(v_1, \dots, v_r)$ of 
$L$. In the algorithm we will need to enumerate all $r$-individualisations, and 
the following proposition realises this. 
\begin{proposition}\label{prop:ind}
There is a deterministic algorithm that lists all $r$-individualisations in 
$\F_q^n$ in time $q^{O(rn)}$. Each individualisation $L\oplus R$ with an ordered 
basis $(v_1, \dots, v_r)$ of $L$ is represented as an invertible matrix $[v_1, 
\dots, v_r, u_1, \dots, u_{n-r}]$ where $(u_1, \dots, u_{n-r})$ is an ordered 
basis of $R$.
\end{proposition}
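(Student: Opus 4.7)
The plan is to enumerate the two pieces of an $r$-individualisation separately: first the ordered basis $(v_1,\dots,v_r)$ of $L$, and then the complement $R$ of $L=\langle v_1,\dots,v_r\rangle$ in $\F_q^n$. Both can be enumerated by simple linear-algebraic loops, and counting the total number of configurations will give the claimed $q^{O(rn)}$ time bound.

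For the outer loop, I would iterate over all ordered $r$-tuples $(v_1,\dots,v_r)\in(\F_q^n)^r$ and, for each, use Gaussian elimination to discard tuples that are linearly dependent. This costs $\poly(n,\log q)$ per tuple, with at most $q^{rn}$ tuples.  For each surviving tuple, compute a distinguished complement $R_0$ of $L=\langle v_1,\dots,v_r\rangle$ by extending $(v_1,\dots,v_r)$ to a basis of $\F_q^n$; concretely, reduce the $n\times r$ matrix $[v_1\,\cdots\,v_r]$ to column-echelon form and select $n-r$ standard basis vectors $e_{i_{r+1}},\dots,e_{i_n}$ whose addition yields a basis. Set $w_j=e_{i_j}$ for $j=r{+}1,\dots,n$ and $R_0=\langle w_{r+1},\dots,w_n\rangle$.

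For the inner loop, I would use the classical parametrisation of complements: every complement $R$ of $L$ admits a unique basis of the form $(w_{r+1}+y_{r+1},\dots,w_n+y_n)$ with $y_j\in L$. Writing each $y_j=\sum_{k=1}^r a_{kj}v_k$, the complements are parametrised bijectively by $(a_{kj})\in\F_q^{r(n-r)}$, accounting for exactly $q^{r(n-r)}$ complements in agreement with Fact~\ref{fact:prel}(1). I would then iterate over all such $(a_{kj})$, and for each, output the invertible matrix $[v_1,\dots,v_r,\,w_{r+1}+y_{r+1},\dots,w_n+y_n]$. By construction each $r$-individualisation appears exactly once, since both the ordered basis of $L$ and the complement $R$ are uniquely recovered from the output matrix.

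The total number of outputs is at most $q^{rn}\cdot q^{r(n-r)}=q^{2rn-r^2}=q^{O(rn)}$, and each iteration does $\poly(n,\log q)$ linear algebra, giving overall time $q^{O(rn)}$. There is no real obstacle here; the only step that needs a line of justification is the bijectivity of the complement parametrisation, which is immediate from the decomposition $\F_q^n=L\oplus R_0$ (once a basis of $R_0$ is fixed, writing any complement as the graph of a linear map $R_0\to L$ gives the bijection with $L^{n-r}$).
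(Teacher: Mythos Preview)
Your proposal is correct and follows essentially the same approach as the paper: enumerate linearly independent $r$-tuples, extend each to a full basis to obtain one complement, then parametrise all complements by adding vectors from $L$ to that fixed complement basis, yielding the same count $q^{2rn-r^2}$. Your writeup is in fact slightly more explicit than the paper's (you spell out the echelon-form extension and the graph-of-a-linear-map justification for bijectivity), but there is no substantive difference.
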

\begin{proof}
Listing all $r$-tuples of linearly independent vectors can be done easily in time 
$q^{rn}\cdot \poly(n, \log q)$. For a dimension-$r$ $L\leq \F_q^n$ with an ordered 
basis $(v_1, 
\dots, v_r)$, we need to compute all complements of $L$, and represent every 
complement by an 
ordered basis. To do this, we first compute 
one ordered basis of one 
complement of $L$, which can be easily done as this just means to compute a full 
ordered basis starting from a partial order 
basis.
Let this 
ordered basis 
be $(u_1, \dots, u_{n-r})$. Then the spans of the $(n-r)$-tuples $(u_1+w_1, \dots, 
u_{n-r}+w_{n-r})$ 
go over all complements of $L$ when $(w_1, \dots, w_{n-r})$ go over all 
$(n-r)$-tuples 
of vectors from $L$. Add $[v_1, \dots, v_r, u_1+w_1, \dots, 
u_{n-r}+w_{n-r}]$ to the list. The total number of 
iterations, namely $r$-tuples of vectors from $\F_q^n$ and $(n-r)$-tuples of 
vectors from $L\cong \F_q^r$, 
is $q^{2rn-r^2}$. Other steps can be achieved via linear algebra computations. 
This concludes the proof. 
\end{proof}
\begin{remark}\label{rem:ind}
The algorithm in Proposition~\ref{prop:ind} produces a list $T$ of invertible 
matrices 
of size $n$. 
An invertible $A_0=[v_1, \dots, v_n]\in 
\GL(n, q)$, viewed as a change-of-basis matrix, sends $e_i$ to $v_i$ for 
$i\in[r]$, and $\langle e_{r+1}, \dots, e_n\rangle$ to $R=\langle v_{r+1}, \dots, 
v_n\rangle$. Suppose $A_1=[v_1, \dots, v_r, u_1, \dots, 
u_{n-r}]$ is the matrix from $T$ where $\langle u_1, \dots, 
u_{n-r}\rangle=R$. Then $A_0=A_1\begin{bmatrix}I_r & 0 \\ 0 & A\end{bmatrix}$
for some $A\in\GL(n-r, q)$. In particular for any $A_0\in\GL(n, q)$ there exists a 
unique $A_1$ 
from $T$ such that $A_1^{-1}A_0$
is of the form $\begin{bmatrix}I_r & 0 \\ 0 & 
A\end{bmatrix}$.

\end{remark}

We are now ready to present the algorithm, followed by some implementation details.

\begin{description}
\item[Input.] Two $m$-alternating tuples $\bG=(G_1, \dots, G_m)$ and $\bH=(H_1, \dots, 
H_m)$ in $\Lambda(n, q)^m$ representing $m$-alternating spaces $\cG, \cH\leq 
\Lambda(n, q)$, respectively.
$m=cn$ for some constant $c$, and $n$ is large enough (larger than some fixed 
function of $c$).
\item[Output.] Either certify that $\cG$ does not satisfy $F(n, m, q, r)$, or a 
set 
$S$ consisting of all isometries between $\cG$ and $\cH$. (If $S=\emptyset$ then 
$\cG$ and $\cH$ are not isometric.)
\item[Algorithm procedure.] \ 
\begin{enumerate}
\item $S=\emptyset$.
\item Set $r\in \N$ such 
that $r\geq 4\cdot \frac{n-r}{m}$ if $n-r\geq m$, and $r\geq 4\cdot \frac{m}{n-r}$ 
if $m\geq n-r$.
\item Construct $\bB\in M((n-r)\times m, q)^r$ as described in 
Section~\ref{sec:outline} or before Definition~\ref{def:property_F_prime}.
\item Compute a linear basis of $\Adj(\bB)\leq M(n-r, q)\oplus M(m, q)$.
\item Let 
$\pi_1$ be the projection of $M(n-r, q)\oplus M(m, q)$ to $M(n-r, q)$ along $M(m, 
q)$. If 
$\dim(\pi_1(\Adj(\bB)))>n-r$, then return ``$\cG$ does not satisfy $F(n, m, q, 
r)$.''
\item List all $r$-individualisations in $\F_q^n$ by the algorithm in 
Proposition~\ref{prop:ind}. 
For every $r$-individualisation $\F_q^n=L\oplus R$ with an ordered basis $(v_1, \dots, v_r)$ of $L$, let $A_1$ be its corresponding invertible matrix produced by the algorithm. Do the following.
\begin{enumerate}
\item Construct $\bC\in M((n-r)\times m, q)^r$ w.r.t. $L\oplus R$ and $(v_1, 
\dots, v_r)$.
\item Compute a linear basis of $\Adj(\bB, \bC):=\{(A, D)\in M(n-r, q)\oplus M(m, 
q) : A\bB=\bC D\}$.
\item If $\dim(\pi_1(\Adj(\bB,\bC)))>n-r$, go to the next $r$-individualisation.
\item If $\dim(\pi_1(\Adj(\bB, \bC)))\leq n-r$, do the following: 
\begin{enumerate}
\item For every $A\in\pi_1(\Adj(\bB, \bC))$, if $A$ is invertible, let 
$A_2=\begin{bmatrix} I_r & 0\\0 & A\end{bmatrix}$. Test whether $A_0=A_2A_1^{-1}$ 
is an isometry between $\cG$ and $\cH$. If so, add $A_0$ to $S$.
\end{enumerate}
\end{enumerate}
\item Return $S$.
\end{enumerate}
\end{description}

We describe some implementation details.
\begin{description}
\item[Step 3.] $\bB$ is constructed by taking the upper-right $r\times (n-r)$ 
corners of $G_i$'s to get an $m$-matrix tuple $\bB'\in M(r\times (n-r), q)^m$, and 
flipping $\bB'$ to 
obtain an $r$-matrix tuple $\bB\in M((n-r)\times m, q)^r$. See also 
Figure~\ref{fig:flip} and~\ref{fig:slice}.
\item[Step 6.a.] $\bC$ is constructed as follows. In Step 6, by fixing an 
$r$-individualisation, we obtain a change-of-basis matrix $A_1=[v_1, \dots, v_r, 
u_1, \dots, u_{n-r}]$ as described in Proposition~\ref{prop:ind}. Let 
$\bH_1=\trans{A}_1\bH 
A_1$. Then perform the same 
procedure 
as in 
Step 3 for $\bH_1$.
\item[Step 6.d.i.] To test whether $A_0=A_2A_1^{-1}$ is an isometry 
between $\cG$ and $\cH$, we just need to test whether $\trans{A}_0\bG A_0$ and 
$\bH$ span the same alternating space.
\end{description}

It is straightforward to verify that the algorithm runs in time $q^{O(n)}$: the 
multiplicative cost of enumerating $r$-individualisation is at most $q^{2rn-r^2}$, and the 
multiplicative cost of enumerating $\pi_1(\Adj(\bB, \bC))$ is at most $q^{n-r}$. All other 
steps are basic tasks in linear algebra so can be carried out efficiently.

When $m=cn$ and $n$ larger than a fixed function of $c$, all but at most 
$1/q^{\Omega(n)}$ fraction of $\cG\leq \Lambda(n, q)$ satisfy 
$F(n, m, q, r)$ by Propositions~\ref{prop:key}, \ref{prop:stable}, 
and~\ref{prop:models}. Note that $\Omega(n)$ hides a constant depending on $c$.

To see the correctness, first note that by the test step in Step 6.d.i, only 
isometries will be added to $S$. So we need to argue that if $\cG$ is in $F(n, m, 
q, r)$, then every isometry 
$A_0\in\Iso(\cG,\cH)$ will be added to $S$. Recall that $A_0\in 
\GL(n, q)$ is an isometry from $\cG$ to $\cH$ if and only if there exists $D\in 
\GL(m, q)$ such that $\trans{A_0}\bG A_0=\bH^D$, which is equivalent to 
$\bG=A_0^{-\mathrm{t}}(\bH^D)A_0^{-1}$.
By Remark~\ref{rem:ind}, $A_0^{-1}\in\GL(n, q)$ can be written 
uniquely 
as $A_0^{-1}=A_1A_2^{-1}$ where $A_1$ is from the list $T$ produced by 
Proposition~\ref{prop:ind}, and $A_2^{-1}=\begin{bmatrix}I_r & 0 \\ 0 & A^{-1}
\end{bmatrix}$ for some invertible $A\in\GL(n-r,q)$.
When 
enumerating the individualisation corresponding to $A_1$, we have $\trans{A}_2\bG 
A_2=A_1^{\mathrm{t}}(\bH^D)A_1=(A_1^{\mathrm{t}}\bH A_1)^D$, which 
implies that $(A, D)\in \Adj(\bB, \bC)$ and $A\in \pi_1(\Adj(\bB, \bC))$. 
Since $A\bB = \bC D$ for some invertible $A$ and 
$D$, we have $\dim(\Adj(\bB, 
\bC))=\dim(\Adj(\bB))$ and $\dim(\pi_1(\Adj(\bB,\bC)))=\dim(\pi_1(\Adj(\bB)))$, 
which justifies Step 6.c together with the condition 
already imposed 
in Step 5. Since $A\in \pi_1(\Adj(\bB, \bC))$, it will be encountered when 
enumerating $\Adj(\bB, \bC)$ in Step 6.d.i, so $A_0=A_2A_1^{-1}$ will be built 
and, 
after the 
verification 
step, added to $S$.

\section{Dynamic programming}\label{sec:dp}

In this section, given a matrix group $G\leq \GL(n, q)$, we view $G$ as a 
permutation group on the domain $\F_q^n$, so basic tasks like membership testing 
and pointwise transporter can be solved in time $q^{O(n)}$ by permutation group 
algorithms. Furthermore a generating set 
of $G$ of size $q^{O(n)}$ can also be obtained in time $q^{O(n)}$. These 
algorithms are classical and can be found in \cite{Luks90,seress2003permutation}.

As mentioned in Section~\ref{sec:intro}, for \GrI, Luks' dynamic programming 
technique \cite{Luks99} can improve the brute-force  
$n!\cdot \poly(n)$ time bound to the $2^{O(n)}$ time bound, which can 
be understood as replacing the number of permutations $n!$ 
with the number of subsets $2^n$. 

In our view, Luks' dynamic programming technique 
is most transparent when working with the subset transporter problem. Given a 
permutation group $P\leq S_n$ and $S, T\subseteq [n]$ of size $k$, this technique 
gives a $2^k\cdot \poly(n)$-time algorithm to compute $P_{S\to 
T}:=\{\sigma\in P : \sigma(S)=T\}$ \cite{BQ12}. To illustrate the idea in the 
matrix group 
setting, we 
start with the subspace transporter problem.
\begin{problem}[Subspace transporter problem]
Let $G\leq \GL(n, q)$ be given by a set of generators, and let $V$, $W$ be two subspaces 
of $\F_q^n$ of dimension $k$. The subspace transporter problem asks to compute the 
coset $G_{V\to W}=\{g\in G : g(V)=W\}$.
\end{problem}
The subspace transporter problem admits the following brute-force algorithm. Fix a 
basis $(v_1, \dots, v_k)$ of $V$, and enumerate all ordered basis
of $W$ at the 
multiplicative cost of $q^{k^2}$. For each ordered basis $(w_1, \dots, w_k)$ of 
$W$, 
compute the coset $\{g\in G : \forall i\in[k], g(v_i)=w_i\}$ by using a sequence 
of 
pointwise stabiliser algorithms. This gives an 
algorithm running in time $q^{k^2+O(n)}$. 
Analogous to the permutation group setting, we aim to replace $O(q^{k^2})$, the 
number of ordered basis of 
$\F_q^k$, 
with $q^{\frac{1}{4}k^2+O(k)}$, the number of subspaces in $\F_q^k$, via a dynamic 
programming technique. For this we first observe the following. 

\begin{observation}\label{obs:enumerate_subspace}
There exists a deterministic algorithm that enumerates all subspaces of $\F_q^n$, 
and for each subspace computes an ordered basis, in time $q^{\frac{1}{4}n^2+O(n)}$.
\end{observation}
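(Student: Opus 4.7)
The plan is to enumerate subspaces via their reduced row echelon form (RREF) representatives. It is a classical fact that every subspace $V \leq \F_q^n$ of dimension $k$ is the row span of a unique $k \times n$ matrix of rank $k$ in RREF, and the rows of this matrix form an ordered basis of $V$. So it suffices to iterate $k$ from $0$ to $n$ and, for each $k$, enumerate all $k \times n$ rank-$k$ RREF matrices.

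For fixed $k$, I would iterate over the $\binom{n}{k}$ choices of pivot columns $1 \leq p_1 < p_2 < \cdots < p_k \leq n$. Once the pivot columns are fixed, the structure of RREF determines which entries are forced and which are free: the $r$th pivot column $p_r$ is the standard basis vector $e_r \in \F_q^k$, and for every non-pivot column $j$ the entries in rows $r$ with $p_r > j$ are forced to be $0$, while the entries in rows $r$ with $p_r < j$ range independently over $\F_q$. Enumerating all assignments to the free entries produces each rank-$k$ RREF matrix with this pivot pattern exactly once, and thus (by uniqueness of RREF) each $k$-dimensional subspace exactly once, together with the ordered basis given by the rows.

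For the runtime analysis, the total number of RREF matrices produced equals $\sum_{k=0}^{n} \gbinom{n}{k}{q}$. Using the standard bound $\gbinom{n}{k}{q} \leq q^{k(n-k)} \prod_{i\geq 1}(1-q^{-i})^{-1} = O(q^{k(n-k)})$, together with $k(n-k) \leq n^2/4$, gives a total count of at most $(n+1)\cdot O(q^{n^2/4})$. Each RREF matrix and its associated ordered basis can be written down in time $\poly(n, \log q)$, so the overall running time is $q^{n^2/4 + O(\log n)} \cdot \poly(n, \log q) \leq q^{n^2/4 + O(n)}$. The algorithm is essentially bookkeeping: the only point requiring mild care is identifying, for each chosen pivot pattern $(p_1, \dots, p_k)$, the precise set of free-versus-forced entries in the RREF template, which is immediate from the definition of RREF. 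No serious obstacle arises.
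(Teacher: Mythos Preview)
Your proof is correct, but it takes a different route from the paper's. The paper builds the list of subspaces inductively by dimension: having listed all $(d-1)$-dimensional subspaces (each with an ordered basis), it extends each by every vector not already in it, then tests each resulting $d$-dimensional subspace against the current list and discards duplicates. The cost is bounded by $\sum_d S_{d-1}\cdot q^{O(n)}$, where $S_d=\gbinom{n}{d}{q}$.

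Your RREF approach is cleaner in one respect: by exploiting the uniqueness of reduced row echelon form, you produce each subspace exactly once, so no duplicate-detection step is needed at all. In the paper's argument the line ``test whether $U$ has been listed'' is dismissed as a basic linear-algebra task; making that efficient essentially requires computing a canonical form (such as the RREF) anyway, so your method is in some sense what the paper's membership test would unfold to. Conversely, the paper's incremental construction has the mild conceptual advantage of mirroring the dimension-by-dimension structure of the dynamic programs in Theorems~\ref{thm:subsp_trans}--\ref{thm:minor}, where one repeatedly passes from $(d-1)$-dimensional to $d$-dimensional objects. Either way, the bound $q^{n^2/4+O(n)}$ falls out from $\sum_k \gbinom{n}{k}{q}=q^{n^2/4+O(n)}$, and both arguments are fine.
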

\begin{proof}
For $d\in \{0, 1, \dots, n\}$, let $S_d$ be the number of dimension-$d$ subspaces 
of $\F_q^n$. The total number of subspaces in $\F_q^n$ is 
$S_0+S_1+\dots+S_n=q^{\frac{1}{4}n^2+O(n)}$. To 
enumerate 
all subspaces we proceed by induction on the dimension in an increasing order. The 
case $d=0$ is trivial. For 
$d\geq 1$, suppose all subspaces of dimension $d-1$, each with an ordered basis, 
are listed. To list all subspaces of dimension $d$, for each dimension-$(d-1)$ 
subspace $U'$ with an ordered basis $(u_1, \dots, u_{d-1})$, for each vector 
$u_d\not\in U'$, form $U$ with the ordered basis $(u_1, \dots, u_d)$. Then test 
whether $U$ has been listed. If so discard it, and if not add $U$ together with 
this ordered basis to the list. The two for loops as above adds a multiplicative 
factor of at most $S_{d-1}\cdot q^{n}$, and other steps are basic linear 
algebra tasks. Therefore the total complexity is $\sum_{i=0}^nS_i\cdot 
q^{O(n)}=q^{\frac{1}{4}n^2+O(n)}$. 
\end{proof}

\begin{theorem}\label{thm:subsp_trans}
There exists a deterministic algorithm that solves the subspace transporter 
problem in time $q^{\frac{1}{4}k^2+O(n)}$. 
\end{theorem}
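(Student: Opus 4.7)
The plan is to adapt Luks' dynamic programming from subsets to subspaces, gaining from the fact that $\sum_{j=0}^{k}\gbinom{k}{j}{q} = q^{k^2/4 + O(k)}$ is much smaller than the number $q^{k^2 + O(k)}$ of ordered bases of $W$. Fix an ordered basis $(v_1, \ldots, v_k)$ of $V$ and let $V_j := \langle v_1, \ldots, v_j\rangle$. For each subspace $W' \le W$ of dimension $j$, set $G(W') := \{g \in G : g(V_j) = W'\}$; when nonempty, this is a coset of the setwise stabiliser $\mathrm{Stab}_G(V_j)$. The DP table is indexed by the subspaces $W' \le W$ enumerated via Observation~\ref{obs:enumerate_subspace}, and the entry at $W'$ stores a single representative $g_{W'}$ of $G(W')$ when it is nonempty.

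The recursion mirrors Luks': for any $g \in G(W')$ the image $g(V_{j-1})$ is a uniquely determined $(j-1)$-dim subspace $W'' \subseteq W'$ and $g(v_j) \in W' \setminus W''$, so
\[
G(W') = \bigsqcup_{W'' \le W',\, \dim W'' = j-1} \{\, g \in G(W'') : g(v_j) \in W' \setminus W''\,\}.
\]
To fill the entry at $W'$ I iterate over the $\gbinom{j}{1}{q} \le q^j$ subspaces $W''$; for each one with $G(W'')$ nonempty I search the coset $g_{W''}\cdot \mathrm{Stab}_G(V_{j-1})$ for an element sending $v_j$ into $W' \setminus W''$. This reduces to intersecting the $\mathrm{Stab}_G(V_{j-1})$-orbit of $v_j$ with the $q^j - q^{j-1}$ vectors of $g_{W''}^{-1}(W' \setminus W'')$ and then invoking a pointwise transporter; both run in $q^{O(n)}$ by the permutation-group subroutines stated at the start of the section. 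With $q^{k^2/4 + O(k)}$ DP states and $q^{O(n)}$ work per state, the total running time is $q^{k^2/4 + O(n)}$. The entry $g_W$ yields a representative of $G_{V \to W}$, and generators of $\mathrm{Stab}_G(V)$ needed to present the full coset are produced by running the same DP in parallel with target $W := V$.

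The main obstacle will be maintaining the setwise stabilisers $\mathrm{Stab}_G(V_j)$ used in the recursion. They do not nest along the filtration $V_0 \subset V_1 \subset \cdots \subset V_k$, so the usual ``add one point'' Schreier--Sims induction that produces the pointwise stabilisers $G_{v_1,\ldots,v_j}$ is not directly available. I plan to resolve this by representing each $\mathrm{Stab}_G(V_j)$ as a finite union of cosets of the pointwise stabiliser $G_{v_1,\ldots,v_j}$ indexed by the reachable ordered bases of $V_j$ uncovered during the DP (namely those $(w_1,\ldots,w_j)$ with $\langle w_1,\ldots,w_j\rangle = V_j$ appearing in the table), so that each orbit or transporter query above is answered by a sweep over this union while the per-state cost remains $q^{O(n)}$.
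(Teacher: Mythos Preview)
Your DP structure --- table indexed by subspaces $W'\le W$, cells $G(W')=\{g\in G:g(V_j)=W'\}$, and the hyperplane recursion --- is exactly the paper's. The divergence is that you store only a representative $g_{W'}$ and defer the setwise stabilisers $\mathrm{Stab}_G(V_j)$ to a separate mechanism, whereas the paper stores each cell as a full coset (a representative together with generators of $\mathrm{Stab}_G(V_j)$) and produces those generators as a byproduct of the union step. Your handling of this point is where the argument breaks.

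Representing $\mathrm{Stab}_G(V_j)$ as a union of cosets of the pointwise stabiliser $G_{v_1,\ldots,v_j}$ does not stay within the time budget: the index $[\mathrm{Stab}_G(V_j):G_{v_1,\ldots,v_j}]$ equals the number of ordered bases of $V_j$ in the $G$-orbit of $(v_1,\ldots,v_j)$, and for $G=\GL(n,q)$ this is $|\GL(j,q)|=q^{j^2-O(j)}$. Sweeping over that many cosets inside a DP step makes the per-state cost $q^{\Theta(j^2)}$, not $q^{O(n)}$, and the total time degrades to the brute-force $q^{\Theta(k^2)}$. It is also unclear what ``reachable ordered bases of $V_j$ uncovered during the DP'' should mean --- the table is indexed by subspaces of $W$, not by ordered bases, and $V_j$ sits in $V$, not in $W$.

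The paper's fix is the one you are missing. For a fixed cell $U$ of dimension $d$, the subcosets $[G(V_{d-1}\to U')](v_d\to u)$ arising in the recursion are all left cosets of the \emph{same} group $H':=\mathrm{Stab}_G(V_{d-1})\cap G_{v_d}$, and their union $G(V_d\to U)$, when nonempty, is a single left coset of $\mathrm{Stab}_G(V_d)$. Hence the $\le q^{2d}$ subcoset representatives, together with generators of $H'$ (obtained from the generators of $\mathrm{Stab}_G(V_{d-1})$ already stored at the previous layer via one pointwise-stabiliser call), generate $\mathrm{Stab}_G(V_d)$ after translating by any one representative; Sims' method then cuts this down to a $q^{O(n)}$-size generating set, which is stored with the cell and used at dimension $d+1$. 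Incidentally, your own parallel-DP idea with target $W:=V$ would also work if you invoked it at every level $j$, not only at $j=k$: the cell $G(V_j\to V_j)$ computed there is precisely $\mathrm{Stab}_G(V_j)$.
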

\begin{proof}
We fix an ordered basis $(v_1, \dots, v_k)$ of $V$, and for $d\in[k]$, let 
$V_d=\langle v_1, \dots, v_d\rangle$. 
The dynamic programming table is a list, indexed by 
subspaces $U\leq W$. For $U\leq W$ of dimension $d\in[k]$, the corresponding cell 
will store the coset
$G(V_d\to U)=\{g\in G:g(V_d)=U\}$. When $d=k$ the corresponding cell gives $G(V\to 
W)$. 

We fill in the dynamic programming table according to $d$ in an increasing order. 
For $d=0$ the problem is trivial. 
%
Now assume that for some $d\geq 1$, we have computed $G(V_l\to U')$ for all $0\leq 
l\leq 
d-1$ and subspace $U'\leq W$ of dimension $U$. To compute $G(V_d\to U)$ for some 
fixed $U\leq W$ of dimension $d$, note that any $g\in G(V_d\to U)$ has to map 
$V_{d-1}$ 
to some $(d-1)$-dimension subspace $U'\leq U$, and $v_d$ to 
some vector $u\in U\setminus U_0$. This shows that
$$G(V_d\to U)=\bigcup_{U'\leq U, \dim(U')=d-1}\bigcup_{u\in U\setminus 
U'}[G(V_{d-1}\to U')](v_d\to u).$$
To compute $[G(V_{d-1}\to U')](v_d\to u)$, we read $G(V_{d-1}\to U')$ from the 
table, then compute $[G(V_{d-1}\to U')](v_d\to u)$ using the pointwise transporter 
algorithm. The number of $u$ in $U\setminus U'$ is no more than $q^d$, and the 
number of $(d-1)$-dimension subspaces of $U$ is also no more than $q^d$. After 
taking these two unions, apply Sims' method to get a generating set of size 
$q^{O(n)}$. Therefore for each cell the time complexity is $q^{2d}\cdot 
q^{O(n)}=q^{O(n)}$. Therefore the whole dynamic programming table can be filled in 
time $q^{\frac{1}{4}k^2+O(k)}\cdot q^{O(n)}=q^{\frac{1}{4}k^2+O(n)}$.
%
%
\end{proof}

To apply the above idea to \AltMatSpIso, we will need to deal with 
the following problem. 


\begin{problem}[Alternating matrix transporter problem]
Let $H\leq \GL(n, q)$ be given by a set of generators, and let $A, B\in \Lambda(n, 
q)$ be two alternating matrices. The alternating matrix transporter 
problem asks to compute the coset $H_{A \to B}=\{g\in H: \trans{g}Ag=B\}$. 
\end{problem}

\begin{theorem}\label{thm:alt_form}
There exists a deterministic algorithm that solves the alternating matrix 
transporter 
problem in time $q^{\frac{1}{4}n^2+O(n)}$. 
\end{theorem}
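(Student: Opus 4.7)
The plan is to extend Luks' dynamic programming, as used in the proof of Theorem~\ref{thm:subsp_trans} for subspace transport, to the alternating-form setting. The DP table will again be indexed by subspaces of $\F_q^n$, of which there are $q^{\frac{1}{4}n^2 + O(n)}$, matching the target running time.

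Let $E_d := \langle e_1, \dots, e_d\rangle$ for $d \in \{0,1,\dots,n\}$. For each such $d$ and each $d$-dimensional subspace $U \leq \F_q^n$, define
\[ T_d(U) := \{g \in H : g(E_d) = U \text{ and } (\trans{g}Ag)_{ij} = B_{ij} \text{ for all } i,j \in [d]\}. \]
The desired output is $T_n(\F_q^n) = H_{A \to B}$. A direct computation shows that whenever $g_1, g_2 \in T_d(U)$, the element $\sigma := g_1^{-1}g_2$ stabilises $E_d$ setwise and its restriction $\sigma|_{E_d}$ is an isometry of $B|_{E_d \times E_d}$. Consequently $T_d(U)$ is either empty or a left coset $g_U \cdot L_d$ of the $U$-independent subgroup
\[ L_d := \{h \in H : h(E_d) = E_d \text{ and } \trans{(h|_{E_d})} B|_{E_d}(h|_{E_d}) = B|_{E_d}\}. \]
Each cell will be stored as a representative together with a $q^{O(n)}$-size generating set for $L_d$, computed once per level by applying Sims' method to the generators accumulated during transitions into that level.

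The recursion is
\[ T_d(U) = \bigcup_{\substack{U' \leq U \\ \dim U' = d-1}}\ \bigcup_{u \in U \setminus U'} \{g \in T_{d-1}(U') : g(e_d) = u,\ \trans{g(e_i)}Au = B_{i,d}\ \forall\, i \in [d-1]\}. \]
For fixed $U'$ and $u$, write $T_{d-1}(U') = g_{U'} L_{d-1}$ and $g = g_{U'} h$ with $h \in L_{d-1}$, and set $\tilde u := g_{U'}^{-1}(u)$ and $A' := \trans{g_{U'}} A g_{U'}$. Then $g(e_d) = u$ becomes the pointwise condition $h(e_d) = \tilde u$, while the compatibility requirements become the affine condition $\pi_{d-1}(\trans{h}(A'\tilde u)) = (B_{1,d},\dots,B_{d-1,d})^{\mathrm{t}}$ on the single vector $\trans{h}(A'\tilde u) \in \F_q^n$, where $\pi_{d-1}$ denotes projection onto the first $d-1$ coordinates. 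Enumerating the at most $q^{n-d+1}$ target vectors consistent with this affine condition and applying the pointwise-transporter algorithm within $L_{d-1}$ (for the action $h \mapsto \trans{h}(A'\tilde u)$) computes the desired sub-coset as a union of pointwise cosets, each in $q^{O(n)}$ time via the permutation-group machinery of \cite{Luks90, seress2003permutation}, after which Sims' method is applied to keep the generating set at polynomial size.

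The complexity accounting is then standard: the number of states is $\sum_{d=0}^n \gbinom{n}{d}{q} = q^{\frac{1}{4}n^2 + O(n)}$, and at each state $(d,U)$ the number of triples $(U',u,w)$ to enumerate is at most $q^d \cdot q^d \cdot q^{n-d+1} = q^{O(n)}$, with each triple contributing $q^{O(n)}$ work; multiplying gives the claimed $q^{\frac{1}{4}n^2 + O(n)}$ bound. The main technical obstacle, distinguishing this from the pure subspace-transporter case, is the alternating compatibility condition $\trans{g(e_i)}Au = B_{i,d}$, which a priori entangles all of $g(e_1), \dots, g(e_{d-1})$ with $u$; the key observation that salvages the dynamic programming is that, once the coset representative $g_{U'}$ is fixed, the entire system collapses to an affine condition on the single vector $\trans{h}(A'\tilde u)$, which can be handled by a controlled family of pointwise-transporter queries inside $L_{d-1}$.
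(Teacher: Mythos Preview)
Your approach is essentially the paper's: the same dynamic-programming table indexed by subspaces of $\F_q^n$, the same recursion over hyperplanes $U'\leq U$ and vectors $u\in U\setminus U'$, and the same two-stage refinement at each step (first impose the image of $e_d$, then the new-column compatibility of the form). One exposition gap: you state the pointwise condition $h(e_d)=\tilde u$ but then only describe the transporter for the affine condition on $\trans{h}(A'\tilde u)$; both transporters must be applied in sequence (either order), and you should say so explicitly, since neither condition implies the other. The paper handles the compatibility step a bit more directly, as a single pointwise transporter on $\F_q^{d-1}$ via the induced action on the last column of the restricted form, which avoids your enumeration over $q^{n-d+1}$ target vectors in $\F_q^n$; but your variant stays within the $q^{O(n)}$ per-cell budget, so the difference is cosmetic rather than substantive.
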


\begin{proof}
Let $(e_1,\dots,e_n)$ be the standard basis vectors of $\F_q^n$, and let 
$E_d=\langle e_1,\dots,e_d\rangle$. For an alternating matrix $B$, and an ordered 
basis $(u_1, \dots, u_d)$ of a dimension-$d$ $U\leq \F_q^n$, $B|_U$ denotes the 
$d\times d$ alternating matrix $[u_1, \dots, u_d]^tB[u_1, \dots, u_d]$, called the 
restriction of $B$ to $U$. For a vector $v$ and $U$ with the ordered basis as 
above, $B_{U\times v}=[u_1, \dots, u_d]^tB v\in \F_q^d$.

Then we construct a dynamic programming table, which is a list indexed by all 
subspaces of $\F_q^n$. Recall that each subspace also comes with an ordered basis 
by 
Observation~\ref{obs:enumerate_subspace}. For any $U\leq 
\F_q^n$ of dimension $k$, its corresponding cell will store the coset
\begin{equation}
H(A|_{E_k}\to B|_{U})=\{g\in H: g(E_k)=U,~\trans{g}(A|_{E_k})g=B|_{U}\}.
\end{equation}

We will also fill in this list in the increasing order of the dimension $d$. The 
base case $d=0$ is trivial. 
Now, assume we have already compute $H(A|_{E_l}\to B|_{U'})$ for all $0\leq l\leq 
d-1$ and subspace $U'\leq \F_q^n$ of dimension $l$. To compute $H(A|_{E_k}\to 
B|_{U})$ for some $U\leq \F_q^n$ of dimension $d$, note that any $g\in 
H(A|_{E_d}\to B|_{U})$ satisfies the following. Firstly, $g$ sends $E_{d-1}$ to 
some dimension-$(d-1)$ subspace $U'\leq 
U$, and $A|_{E_{d-1}}\in \Lambda(d-1, q)$ to $B|_{U'}\in \Lambda(d-1, q)$. 
Secondly, $g$ sends $e_d$ to some vector 
$u\in U\setminus U'$, and $A|_{E_{d-1}\times e_d}\in \F_q^{d-1}$ to $B|_{U'\times 
u}\in \F_q^{d-1}$. 
This shows that 
\begin{equation}
H(A|_{E_d}\to B|_{U})=\bigcup_{U'\leq U,\dim(U')=d-1}\bigcup_{u\in U\setminus 
U'}\big[[H(A|_{E_{d-1}}\to B|_{U'})](e_d\to u)\big](A|_{E_{d-1}\times e_d}\to 
B|_{U'\times 
u}).
\end{equation}
To compute $\big[[H(A|_{E_{d-1}}\to B|_{U'})](e_d\to u)\big](A|_{E_{d-1}\times 
e_d}\to 
B|_{U'\times 
u})$, we read $H(A|_{E_{d-1}}\to B|_{U'})$ from the table, compute 
$[H(A|_{E_{d-1}}\to B|_{U'})](e_d\to u)$ using the pointwise transporter 
algorithm. As $[H(A|_{E_{d-1}}\to B|_{U'})](e_d\to u)$ induces an action 
on $\F_q^{d-1}$ corresponding to the last column of $A|_{E_d}$ with the last entry 
(which is $0$) removed, 
$\big[[H(A|_{E_{d-1}}\to B|_{U'})](e_d\to u)\big](A|_{E_{d-1}\times e_d}\to 
B|_{U'\times u})$ can be computed by another pointwise transporter algorithm. 
As in Theorem~\ref{thm:subsp_trans}, we go over the two unions and apply Sims' 
method to obtain a generating set of size $q^{O(n)}$. The time complexity for 
filling in each cell is seen to be $q^{2d}\cdot q^{O(n)}$, and the total time 
complexity is then $q^{\frac{1}{4}n^2+O(n)}$.
%
\end{proof}

We are now ready to prove Theorem~\ref{thm:minor}.

\paragraph{Theorem~\ref{thm:minor}, restated.}
Given $\bG=(G_1, \dots, G_m)$ and $\bH=(H_1, \dots, H_m)$ in $\Lambda(n, q)^m$ 
representing $m$-alternating spaces 
$\cG, \cH\leq \Lambda(n, 
q)$, there exists a deterministic algorithm for \AltMatSpIso in time 
$q^{\frac{1}{4}(m^2+n^2)+O(m+n)}$.
\begin{proof}
Let $(e_1,\dots,e_m)$ be the standard basis of $\F_q^m$, and let $E_k=\langle 
 e_1,\dots, e_k\rangle$. 
$v=(a_1, \dots, a_m)^{\mathrm{t}}\in \F_q^m$, define 
$\bH^v:=\sum_{i\in[m]}a_iH_i\in \Lambda(n, q)$. 
For a dimension-$k$ subspace $V\leq \F_q^m$ with an ordered basis $(v_1, \dots, 
v_k)$, $\bH^V:=(\bH^{v_1}, \dots, \bH^{v_k})\in \Lambda(n, q)^k$. 

The dynamic programming table is indexed by subspaces of 
$\F_q^m$, so the number of cells is no more than $q^{\frac{1}{4}m^2+O(m)}$. The 
cell corresponding to a dimension-$k$ subspace $V$ stores the coset
\begin{equation}
\Iso(\bG^{E_k}, \bH^{V})=\{(g, h)\in \GL(n,q)\times \GL(k, q): 
\trans{g}(\bG^{E_k})g=(\bH^V)^h\},
\end{equation}

We will fill in the dynamic programming table in the increasing order of the 
dimension $d$. Recall that each subspace also comes with an ordered basis 
by 
Observation~\ref{obs:enumerate_subspace}. The base case $d=0$ is trivial. 
Now assume we have computed $\Iso(\bG^{E_{\ell}}, \bH^V)$ for all $1\leq \ell\leq 
d-1$ and $V\leq \F_q^n$ of dimension $\ell$. To compute $\Iso(\bG^{E_d}, \bH^V)$ 
for $V\leq \F_q^n$ of dimension $d$, note that any $h$ in $(g, h)\in 
\Iso(\bG^{E_d}, \bH^V)$ satisfies the following. Firstly, $h$ sends $E_{d-1}$ to 
some dimension-$(d-1)$ subspace $V'\leq V$, and $(g, h)\in \Iso(\bG^{E_{d-1}}, 
\bH^{V'})$. Secondly, $h$ sends $e_k$ to some $v\in V\setminus V'$, and $g$ sends 
$\bG^{e_d}$ to $\bH^{v}$. This shows that 
$$
\Iso(\bG^{E_d}, \bH^V)=\bigcup_{V'\leq V,\dim(V')=d-1}\bigcup_{v\in V\setminus V'} 
\big[[\Iso(\bG^{E_{d-1}}, \bH^{V'})](e_d\to v)\big](\bG^{e_d} \to \bH^{v}).
$$
To compute $\big[[\Iso(\bG^{E_{d-1}}, \bH^{V'})](e_d\to v)\big](\bG^{e_d} \to 
\bH^{v})$, $\Iso(\bG^{E_{d-1}}, \bH^{V'})$ can be read from the table.
$[\Iso(\bG^{E_{d-1}}, \bH^{V'})](e_d\to v)$ is an instance of the pointwise 
transporter problem of $\GL(n, q)\times \GL(k, q)$ acting on $\F_q^m$, which can 
be solved in time $q^{O(m)}$. Finally 
$\big[[\Iso(\bG^{E_{d-1}}, \bH^{V'})](e_d\to v)\big](\bG^{e_d} \to 
\bH^{v})$ is an instance of the alternating matrix transporter problem, which can 
be solved, by Theorem~\ref{thm:alt_form}, in time $q^{\frac{1}{4}n^2+O(n)}$. Going 
over the two unions adds a multiplicative factor of $q^{2d}$, and then we 
apply Sims' method to reduce the generating set size to $q^{O(n)}$. Therefore for 
each cell the time complexity is $q^{2d}\cdot 
q^{\frac{1}{4}n^2+O(n+m)}=q^{\frac{1}{4}n^2+O(m+n)}$. Therefore the whole dynamic 
programming table can be filled in in time $q^{\frac{1}{4}m^2+O(m)}\cdot 
q^{\frac{1}{4}n^2+O(n+m)}=q^{\frac{1}{4}(n^2+m^2)+O(n+m)}$.
%
%
\end{proof}

\section{Discussions and future directions}\label{sec:discussion}

\subsection{Discussion on the prospect of worst-case time complexity of 
\AltMatSpIso}\label{subsec:discussion_algo}

While our main result is an average-case algorithm, we believe that the ideas 
therein suggest that an algorithm for \AltMatSpIso in time 
$q^{O(n^{2-\epsilon})}$
may be within reach. 

For this, we briefly recall some fragments of 
the history of \GrI, with a focus on the worst-case time complexity aspect. 
Two (families of) algorithmic ideas have been most responsible for the worst-case 
time 
complexity improvements for \GrI. The first idea, which we call the combinatorial  
idea, is to use certain combinatorial techniques including individualisation, 
vertex or edge 
refinement, and more generally the Weisfeiler-Leman refinement \cite{WL68}. The 
second idea, which we call the group theoretic idea, is to reduce \GrI to certain 
problems in permutation group 
algorithms, and then settle those problems using group theoretic techniques 
and structures. A major breakthrough utilising the group theoretic idea is the 
polynomial-time algorithm for graphs with bounded degree by 
Luks \cite{Luk82}. 
%

Some combinatorial techniques have been implemented and used in 
practice \cite{MP14}, though the worst-case analysis usually does not favour such 
algorithms 
(see e.g. \cite{CFI92}). On the other hand, while group theoretic algorithms for 
\GrI more than often come 
with a rigorous analysis, such algorithms usually only work with a restricted 
family of graphs (see e.g. \cite{Luk82}). The
major improvements on the worst-case time complexity of \GrI almost always rely on 
both ideas. The recent breakthrough, a quasipolynomial-time 
algorithm for \GrI by Babai \cite{Bab16,Bab17}, is a clear evidence. Even the 
previous record, a $2^{\tilde{O}(\sqrt{n})}$-time algorithm by Babai and Luks 
\cite{BL83}, relies on both Luks' group theoretic framework \cite{Luk82} and 
Zemlyachenko's combinatorial partitioning lemma \cite{ZKT85}. 

Let us return to \AltMatSpIso. It is clear that \AltMatSpIso can be studied in the 
context of matrix groups over finite fields. Computing with finite matrix groups 
though, 
turns out to be much more difficult than working with permutation groups. The 
basic 
constructive membership testing task subsumes the discrete log problem, and even 
with a number-theoretic oracle, a randomised polynomial-time algorithm for 
constructive membership testing was only 
recently obtained by Babai, Beals and 
Seress \cite{BBS09} for odd $q$. However, if a $q^{O(n+m)}$-time 
algorithm for $\AltMatSpIso$ is the main concern, then we can view 
$\GL(n, q)$ acting on the domain $\F_q^n$ of size $q^n$, so basic tasks like 
constructive membership testing are not a bottleneck. In addition, a 
group theoretic framework for matrix groups in vein of the corresponding 
permutation group results in \cite{Luk82} has also been developed by Luks 
\cite{Luk92}. Therefore, if we aim at a $q^{O(n+m)}$-time 
algorithm for \AltMatSpIso, the group theoretic aspect is 
relatively developed.

Despite all the results on the group theoretic aspect, as described in 
Section~\ref{subsec:bg}, a $q^{O(n+m)}$-time algorithm for 
\AltMatSpIso has been widely regarded to be very difficult, as such an 
algorithm would imply 
an algorithm that tests isomorphism of $p$-groups of class $2$ and 
exponent $p$ in time polynomial in the group order. Reflecting back on how 
the time complexity of \GrI has been improved, we realised that the other major 
idea, namely the combinatorial refinement idea, seemed missing in the context of 
\AltMatSpIso. By adapting the individualisation technique, developing an 
alternative route to the refinement step as used 
in \cite{BES80}, and demonstrating its usefulness in 
the linear algebraic 
Erd\H{o}s-R\'enyi model, we believe that this opens the door to systematically 
examine and adapt such 
combinatorial refinement techniques for \GrI to improve the worst-case time 
complexity of \AltMatSpIso. We mention one possibility 
here. In \cite{Qia17}, a notion of degree for alternating matrix spaces will be 
introduced, and 
it will be interesting to combine that degree notion with Luks' group theoretic 
framework for matrix groups \cite{Luk92} to see whether one can obtain a 
$q^{O(n+m)}$-time algorithm to test isometry of alternating matrix 
spaces with bounded degrees. If this is feasible, then one can try to develop a 
version of Zemlyachenko's combinatorial partition lemma for \AltMatSpIso in the 
hope to obtain a moderately exponential-time algorithm (e.g. in time 
$q^{O(n^{2-\epsilon})}$) for \AltMatSpIso.
%


\subsection{Discussion on the linear algebraic Erd\H{o}s-R\'enyi 
model}\label{subsec:discuss_er}

As far as we are aware, the linear algebraic Erd\H{o}s-R\'enyi model 
(Model~\ref{model:linER}) has not been discussed in the literature. We believe 
that 
this model may lead to some interesting mathematics. In this section we put some 
general remarks on this model. We will consider $\LinER(n, m, q)$, or the 
corresponding bipartite version of $\LinER$, $\BipLinER(n\times n, m, q)$ as 
defined in Section~\ref{subsec:random_bip}.

To start with, it seems to us reasonable to consider 
an event $E$ as happening with high probability only when $\Pr[E]\geq 
1-1/q^{\Omega(n)}$. To illustrate the reason, consider 
$\BipLinER(n\times n, m, q)$ with the following property $E(n, m, q)$. For a 
dimension-$m$ 
$\cB\leq M(n\times n, q)$, $\cB$ satisfies $E(n, m, q)$ if and only if for every 
$U\leq 
\F_q^n$, $\dim(\cB(U))\geq \dim(U)$. This corresponds to the concept of 
semi-stable as in the geometric invariant theory; compare with the stable concept 
as described in Section~\ref{subsec:outline}. One can think of 
$\cB$ being semi-stable as having a perfect matching \cite{GGOW16,IQS16,IQS17}. 
When $m=1$, $\cB=\langle B\rangle$ 
is semi-stable if and only if $B$ is invertible, so $1-\frac{1}{q}\geq \Pr[E(n, 1, 
q)]\geq 
1-\frac{1}{q-1}$. On the other hand when $m=4$, since stable implies semi-stable, 
from Section~\ref{subsec:outline} we have $\Pr[E(n, 4, q)]\geq 
1-\frac{1}{q^{\Omega(n)}}$. So though $E(n, 1, q)$ happens with some nontrivial 
probability, it seems not fair to consider $E(n, 1, q)$ happens with high 
probability, while $E(n, 4, q)$ should be thought of as happening with high 
probability. 

The above example suggests that the phenomenon in the linear 
algebraic Erd\H{o}s-R\'enyi model can be different from its classical 
correspondence. Recall that in the classical Erd\H{o}s-R\'enyi 
model, an important discovery is that most properties $E$ have a threshold $m_E$. 
That is, when the edge number $m$ is slightly less than $m_E$, then $E$ almost 
surely does not happen. On the other hand, if $m$ surpasses $m_E$ slightly then 
$E$ almost surely happens. $m_E$ is usually a nonconstant function of the vertex 
number, as few interesting things can happen when we have only a constant number 
of edges. However, the above example about the semi-stable property suggests that, 
if there is a threshold for this property, then this threshold has to be between 
$1$ and $4$, as we have seen the transition from $1-1/q^{O(1)}$ to 
$1-1/q^{\Omega(n)}$ when $m$ goes from $1$ to $4$. This is not surprising though, 
as one ``edge'' in the linear setting is one matrix, which seems much more 
powerful 
than an edge in a graph. It should be possible to 
pin down the exact threshold for the semi-stable property, and we conjecture 
that the transition (from $1-1/q^{O(1)}$ to $1-1/q^{\Omega(n)}$) happens from $2$ 
to $3$ as this is where the transition from tame to wild as in the representation 
theory \cite[Chapter 4.4]{Ben98} happens for the representations of the Kronecker 
quivers. This hints on one research direction on \LinER, that is, to determine 
whether the threshold phenomenon happens with monotone properties. 

The research on \LinER has to depend on whether there are enough interesting 
properties of matrix spaces. We mention two properties that originate from existing
literature; more properties can be found in the forthcoming paper \cite{Qia17}. 
Let 
$\cG$ be an 
$m$-alternating space in $\Lambda(n, q)$. For $U\leq \F_q^n$ of dimension $d$ with 
an ordered basis $(v_1, \dots, v_d)$, the restriction of $\cG$ to $U$ is defined 
as $\{ [v_1, \dots, v_d]^t G [v_1, \dots, v_d] : G \in \cG\}$ which is an 
alternating space in $\Lambda(d, q)$. The first property is 
the following. Let 
$s(\cG)$ be 
the smallest number for the existence of a dimension-$s$ subspace $U$ such that 
the restriction of $\cG$ to $U$ is of dimension $m$. 
This notion is one key to the upper bound on the number of 
$p$-groups \cite{Sim65,BNV07}. It is interesting to study the asymptotic behavior 
of $s(\cG)$.
The second property is the following. Call $U\leq \F_q^n$ an independent 
subspace, if the restriction of $\cG$ to $U$ is the zero space. We can define the 
independent number of $\cG$ accordingly. This mimics the 
independent sets for graphs, and seems to relate to the independent number concept 
for non-commutative graphs which are used to model quantum channels \cite{DSW13}. 
Again, it is interesting to study the asymptotic behavior of the independent 
number.

Finally, as suggested in \cite{DSW13} (where they consider Hermitian matrix spaces 
over $\C$), 
the model may be studied over infinite 
fields, where 
we replace ``with high probability'' with ``generic'' as in the algebraic 
geometry sense.

\subsection{Discussion on enumerating of 
$p$-groups of class $2$ and exponent $p$}\label{subsec:lb}

In this section we observe that Corollary~\ref{cor:main} can be used to slightly 
improve the upper bound on the number of $p$-groups of class $2$ and exponent $p$, 
as in \cite[Theorem 19.3]{BNV07}. (The proof idea there was essentially based on 
Higman's bound on the number of $p$-groups of Frattini class $2$ \cite{Hig60}.) We 
will outline the basic idea, and then focus on discussing how random graph 
theoretic ideas may be used to further improve on enumerating $p$-groups of 
Frattini class $2$. 

\subsubsection{From Corollary~\ref{cor:main} to enumerating $p$-groups of class 
$2$ and exponent $p$}

\begin{theorem}[\cite{Hig60,BNV07}]
The number of $p$-groups of class $2$ and exponent $p$ of order $p^\ell$ is upper 
bounded by $\ell\cdot p^{\frac{2}{27}\ell^3-\frac{2}{9}\ell^2+\frac{49}{72}\ell}$.
\end{theorem}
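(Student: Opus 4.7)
The plan is to carry out the classical Higman-style counting argument of \cite{Hig60,BNV07} directly, without invoking Corollary~\ref{cor:main}; the use of Corollary~\ref{cor:main} to sharpen the bound is taken up in the subsequent discussion. The case $p=2$ is trivial since class-$2$ exponent-$2$ groups are abelian, so I focus on odd $p$ throughout.

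First I would apply Baer's correspondence (as recalled in Appendix~\ref{app:eq}): every $p$-group $G$ of class~$2$ and exponent~$p$ of order $p^\ell$ yields parameters $n = \dim_{\F_p}(G/G')$ and $m = \dim_{\F_p}(G')$ with $n+m=\ell$ and $m \leq \binom{n}{2}$, and its isomorphism class is encoded by the $\GL(n,p)$-orbit of an $m$-dimensional subspace of $\Lambda(n,p)$ (the image of the dual of the commutator form). Bounding the number of orbits trivially by the number of subspaces gives
\[
\#\{\text{such groups of order }p^\ell\} \leq \sum_{n+m=\ell}\gbinom{\binom{n}{2}}{m}{p}\leq \ell\cdot\max_{n+m=\ell}\gbinom{\binom{n}{2}}{m}{p}.
\]

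Next I would estimate the Gaussian binomial uniformly in $p\geq 2$. Bounding $\gbinom{N}{m}{p}=\prod_{j=1}^{m}(p^{N-m+j}-1)/(p^{j}-1)$ factor-by-factor and using $p/(p-1)\leq p$ for $p\geq 2$ produces
\[
\gbinom{N}{m}{p}\leq \bigl(p/(p-1)\bigr)^{m}\cdot p^{m(N-m)}\leq p^{m(N-m+1)}.
\]
With $N=\binom{\ell-m}{2}$, this reduces the task to maximising $g(m):=m\bigl(\binom{\ell-m}{2}-m+1\bigr)$ over integers $1\leq m\leq \ell-1$. Writing $g(m)=f(m)+m$ with $f(m):=m\bigl(\binom{\ell-m}{2}-m\bigr)$, elementary calculus gives $f(\ell/3)=\tfrac{2}{27}\ell^{3}-\tfrac{2}{9}\ell^{2}$, $f'(\ell/3)=-\tfrac{5}{6}\ell$, and $f''(\ell/3)=-(\ell+1)$; a second-order Taylor expansion of $f$ around $\ell/3$ places the continuous optimum of $g$ near $m^{\ast}=\ell/3-5/6$, with $f(m^{\ast})=\tfrac{2}{27}\ell^{3}-\tfrac{2}{9}\ell^{2}+\tfrac{25}{72}\ell+O(1)$. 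Adding $m^{\ast}=\tfrac{24}{72}\ell+O(1)$ gives $g(m^{\ast})=\tfrac{2}{27}\ell^{3}-\tfrac{2}{9}\ell^{2}+\tfrac{49}{72}\ell+O(1)$; integer rounding of $m^{\ast}$ costs at most $O(1)$ in the exponent. Multiplying by the outer factor $\ell$ from the reduction step then delivers the claimed bound.

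No deep obstruction arises; the argument is elementary calculus plus careful bookkeeping. The conceptual punchline is the additive split $\tfrac{49}{72}=\tfrac{25}{72}+\tfrac{24}{72}$: the $\tfrac{25}{72}\ell$ is the quadratic Taylor correction of $m(\binom{n}{2}-m)$ at its continuous optimum $m\approx\ell/3-5/6$, while $\tfrac{24}{72}=\tfrac{1}{3}$ arises from the slack in the uniform-in-$p$ Gaussian-binomial estimate, which is saturated precisely at $p=2$. The hardest (and still routine) part will be tracking the residual $O(1)$ corrections cleanly -- integer rounding, the $-n/2$ correction of $\binom{n}{2}$ versus $n^{2}/2$, and the $\log_p\prod_{j\geq 1}(1-p^{-j})^{-1}$ contribution -- so that the linear coefficient $\tfrac{49}{72}$ emerges without being obscured by hidden constants.
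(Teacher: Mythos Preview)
The paper does not actually prove this cited theorem; it only recalls the Baer--Higman reduction (Theorem~\ref{thm:criterion} and Observation~\ref{obs:dual}) to counting $\GL(n,p)$-orbits on dimension-$m$ subspaces of $\Lambda(n,p)$, and then moves on to explain how Corollary~\ref{cor:main} can sharpen the linear coefficient. Your proposal is a faithful reconstruction of the classical argument from \cite{Hig60,BNV07}: bound the orbit count trivially by the Gaussian binomial, estimate $\gbinom{N}{m}{p}\leq p^{m(N-m+1)}$, and optimise the exponent over $m$. The Taylor analysis around $m=\ell/3$ and the decomposition $\tfrac{49}{72}=\tfrac{25}{72}+\tfrac{24}{72}$ are both correct.

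Regarding your stated worry about the $O(1)$ constant: it in fact works in your favour. Carrying the expansion one order further (using $g'''=3$ and $\delta:=m^{\ast}-\ell/3=-\tfrac{5}{6}+\tfrac{23}{8\ell}+O(\ell^{-2})$), one finds
\[
g(m^{\ast})=\tfrac{2}{27}\ell^{3}-\tfrac{2}{9}\ell^{2}+\tfrac{49}{72}\ell-\tfrac{635}{432}+O(\ell^{-1}),
\]
so the constant term is strictly negative and the continuous maximum already sits below the target for all sufficiently large $\ell$; small $\ell$ are handled by direct inspection. Thus no hidden constants obscure the $\tfrac{49}{72}$, and the exact bound as stated follows.
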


We will use Corollary~\ref{cor:main} to show that, $\frac{49}{72}$, the 
coefficient of the linear term on the exponent, can be decreased. For this we 
recall the proof idea as in \cite{Hig60,BNV07}.

Recall that if a $p$-group $G$ is 
of class $2$ and exponent $p$, then its commutator subgroup and commutator 
quotient can be identified as vector spaces over $\F_p$.  We say that a class-$2$ 
and 
exponent-$p$ $p$-group $G$ is of parameter $(n, m)$, if $\dim(G/[G,G])=n$, and 
$\dim([G,G])=m$. By using relatively free $p$-groups 
of class $2$ and exponent $p$ as how Higman used relatively free $p$-groups of 
Frattini class $2$ \cite{Hig60}, we have the following 
result in vein of \cite[Theorem 2.2]{Hig60}. 
\begin{theorem}\label{thm:criterion}
The number of $p$-groups of $\Phi$-class $2$ and parameter $(n, m)$ is equal to 
the number of orbits of the natural $\GL(n, p)$ action on the codimension-$m$ 
subspaces of 
$\Lambda(n, p)$.
\end{theorem}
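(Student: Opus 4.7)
My strategy is to identify every $p$-group of class $2$ and exponent $p$ with parameter $(n,m)$ as a quotient of a single ``universal'' object, and then translate isomorphism of groups into the natural linear action on subspaces. To this end I would first introduce the relatively free $p$-group $F=F(n,p)$ of class $2$ and exponent $p$ on generators $x_1,\dots,x_n$. Standard commutator calculus shows that $F/[F,F]\cong\F_p^n$ is elementary abelian with basis the images of the $x_i$, that $[F,F]$ is central and elementary abelian, and that the commutator map $F/[F,F]\times F/[F,F]\to [F,F]$ is a nondegenerate alternating bilinear form which identifies $[F,F]$ with the exterior square $\Lambda^2(\F_p^n)$, hence with $\Lambda(n,p)$ once the basis above is fixed.

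Second, I would set up the correspondence at the level of objects. Any $p$-group $G$ of class $2$ and exponent $p$ with $\dim(G/[G,G])=n$ is a quotient $F/K$ with $K\leq [F,F]$: the universal property of $F$ produces a surjection $F\surmap G$, and the condition on $G/[G,G]$ forces the kernel $K$ to lie in $[F,F]$. The requirement $\dim([G,G])=m$ then forces $K$ to have codimension $m$ in $[F,F]\cong \Lambda(n,p)$. Conversely, every codimension-$m$ subspace $W\leq \Lambda(n,p)$ yields such a $G=F/W$.

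Third -- the key step -- I would show that $F/K_1\cong F/K_2$ as groups if and only if some $\alpha\in\Aut(F)$ satisfies $\alpha(K_1)=K_2$. The ``if'' direction is immediate. For the converse, given an isomorphism $\phi:F/K_1\to F/K_2$, pick lifts $y_i\in F$ of $\phi(x_iK_1)$ and use the universal property of $F$ to define $\tilde\phi\in \mathrm{End}(F)$ by $x_i\mapsto y_i$; then $\tilde\phi$ descends to $\phi$ and sends $K_1$ into $K_2$. Because $K_2\leq [F,F]\leq \Phi(F)$ and $\phi$ is surjective, the $y_i$ generate $F$ modulo $\Phi(F)$, so Burnside's basis theorem and finiteness of $F$ make $\tilde\phi$ an automorphism. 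Applying the same construction to $\phi^{-1}$ upgrades the inclusion $\tilde\phi(K_1)\leq K_2$ to an equality.

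Finally, I would descend the action to $\GL(n,p)$. Restriction to $F/[F,F]\cong\F_p^n$ gives a homomorphism $\rho:\Aut(F)\to \GL(n,p)$, which is surjective because $F$ is relatively free (any $A\in\GL(n,p)$ lifts to an automorphism by sending $x_i$ to a preimage of $A(\bar x_i)$). Under the identification $[F,F]\cong\Lambda^2(F/[F,F])$ via the commutator form, the action of $\alpha\in\Aut(F)$ on $[F,F]$ equals $\rho(\alpha)$ acting naturally on $\Lambda^2(\F_p^n)=\Lambda(n,p)$. Hence the $\Aut(F)$-orbits on codimension-$m$ subspaces of $[F,F]$ coincide with the $\GL(n,p)$-orbits on codimension-$m$ subspaces of $\Lambda(n,p)$, yielding the claim. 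I expect the main obstacle to be the lifting in the third step: one must show that the endomorphism produced from $\phi$ is actually surjective, which is precisely where the elementary-abelian structure of $\Phi(F)=[F,F]$ and Burnside's basis theorem do the real work.
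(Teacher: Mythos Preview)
Your proposal is correct and follows precisely the route the paper indicates: the paper does not give a detailed proof but simply points to Higman's method, writing ``By using relatively free $p$-groups of class $2$ and exponent $p$ as how Higman used relatively free $p$-groups of Frattini class $2$ \cite{Hig60}, we have the following result in vein of \cite[Theorem 2.2]{Hig60}.'' Your write-up is exactly that argument spelled out --- the relatively free object, quotienting by central subspaces, lifting isomorphisms to automorphisms via Burnside's basis theorem, and descending the $\Aut(F)$-action on $[F,F]$ to the natural $\GL(n,p)$-action on $\Lambda(n,p)$ --- so there is nothing to compare.
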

Note that \cite[Theorem 2.2]{Hig60} needs $p$ to be odd, due to the complication 
caused by the Frattini class $2$ condition. 

We then need to translate the 
codimension-$m$ condition in 
Theorem~\ref{thm:criterion} to a dimension-$m$ condition.
\begin{observation}\label{obs:dual}
The number of orbits of the $\GL(n, p)$ action on the codimension-$m$ subspaces of 
$\Lambda(n, p)$ is equal to the number of orbits of this action on the 
dimension-$m$ 
subspaces of $\Lambda(n, p)$.
\end{observation}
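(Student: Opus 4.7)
The plan is to construct an explicit bijection via orthogonal complement with respect to a suitable bilinear form on $\Lambda(n,p)$. Specifically, I would equip $\Lambda(n,p)$ with the form $\langle A,B\rangle := \Tr(AB)$. A direct computation on the standard basis $\{E_{ij}-E_{ji}:1\le i<j\le n\}$ gives $\langle E_{ij}-E_{ji},\, E_{kl}-E_{lk}\rangle = -2\,\delta_{ik}\delta_{jl}$ when $i<j$ and $k<l$, so the Gram matrix is $-2 I$, which is invertible for all odd $p$. This is exactly the case of interest, since $p$-groups of class $2$ and exponent $p$ are nontrivial only when $p$ is odd. Hence $\cW\mapsto \cW^\perp$ is a dimension-reversing bijection between dimension-$m$ and codimension-$m$ subspaces of $\Lambda(n,p)$.

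The key identity that makes this bijection compatible with the $\GL(n,p)$-action is the twisted invariance
\[
\langle \trans{X} A X,\; X^{-1} B X^{-\mathrm{t}}\rangle \;=\; \Tr\bigl(\trans{X} A X X^{-1} B X^{-\mathrm{t}}\bigr) \;=\; \Tr(AB) \;=\; \langle A,B\rangle,
\]
valid for every $X\in\GL(n,p)$. Writing the natural action on subspaces as $\rho(X)(\cW) = \trans{X}\cW X$, this implies
$(\rho(X)\cW)^\perp = X^{-1}\cW^\perp X^{-\mathrm{t}} = \rho(X^{-\mathrm{t}})(\cW^\perp)$, so the map $\phi:\cW\mapsto \cW^\perp$ intertwines $\rho$ with $\rho\circ\tau$, where $\tau(X) := X^{-\mathrm{t}}$.

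The remaining ingredient is that $\tau$ is a group automorphism of $\GL(n,p)$, which follows from $(XY)^{-\mathrm{t}} = X^{-\mathrm{t}} Y^{-\mathrm{t}}$ (the order reversals in inverse and transpose cancel). Consequently the $\rho$-orbits and the $(\rho\circ\tau)$-orbits on subspaces coincide set-theoretically, and $\phi$ descends to the claimed bijection between $\GL(n,p)$-orbits on codimension-$m$ subspaces and $\GL(n,p)$-orbits on dimension-$m$ subspaces. The main subtlety is verifying that $\tau$ is a genuine automorphism rather than an anti-automorphism, which is precisely what allows orbits to be matched directly; beyond this, no serious obstacle arises.
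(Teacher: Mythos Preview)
Your proposal is correct and follows essentially the same orthogonal-complement argument as the paper, which uses the form $P(A,B)=\Tr(\trans{A}B)$ (differing from your $\Tr(AB)$ only by a sign on alternating matrices, hence defining the same $\perp$). Your write-up is in fact more complete than the paper's: the paper asserts the bijection between dimension-$m$ and codimension-$m$ subspaces and the resulting orbit correspondence without spelling out the nondegeneracy check (your $-2I$ Gram computation, and the attendant restriction to odd $p$) or the role of $X\mapsto X^{-\mathrm{t}}$ being a genuine automorphism, both of which you make explicit.
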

\begin{proof}
Define the standard bilinear form $P$ on $\Lambda(n, p)$ by $P(A, 
B)=Tr(\trans{A}B)$, which gives a bijective map between dimension-$m$ 
subspaces and 
codimension-$m$ subspaces of $\Lambda(n, p)$. It remains to verify that it yields 
a 
bijection between the $\GL(n, p)$ orbits as well. For this we check the following. 
Suppose $P( A,  
B)=0$. For any $X\in\GL(n, p)$, $P(\trans{X}AX, 
X^{-1}BX^{-\mathrm{t}})=
Tr(\trans{X}\trans{A}XX^{-1}BX^{-\mathrm{t}})
=Tr(\trans{A}B)=P(
 A, B)=0$. This implies that a dimension-$m$ subspace $\cB$ and a 
 codimension-$m$ subspace $\cC$ are orthogonal to each other w.r.t. 
 $P$, if and only if $\cB^X$ and $\cC^{X^{-\mathrm{t}}}$ are orthogonal 
 to each other 
 w.r.t. $P$. This concludes the proof.
\end{proof}

Therefore we reduce to study the number of orbits of $\GL(n, p)$ on dimension-$m$ 
subspaces in $\Lambda(n, 
p)$. Recall that $\dim(\Lambda(n, p))=n(n-1)/2$. 

Suppose we want to upper bound the number of $p$-groups of order $p^\ell$ of 
parameter $(n, m)$ when $m=cn$ for some constant $c$ ($\ell=m+n$). The 
number of dimension-$m$ subspaces of $\Lambda(n, p)$ is 
$\gbinom{\binom{n}{2}}{m}{p}$. Therefore a trivial upper bound is just to assume 
that every orbit is as small as possible, which gives 
$\gbinom{\binom{n}{2}}{m}{p}$ as the upper bound. Now that we have 
Corollary~\ref{cor:main}, by the orbit-stabilizer theorem, we know 
$(1-\frac{1}{p^{\Omega(n)}})$ fraction of the 
subspaces lie in an orbit of size $p^{n^2-O(n)}$. On the other hand, for a 
$\frac{1}{p^{\Omega(n)}}$ fraction of the subspaces, we have no control, so we 
simply assume that there each orbit is of size $1$. Summing over the two parts, we 
obtain an upper bound  
$$\frac{\gbinom{\binom{n}{2}}{m}{p}}{p^{n^2-O(n)}}+
\frac{\gbinom{\binom{n}{2}}{m}{p}}{p^{\Omega(n)}}$$
on the number of such $p$-groups. Note that the first summand will be dominated by 
the 
second one when $n$ is large enough. Plugging this into Higman's 
argument, we can show that the coefficient of the 
linear term is smaller than $\frac{49}{72}$. This idea can be generalised to 
deal with $p$-groups of Frattini class $2$ without much difficulty.

\subsubsection{Discussion on further improvements}

Our improvement on the upper bound of the number of $p$-groups of class $2$ and 
exponent $p$ is very modest. But this opens up the possibility of transferring 
random graph theoretic ideas to study enumerating such $p$-groups. In particular, 
this suggests the similarity between the number of unlabelled graphs with $n$ 
vertices and $m$ edges, and the 
number of $p$-groups of class $2$ and exponent $p$ of parameter $(n, m)$. 

A celebrated result from 
random grapth theory suggests that the number of unlabelled graphs with $n$ 
vertices and $m$ edges is $\sim \binom{\binom{n}{2}}{m}/n!$ \cite{Wri71} 
(see also \cite[Chapter 9.1]{Bol01}) when $cn\log n\leq m\leq \binom{n}{2}-cn\log 
n$. Note that this result implies, and is considerably stronger than, that most 
graphs have the trivial automorphism group. It is then tempting to explore whether 
the idea in \cite{Wri71} can be adapted to show that when $m=cn$, the number 
of $p$-groups of class $2$ and exponent $p$ of parameter $(n, m)$ is $\sim 
\binom{\binom{n}{2}}{m}/|\mathrm{PSL}(n, p)|$. If this was true, then it would 
imply that the number of $p$-groups of class $2$ and exponent $p$ of order 
$p^\ell$ is upper bounded by $p^{\frac{2}{27}\ell^3-\frac{2}{3}\ell^2+O(\ell)}$. 
This would then match the coefficient of the quadratic term on the exponent of the 
lower bound, answering \cite[Question 22.8]{BNV07} in the case of such groups. 
Further implications to $p$-groups of Frattini class $2$ should follow as well.

\appendix

\section{\AltMatSpIso and $p$-group isomorphism testing}\label{app:eq} 

The content in this appendix is classical by \cite{Bae38,War76}. See also 
\cite{Wil09} and \cite{GQ14}.

Suppose we are given two $p$-groups of class $2$ and exponent $p$, $G_1$ and $G_2$ 
of order $p^\ell$. For $G_i$, let $b_i:G_i/[G_i,G_i]\times G_i/[G_i,G_i]\to 
[G_i,G_i]$ be the 
commutator map where $[G_i,G_i]$ denotes the commutator subgroup. By the class $2$ 
and exponent $p$ assumption, $G_i/[G_i,G_i]$ are elementary abelian groups of 
exponent $p$. For $G_1$ and $G_2$ to be isomorphic it is necessary that 
$[G_1,G_1]\cong [G_2,G_2]\cong \Z_p^m$ and $G_1/[G_1,G_1]\cong G_2/[G_2,G_2]\cong 
\Z_p^n$ such that $m+n=\ell$. Furthermore $b_i$'s are alternating 
bilinear maps. So we 
have alternating bilinear maps $b_i:\F_p^n\times \F_p^n\to\F_p^m$. $G_1$ and $G_2$ 
are isomorphic if and only if there exist $A\in\GL(n, p)$ and $D\in\GL(n, p)$ 
such that for every $u, v\in \F_p^n$, $b_1(A(u), A(v))=D(b_2(u, v))$. Representing 
$b_i$ as a tuple of alternating matrices $\bB_i=(B_1, \dots, B_m)\in \Lambda(n, 
p)^m$, it translates to ask whether $\trans{A}\bB_1A=\bB_2^D$. Letting $\cB_i$ be 
the linear span of $\bB_i$, this becomes an instance of \AltMatSpIso w.r.t. 
$\cB_1$ and $\cB_2$.

When $p>2$, we can reduce \AltMatSpIso to isomorphism testing of $p$-groups of 
class $2$ and exponent $p$ using the following construction. Starting from $\bG\in 
\Lambda(n, p)^m$ representing $\cG$, $\bG$ can be viewed as representing a bilinar 
map $b:\F_p^n\times 
\F_p^n\to \F_p^m$. Define a group $G$ with operation $\circ$ over the set 
$\F_p^m\times \F_p^n$ as
$
(v_1, u_1)\circ (v_2, u_2)=(v_1+v_2+\frac{1}{2}b(u_1,u_2), u_1+u_2).
$
It can be verified that $G$ is a $p$-group of class $2$ and exponent $p$, and it 
is known that two such groups $G_1$ and $G_2$ built from $\bG_1$ and $\bG_2$ are 
isomorphic if and only if $\cG_1$ and $\cG_2$ are isometric. 

When working with groups in the Cayley table model, and working with \AltMatSpIso 
in time $p^{O(m+n)}$, 
the above procedures can be performed efficiently. In \cite{BMW15} it is discussed 
which models of computing with finite groups admit the reduction from isomorphism 
testing of $p$-groups of class $2$ and exponent $p$ to the pseudo-isometry testing 
of alternating bilinear maps. In particular it is concluded there that the 
reduction works in the permutation group quotient model introduced in \cite{KL90}.

\paragraph{Acknowledgement.} We thank G\'abor Ivanyos and James B. 
Wilson for 
helpful discussions and useful information. Y. Q. was supported by ARC DECRA 
DE150100720 during this work.

\bibliographystyle{alpha}
\bibliography{references}

\end{document}